\documentclass[acmsmall,nonacm]{acmart}
\usepackage{amsthm,amsmath}
\usepackage{todonotes}
\usepackage{hyperref}
\usepackage{tikz}
\usetikzlibrary{shapes,fit}

\newcommand{\datarule}{{\,:\!\!-\,}} %

\newtheorem{theorem}{Theorem}
\newtheorem{proposition}[theorem]{Proposition}
\newtheorem{lemma}[theorem]{Lemma}
\newtheorem{corollary}[theorem]{Corollary}
\newtheorem{observation}[theorem]{Observation}
\newtheorem{conjecture}{Conjecture}
\newtheorem{claim}{Claim}

\theoremstyle{definition}
\newtheorem{definition}[theorem]{Definition}
\newtheorem{example}[theorem]{Example}

\newcommand{\qstar}{Q^\star}
\newcommand{\qstarp}{\overline{Q}^\star}
\newcommand{\Qsf}{Q^\text{sf}}

\newcommand{\softO}{\widetilde{O}}

\newcommand{\dom}{\mathrm{dom}}
\newcommand{\var}{\mathrm{var}}
\newcommand{\aut}{\mathrm{aut}}
\newcommand{\id}{\mathrm{id}}
\newcommand{\Nid}{\hom(A_{Q}, D, \mathfrak c)^{\id}}
\newcommand{\Naut}{\hom(A_{Q}, D, \mathfrak c)^{\aut}}

\newcommand{\eps}{\varepsilon}
\newcommand{\I}{\mathcal{I}}
\newcommand{\A}{\mathcal{A}}

\newcommand{\Fp}{\mathbb{F}_p}

\newcommand{\ignore}[1]{}

\newcommand{\pb}[1]{\textsf{#1}}

\begin{document}
	\title{Tight Fine-Grained Bounds for Direct Access on Join Queries}
	\thanks{\emph{Karl Bringmann:} This work is part of the project TIPEA that has received funding from the European Research Council (ERC) under the European Unions Horizon 2020 research and innovation programme (grant agreement No. 850979). \emph{Nofar Carmeli and Stefan Mengel:} This work has been funded by the French government under
		management of Agence Nationale de la Recherche as part of the
		“Investissements d’avenir” program, reference ANR-19-P3IA-0001
		(PRAIRIE 3IA Institute) and
		by the ANR project EQUUS ANR-19-CE48-0019.}
	\author{Karl Bringmann}
	\affiliation{%
		\institution{Saarland University and Max Planck Institute for Informatics}
		\department{Saarland Informatics Campus}
		\city{Saarbrücken}
		\country{Germany}}
	
	\author{Nofar Carmeli}
	\affiliation{%
		\institution{LIRMM, Inria, University of Montpellier, CNRS}
		\country{France}
	}
	
	\author{Stefan Mengel}
	\affiliation{%
		\institution{Univ. Artois, CNRS, Centre de Recherche en Informatique de Lens (CRIL)}
		\city{Lens}
		\country{France}}

	\begin{abstract}
		We consider the task of lexicographic direct access to query answers. That is, we want to simulate an array containing the answers of a join query sorted in a lexicographic order chosen by the user. A recent dichotomy showed for which queries and orders this task can be done in polylogarithmic access time after quasilinear preprocessing, but this dichotomy does not tell us how much time is required in the cases classified as hard. We determine the preprocessing time needed to achieve polylogarithmic access time for all join queries and all lexicographical orders. To this end, we propose a decomposition-based general algorithm for direct access on join queries. We then explore its optimality by proving lower bounds for the preprocessing time based on the hardness of a certain online Set-Disjointness problem, which shows that our algorithm's bounds are tight for all lexicographic orders on join queries. Then, we prove the hardness of Set-Disjointness based on the Zero-Clique Conjecture which is an established conjecture from fine-grained complexity theory.
		Interestingly, while proving our lower bound, we show that self-joins do not affect the complexity of direct access (up to logarithmic factors).
		Our algorithm can also be used to solve queries with projections and relaxed order requirements, though in these cases, its running time is not necessarily optimal.
		We also show that similar techniques to those used in our lower bounds can be used to prove that, for enumerating answers to Loomis-Whitney joins, it is not possible to significantly improve upon trivially computing all answers at preprocessing. This, in turn, gives further evidence (based on the Zero-Clique Conjecture) to the enumeration hardness of self-join free cyclic joins with respect to linear preprocessing and constant delay.
	\end{abstract}
	
	\keywords{join queries, fine-grained complexity, direct access, enumeration}
	
	\maketitle
	
	\section{Introduction}
	
	Algorithms for direct access allow to access answers to a query on a database essentially as if they are materialized and stored in an array: given an index $j\in \mathbb{N}$, the algorithm returns the $j$th answer (or an out-of-bounds error) in very little time. To make this possible, the algorithm first runs a preprocessing on the database that then allows answering arbitrary access queries efficiently.
	As the number of answers to a database query may be orders-of-magnitude larger than the size of the database, the goal is to avoid materializing all the answers during preprocessing and only simulate the array.
	
	The direct access task (previously also called $j$th answer and random access) was introduced by Bagan et al.~\cite{BaganDGO08}. They also mentioned that this task can be used for uniform sampling (by first counting and drawing a random index) or for enumerating all query answers (by consecutively accessing all indices). They devised an algorithm that runs in only linear preprocessing time and average constant time per access call for a large class of queries (first-order queries) on databases of bounded degree. Direct access algorithms were also considered for queries in monadic second order logic on databases of bounded treewidth~\cite{Bagan09}. To reason about general databases and still expect extremely efficient algorithms, another approach is to restrict the class of queries instead.
	Follow-up work gave linear preprocessing and logarithmic access algorithms for subclasses of conjunctive queries over general databases~\cite{bb:thesis, CarmeliRandom, Keppeler2020Answering}. There, it was also explained how direct access can be used to sample without repetitions~\cite{CarmeliRandom}.
	
	Though all of the algorithms we mentioned above simulate storing the answers in a lexicographic order (whether they state it or not),
	one shortcoming they have in common is that the specific lexicographic order cannot be chosen by the user (but rather it depends on the query structure).
	Allowing the user to specify the order is desirable because then direct access can be used for additional tasks that are sensitive to the order, such as median finding and boxplot computation.
	Progress in this direction was recently made by Carmeli et al.~\cite{CarmeliTGKR20} who identified which combinations of a conjunctive query and a lexicographic order can be accessed with linear preprocessing time and logarithmic access time. They identified an easy-to-check substructure of the query, called \emph{disruptive trios}, whose (non-)existence distinguishes the tractable cases (w.r.t.~the time guarantees we mentioned) from the intractable ones.
	In particular, 
	if we consider acyclic join queries, they suggested an algorithm that works for any lexicographic order that does not have disruptive trios with respect to the query. If the join query is also self-join free, they proved conditional lower bounds stating that for all other variable orders, direct access with polylogarithmic direct access requires superlinear time preprocessing.
	These hardness results assume the hardness of Boolean matrix multiplication.
	Given the known hardness of self-join free cyclic joins, if we also assume the hardness of hyperclique detection in hypergraphs, this gives a dichotomy for all self-join free join queries and lexicographic orders: they are tractable if and only if the query is acyclic with no disruptive trio with respect to the order.
	
	What happens if the query and order we want to compute happen to fall on the intractable side of the dichotomy? Also, what is the role of self-joins in the complexity of direct access to join queries? These questions were left open by previous work, and we aim to understand how much preprocessing is needed to achieve polylogarithmic access time for each combination of query, potentially containing self-joins, and order.
	To this end, we introduce \emph{disruption-free decompositions} of a query with respect to a variable order. These can be seen as hypertree decompositions of the queries, induced by the desired variable orders, that resolve incompatibilities between the order and the query. Practically, these decompositions specify which relations should be joined at preprocessing in order to achieve an equivalent acyclic join query with no disruptive trios. We can then run the known direct access algorithm~\cite{CarmeliTGKR20} with linear time preprocessing on the result  of this preprocessing to get an algorithm for the query and order at hand with logarithmic access time.
	The cost of our preprocesing phase is therefore dominated by the time it takes to join the selected relations.
	We define the \emph{incompatibility number} of a query and order and show that the preprocessing time of our solution is polynomial where the exponent is this number. Intuitively, the incompatibility number is $1$ when the query is acyclic and the order is compatible, and this number grows as the incompatibility between the query and order grows.
	
	Next, we aspire to know whether our solution can be improved. We show that, somewhat surprisingly, self-joins have no influence whatsoever on the complexity of direct access to join queries: if we change some relation symbols in a join query, this leaves the preprocessing and access times unchanged up to polylogarithmic factors. Note that this is completely different than the situation for related query evaluation tasks, in particular enumeration, where self-joins are known to change the complexity~\cite{berkholz2020tutorial,CarmeliS22}. The situation for direct access is instead similar to that for counting, where it is known that the complexity depends only on the underlying hypergraph, and thus self-joins play no role in the complexity analysis~\cite{DalmauJ04}. In fact, our results regarding the elimination of self-joins in the case of direct access are obtained by adapting similar proofs for counting complexity~\cite{DalmauJ04,ChenM15}.
	Having dealt with self-joins, for the rest of our proofs we can consider only self-join free queries.
	
	Though we can easily show that no other decomposition can be better than the specific decomposition we propose, we can still wonder whether a better algorithm can be achieved using an alternative technique. Thus, we set out to prove conditional lower bounds.
	Such lower bounds show hardness independently of a specific algorithmic technique, but instead they assume that some known problem cannot be solved significantly faster than by the state-of-the art algorithms.
	We show that the incompatibility number corresponds in a sense to the number of leaves of the largest star query that can be embedded in our given query. 
	We then prove lower bounds for queries that allow embedding stars through a reduction from online $k$-Set-Disjointness. In this problem, we are given during preprocessing $k$ sets of subsets of the universe, and then we need to answer queries that specify one subset from each set and ask whether these subsets are disjoint.
	On a technical level, in case the query is acyclic, the link of these hardness results with the existence of disruptive trios is that the latter correspond exactly to the possibility to embed a star with two leaves. 
	
	Using known hardness results for $2$-Set-Disjointness, our reduction shows that the acyclic hard cases in the known dichotomy need at least quadratic preprocessing, unless both the 3-SUM Conjecture and the APSP Conjecture fail. These are both central, well-established conjectures in fine-grained complexity theory, see e.g.~the survey~\cite{williams2018some}.
	To have tighter lower bounds for the case that the incompatibility number is not $2$, we show the hardness of $k$-Set-Disjointness through a reduction from the \emph{\pb{Zero-$k$-Clique} Conjecture}.
	This conjecture postulates that deciding whether a given edge-weighted $n$-node graph contains a $k$-clique of total edge weight 0 has no algorithm running in time $O(n^{k-\eps})$ for any $\eps > 0$.
	For $k=3$ this conjecture is implied by the 3SUM Conjecture and the APSP Conjecture, so it is very believable. For $k>3$ the \pb{Zero-$k$-Clique} Conjecture is a natural generalization of the case $k=3$, and it was recently used in several contexts~\cite{LincolnWW18,AbboudBDN18,BringmannGMW20,AbboudWW14,BackursDT16,BackursT17}.
	Assuming the \pb{Zero-$k$-Clique} Conjecture, we prove that the preprocessing time of our decomposition-based algorithm is (near-)optimal. 
	
	To conclude, our \textbf{main result} is as follows:
	a join query, potentially with self-joins, and an ordering of its variables with incompatibility number $\iota$ admit a lexicographic direct access algorithm with preprocessing time $O(|D|^\iota)$ and logarithmic access time.
	Moreover, if the Zero-$k$-Clique conjecture holds for all $k$, there is no lexicographic direct access algorithm for this query and order with preprocessing time $O(|D|^{\iota-\epsilon})$ and polylogarithmic access time for any $\varepsilon > 0$.
	
	We also consider several extensions of this main result: we show that if we require the answers to be ordered in any lexicographic order---but do not specify which one---then the best running time we can expect is determined by the fractional hypertree width of the query. However, if we make no restriction on the order of the answers (i.e., not requiring it to be lexicographic), then this complexity bound can be improved. We show this by considering the $4$-cycle query that has faster direct access than the one determined by its fractional hypertree width for lexicographic order. Finally, we extend our setting to partial lexicographic orders and queries with projections and show how our algorithm can handle these cases. However, we again show that this algorithm is not optimal for certain queries, leaving the identification of optimal runtime bounds open.
	
	As we develop our lower bound results, we notice that our techniques can also be used in the context of constant delay enumeration of query answers.
	We show that, assuming the Zero-$k$-Clique Conjecture, the preprocessing of any constant delay enumeration algorithm for the $k$-variable Loomis-Whitney join is roughly at least $\Omega(|D|^{1+1/(k-1)})$ which tightly matches the trivial algorithm in which the answers are materialized during preprocessing using a worst-case optimal join algorithm~\cite{NgoPRR12,NgoPRR18,Veldhuizen14}. From the lower bound for Loomis-Whitney joins, we then infer the hardness of other cyclic joins using a construction by Brault-Baron~\cite{bb:thesis}. Specifically, we conclude that the self-join free join queries that allow constant delay enumeration after linear processing are exactly the acyclic ones, unless the Zero-$k$-Clique Conjecture fails for some $k$. This dichotomy was already established based on the hardness of hyperclique detection in hypergraphs~\cite{bb:thesis, berkholz2020tutorial}; we here give more evidence for this dichotomy by showing it also holds under a different complexity assumption.
	
	A preliminary version of this manuscript appeared
	in a conference proceedings~\cite{confversion}. The current version contains full proofs omitted in the previous version, as well as two new sections: Section~\ref{sec:selfjoins} and Section~\ref{sec:relaxations}, discussing the implications of our main result for queries with self-joins, queries with projections, and relaxed order requirements.
	
	This manuscript is organized as follows. 
	Preliminary definitions and results are given in Section~\ref{sec:prelims}.
	Section~\ref{sec:algorithm} defines disruption-free decompositions and the incompatibility number and provides the direct-access algorithm.
	Section~\ref{sct:lowerdirect} proves the hardness of direct access for self-join free join queries assuming the hardness of set-disjointness, while Section~\ref{sec:lowersetdisjointness} proves the hardness of set-disjointness assuming the more established Zero-$k$-Clique Conjecture.
	Section~\ref{sec:selfjoins} proves the equivalence in direct-access complexity between queries with and without self-joins.
	Then, Section~\ref{sec:together} puts the previous sections together, and summarizes tight complexity bounds for join queries.
	Relaxed order requirements and queries with projections are discussed in Section~\ref{sec:relaxations}.
	Finally, Section~\ref{sec:enumeration} proves the enumeration lower bounds for cyclic queries, assuming the Zero-$k$-Clique Conjecture.

	\section{Preliminaries}\label{sec:prelims}
	
	\subsection{Databases and Queries}
	
	A \emph{join query} $Q$ is an expression of the form
	$Q(\vec{u}) \datarule R_1(\vec{v}_1), \ldots, R_\ell(\vec{v}_n)$,
	where each $R_i$ is a relation symbol, $\vec{v}_1, \ldots, \vec{v}_n$ are tuples of variables, and $\vec{u}$ is a tuple of all variables in $\bigcup_{i=1}^{n}\vec{v}_i$.
	Each $R_i(\vec{v}_i)$ is called an \emph{atom} of $Q$. We also denote by $\var(Q)$ the set $\vec{u}$ of all variables appearing in $Q$.
	A query is called \emph{self-join free} if no relation symbol appears in two different atoms.
	If the same relation symbol appears in two different atoms, $R_i=R_j$, then $\vec{v}_i$ and $\vec{v}_j$ must have the same arity.
	A \emph{conjunctive query} (or, for short, a \emph{query}) is defined as a join query, with the exception that $\vec{u}$ may contain any subset of the query variables. We say that the query variables that do not appear in $\vec{u}$ are \emph{projected}. Defined this way, join queries are simply conjunctive queries without projections.
	
	The input to a query $Q$ is a \emph{database} $D$ which assigns every relation symbol $R_i$ in the query with a relation $R_i^D$: a finite set of tuples of constants, each tuple having the same arity as $\vec{v}_i$.
	The \emph{domain} $\dom(D)$ of $D$ is the set of all constants appearing in tuples in the relations of $D$.
	The \emph{size} $|D|$ of $D$ is defined as the total number of tuples in all of its relations. (More generally, $|E|$ denotes the cardinality of a set $E$.)
	An \emph{answer} to a query $Q$ over a database $D$ is determined by a mapping $\mu$ from the variables of $Q$ to the constants of $D$ such that $\mu(\vec{v}_i)\in R_i^D$ for every atom $R_i(\vec{v}_i)$ in $Q$. The answer is then $\mu(\vec{u})$. The set of all answers to $Q$ over $D$ is denoted $Q(D)$.
	
	A \emph{lexicographic order} of a join query $Q$ is specified by a permutation $L$ of the query variables. We assume databases come with a linear  order on their constants. Then, $L$ defines a linear order over $Q(D)$: the order between two distinct answers is defined to be the order between their assignments to the first variable in $L$ on which their assignments differ. 
	
	\subsection{Query Answering Tasks}
	
	The problems we consider are typically defined by a query $Q$, and the input is a database $D$ for $Q$. We always work in the model of \emph{data complexity}, meaning the size of the query is considered constant. In particular, the $O$-notation may hide constant factors that depend on $Q$.
	We consider three types of tasks:
	\begin{itemize}
		\item \emph{Testing} is the task where, after a preprocessing phase, the user can specify a tuple $\vec{a}$ of constants, and one has to determine whether it is a query answer (i.e., whether $\vec{a}\in Q(D)$).
		\item \emph{Enumeration} is the task of listing all query answers in $Q(D)$. We measure the efficiency of an enumeration algorithm with two parameters: the time before the first answer, which we call \emph{preprocessing time}, and the time between successive answers, which we call \emph{delay}.
		\item \emph{Direct-access} in lexicographic order is a task defined by a query $Q$ and a permutation $L$ of its free variables. After a preprocessing phase, the user can specify an index $j$ and expects the $j$th answer in $Q(D)$ according to the lexicographic order corresponding to $L$ or an out-of-bounds error if there are less than $j$ answers. We call the time it takes to provide an answer given an index the \emph{access time}.
	\end{itemize}
	
	An \emph{exact reduction} from a query $Q_1$ to a query $Q_2$ is a mapping computable in linear
	time which associates to each database $D_1$ for $Q_1$ a database $D_2$ for $Q_2$ such that
	there is a bijection $\tau$ from $Q_2(D_2)$ to $Q_1(D_1)$, and $\tau$ can be computed in constant time.
	Consider two queries $Q_1$ and $Q_2$ with the same free variables. An exact reduction via a bijection $\tau$ from $Q_1$ to $Q_2$ is called \emph{lex-preserving} if, for every permutation $L$ of the free variables, and for every pair $(a_1,a_2)$ of answers in $Q_2(D_2)$, we have that $a_1 \prec a_2$ if and only if $\tau(a_1) \prec \tau(a_2)$, where $\prec$ represents the lexicographic order specified by $L$.
	Note that if there is a lex-preserving exact reduction from $Q_1$ to $Q_2$, and $Q_2$ has a direct access algorithm for some lexicographic order $L$ with preprocessing time in $\Omega(|D_2|)$, then $Q_1$ has a direct access algorithm for $L$ with the same preprocessing and access time complexities.
	
	\subsection{Hypergraphs}
	
	A \emph{hypergraph} $H=(V,E)$ consists of a finite set $V$ of vertices and a set $E$ of edges, i.e., subsets of $V$. Given a set $S\subseteq V$, the hypergraph $H[S]$ induced by $S$ is $(S, E_S)$ with $E_S = \{e \cap S\mid e \in E\}$. A super-hypergraph $H'$ of $H$ is a hypergraph $(V, E')$ such that $E\subseteq E'$. By $N_H(v)$ we denote the set of neighbors of a vertex $v$ in $H$, i.e., $N_H(v) := \{ u\in V\mid u\ne v, \exists e\in E: u,v\in e\}$. For $S\subseteq V$, we define $N_H(S) :=\bigcup_{v\in S}N_H(v) \setminus S$. In this notation we sometimes leave out the subscript $H$ when it is clear from the context.
	
	A \emph{walk} between vertices $s$ and $t$ in a hypergraph $H= (V,E)$ is a sequence $s=v_1, \ldots, v_\ell=t$ of vertices such that for every $i\in [\ell-1]$ we have that there is an edge $e\in E$ with $v_iv_{i+1}\in E$. Here we allow $s=t$ in which case the walk consists of the single vertex $v_1=s=t$. The set $V^s$ of vertices reachable from $s$ in $H$ is defined to consist of all vertices $t$ such that there is a walk between $s$ and $t$ in $H$. Note that $s$ itself is reachable from $s$, so in any case $s\in V^s$. The \emph{connected component} of $s$ is defined as $H[V^s]$.
	
	A hypergraph $H$ is called \emph{acyclic} if we can eliminate all of its vertices by iteratively applying the following two rules\footnote{This is called the GYO-algorithm, see e.g.~Algorithm~6.4.4 in the standard textbook~\cite{AbiteboulHV95}.}:
	\begin{itemize}
		\item if there is an edge $e\in E$ that is completely contained in another edge $e'\in E$, delete $e$ from $H$, and
		\item if there is a vertex $v\in V$ that is contained in a single edge $e\in E$, delete $v$ from $H$, i.e., delete $v$ from $V$ and from $e$.
	\end{itemize}
	An order in which the vertices can be eliminated in the above procedure is called an \emph{elimination order} for $H$.
	Note that it might be possible to apply the above rules on several vertices or edges at the same moment. In that case, the order in which they are applied does not change the final result of the process. 
	
	Given a hypergraph and a permutation of its vertices, a \emph{disruptive trio} consists of three vertices $v_1,v_2,v_3$ such that $v_3$ appears after $v_1$ and $v_2$, the vertices $v_1$ and $v_2$ are not neighbors, but $v_3$ is a neighbor to both $v_1$ and $v_2$.
	A permuation $L$ of the vertices of a hypergraph $H$ is the reverse of an elimination order for $H$ if and only if $H$ is acyclic and $L$ contains no disruptive trio with $H$~\cite{bb:thesis}.
	
	A \emph{fractional edge cover} of $H=(V,E)$ is defined as a mapping $\mu: E\rightarrow [0,1]$ such that for every $v\in V$ we have $\sum_{e: v\in e} \mu(e) \ge 1$. The \emph{weight} of $\mu$ is defined as $\mu(E) := \sum_{e\in E} \mu(e)$. The \emph{fractional edge cover number} of $H$ is $\rho^*(H) := \min_\mu \mu(E)$ where the minimum is taken over all fractional edge covers of $H$. We remark that $\rho^*(H)$ and an optimal fractional edge cover of $H$ can be computed efficiently by linear programming. 
	
	Every join query $Q$ has an underlying hypergraph $H$, where the vertices of $H$ correspond to the variables of $Q$ and the edges of $H$ correspond to the variable scopes of the atoms of $Q$. We use $Q$ and $H$ interchangeably in our notation.
	
	\subsection{Known Algorithms}
	
	Here, we state some results that we will use in the remainder of this paper. All running time bounds are in the word-RAM model with $O(\log(n))$-bit words and unit-cost operations. Note that the values stored in the registers are bounded by $n^c$ for some constant $c$, and this model allows sorting $m$ such values in time $O(c(m+n))$ using radix sort, see e.g.~\cite[Section 6.3]{CormenLR20}.
	
	\begin{theorem}[\cite{CarmeliTGKR20}]\label{thm:linearcase}
		If an acyclic join query $Q$ and an ordering $L$ of its variables have no disruptive trios, then lexicographic direct access for $Q$ and $L$ can be solved with $O(|D|)$ preprocessing and $O(\log(|D|))$ access time.
	\end{theorem}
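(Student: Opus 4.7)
The plan is to reduce the problem to a compatible join tree, run a Yannakakis-style preprocessing augmented with counting weights, and then implement direct access by nested binary searches over these weights.

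\emph{Structural step.} The first step is to show that the absence of disruptive trios, together with acyclicity, yields a rooted join tree $T$ of $Q$ that is \emph{compatible} with $L$ in the following strong sense: for every variable $x_i$, there is a single atom $A_i$ (its ``owner'') such that every other atom containing $x_i$ sits in the subtree below $A_i$, and every atom of $Q$ that contains both $x_i$ and any earlier variable of $L$ lies on the path from $A_i$ to the root. I would prove this by induction on the variables in the order $L$. If at step $i$ two atoms containing $x_i$ projected onto the prefix $\{x_1,\dots,x_{i-1}\}$ gave variables that do not co-occur in any atom, we would immediately exhibit a disruptive trio with $x_3 = x_i$; the no-disruptive-trio hypothesis thus lets us pick one such atom as $A_i$ and reattach the others below it without violating the running-intersection property.

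\emph{Preprocessing.} Given $T$, I first run the standard Yannakakis bottom-up/top-down semi-join passes in $O(|D|)$ time so that every surviving tuple participates in at least one answer. Next, in a postorder traversal of $T$, I compute for each atom $A$ and each tuple $t\in R_A^D$ a weight $w(t)$ equal to the number of partial answers to the subquery rooted at $A$ that agree with $t$. Since $T$ is a join tree, these weights can be computed by a sum-of-products over matching tuples in child atoms, using hash or sort joins; the total work is $O(|D|)$ by acyclicity. Finally, for each atom I sort its surviving tuples lexicographically under the restriction of $L$ to its variables and store cumulative sums of weights so that, for any assignment to a prefix of the atom's variables, the total weight of tuples matching that prefix is available in $O(\log |D|)$ time via binary search.

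\emph{Access and main obstacle.} To return the $j$-th answer, I pick values $v_1,\dots,v_n$ for $x_1,\dots,x_n$ one at a time. Compatibility of $T$ with $L$ ensures that, given the already-fixed values $v_1,\dots,v_{i-1}$, the number of answers extending this prefix with $x_i \le v$ factorises through a single owner atom $A_i$; it therefore equals a prefix-sum of weights in $A_i$ and is obtained in $O(\log|D|)$ time. Binary searching for the smallest $v$ whose cumulative count reaches $j$ gives $v_i$, after which I subtract the count of answers with $x_i < v$ from $j$ and recurse. Since $Q$ is fixed the number of variables is $O(1)$, giving overall access time $O(\log|D|)$. The main obstacle is the structural step: the whole algorithm hinges on the fact that at each stage the relevant counting is local to a single atom, and the delicate part is showing that a disruptive trio is the only obstruction to this locality, both in the base case (where multiple atoms compete to ``own'' $x_i$) and in rerooting the join tree so that owners lie on root-to-leaf paths in the right order.
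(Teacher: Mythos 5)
Note that the present paper imports this theorem from~\cite{CarmeliTGKR20} without reproving it, so there is no proof in the given source to compare against. Your overall plan does follow the strategy of that reference: build a join tree whose rooted structure is compatible with $L$, run a Yannakakis-style pass that decorates tuples with subtree counts, and answer the $j$-th access by nested binary searches over prefix sums of those counts. That algorithmic skeleton is the right one.

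However, the structural lemma you rely on is misstated and is in fact false as written. You demand a rooted join tree in which, for each $x_i$, every atom containing $x_i$ sits in the subtree below the owner $A_i$ \emph{and} every atom containing both $x_i$ and an earlier variable of $L$ lies on the path from $A_i$ to the root. Together these two conditions force any atom containing $x_i$ with an earlier variable to coincide with $A_i$ itself, which can fail even when there is no disruptive trio. Take $Q(x_1,x_2,x_3,x_4)\datarule R(x_1,x_2,x_3),\, S(x_1,x_2,x_4)$ with $L=(x_1,x_2,x_3,x_4)$: the query is acyclic, $L$ has no disruptive trio, and the only join tree is the single edge $R$--$S$; yet both $R$ and $S$ contain $x_2$ together with the earlier $x_1$, so no rooting can satisfy your condition for $i=2$ (even though lexicographic direct access is easy for this instance). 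The correct compatibility invariant is weaker: for each $i$ the set $e_i$ from Definition~\ref{def:disruption-free} is already contained in some atom of $Q$, equivalently the incompatibility number equals $1$, equivalently all earlier variables relevant to $x_i$ appear in a single atom together with $x_i$. That weaker property is what actually underlies the ``count factorises through a single owner'' claim in your access step once semi-join reduction has been performed, and it is the consequence of acyclicity plus the absence of disruptive trios. As written, your inductive re-rooting argument cannot succeed because the property it aims to establish does not hold; the structural step needs a corrected statement and a correspondingly corrected proof.
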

	
	A celebrated result by Atserias, Marx and Grohe~\cite{AtseriasGM13} shows that join queries of fractional edge cover number $k$, can, on any database~$D$, result in query results of size at most $O(|D|^k)$, and this bound is tight for every query for some databases. The upper bound is made algorithmic by so-called worst-case optimal join algorithms.
	
	\begin{theorem}[\cite{NgoPRR12,Veldhuizen14,NgoPRR18}]\label{thm:worst-case-joins}
		There is an algorithm that for every join query $Q$ of fractional edge cover number $\rho^*(Q)$, given a database $D$, computes $Q(D)$ in time $O(|D|^{\rho^*(Q)})$.
	\end{theorem}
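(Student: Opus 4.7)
The plan is to deploy a \emph{Generic-Join}-style recursive algorithm in the spirit of NPRR, and to analyze it via the AGM / Friedgut inequality that underlies the matching upper bound of Atserias--Grohe--Marx.

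First, I would solve the LP defining $\rho^*(Q)$ to obtain an optimal fractional edge cover $\mu^*: E(H) \to [0,1]$ with $\sum_e \mu^*(e) = \rho^*(Q)$; this takes time depending only on $|Q|$, so it is $O(1)$ in data complexity. Then, for each atom $R_i(\vec{v}_i)$, I would build a trie over the tuples in $R_i^D$ indexed by some globally fixed variable ordering $x_1, \ldots, x_n$. These tries support, for any partial assignment to a prefix of the variables of $\vec{v}_i$, enumerating or leapfrog-intersecting the valid one-variable extensions in time $\softO$ of the smallest relevant input.

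Next, I would run the recursion: to compute $Q(D)$, branch on $x_1$ by forming the intersection $S_1 = \bigcap_{i:\, x_1 \in \vec{v}_i} \pi_{x_1}(R_i^D)$; for each $a \in S_1$, recurse on the residual query obtained by pinning every atom containing $x_1$ to its $x_1 = a$ slice. Correctness is immediate from the semantics of joins, and the total running time is dominated by the sum over all recursion nodes of the cost of computing the unary intersections.

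The heart of the proof, and its main obstacle, is the running-time analysis: the naive branching bound is only $|D|^n$. To reach $|D|^{\rho^*(Q)}$ I would invoke Friedgut's inequality (equivalently, a Shearer-type entropy bound), which states that for any subquery on relations $\{R_e^D\}_e$ and any fractional edge cover $\mu$, the number of tuples agreeing with all atoms containing the branching variable is at most $\prod_e |R_e^D|^{\mu(e)}$. Applying this with $\mu = \mu^*$ at every recursion level, and using that the leapfrog intersection spends only $\softO(1)$ time per element output or probed, the costs telescope across the recursion tree to $O(|D|^{\rho^*(Q)})$, with logarithmic factors absorbed into data complexity (or removed outright by switching from tries to hashing, as in the original NPRR argument).
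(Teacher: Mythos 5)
The paper does not give its own proof of this theorem; it cites it as a black box from the worst-case-optimal-join literature (NPRR, Leapfrog Triejoin, and Ngo--R\'e--Rudra). Your sketch is a faithful reconstruction of exactly that approach (Generic-Join with tries/leapfrog and an AGM-style analysis), so in substance you are reproducing the cited proof rather than taking a different route.

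One small point of precision worth flagging: the quantity that makes the recursion tree telescope is not the AGM/Friedgut bound applied once to ``tuples agreeing with all atoms containing the branching variable,'' but rather the \emph{query decomposition lemma}: writing $\mathrm{AGM}_\mu(Q,D) := \prod_e |R_e^D|^{\mu(e)}$, one needs
\[
\sum_{a \in S_1} \mathrm{AGM}_\mu\bigl(Q[x_1{=}a],\, D[x_1{=}a]\bigr) \;\le\; \mathrm{AGM}_\mu(Q,D),
\]
which is proved from H\"older's inequality, not directly from Shearer/Friedgut. You gesture at this (``the costs telescope''), but the inequality you actually state is the size bound on a single subquery, which by itself gives only a per-node bound, not the global $O(|D|^{\rho^*(Q)})$ total. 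If you spell out the decomposition lemma explicitly and observe that the intersection at each node costs $\softO(\min_i |\pi_{x_1} R_i^D|)$, which is dominated by the AGM budget handed to that node, the argument is complete.
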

	
	\subsection{Fine-Grained Complexity}

	Fine-grained complexity theory aims to find the exact exponent of the best possible algorithm for any problem, see e.g.~\cite{williams2018some} for a recent survey. Since unconditional lower bounds of this form are currently far out of reach, fine-grained complexity provides conditional lower bounds that hold when assuming a conjecture about some central, well-studied problem. 
	
	Some important reductions and algorithms in fine-grained complexity are randomized, i.e., algorithms are allowed to make random choices in their run and may return wrong results with a certain probability, see e.g.~\cite{mitzenmacher2017probability} for an introduction. Throughout the paper, when we write ``randomized algorithm'' we always mean a randomized algorithm with success probability at least $2/3$. It is well-known that the success probability can be boosted to any $1-\delta$ by repeating the algorithm $O(\log(1/\delta))$ times and returning the majority result. In particular, we can assume to have success probability at least $1-1/n^{10}$, at the cost of only a factor $O(\log(n))$. Our reductions typically worsen the success probability by some amount, but using boosting it can be improved back to $1-1/n^{10}$; we do not make this explicit in our proofs.
	
	We stress that randomization is only used in our hardness reductions, while all our algorithmic results are deterministic.

	We will base our lower bounds on the following problem which is defined for every fixed constant $k\in \mathbb{N}, k \ge 3$.
	
	\begin{definition}
		In the \pb{Zero-$k$-Clique} problem, given an $n$-node graph $G=(V,E)$ with edge-weights $w:E\rightarrow \{-n^c,\ldots, n^c\}$ for some constant $c\ge 1$, the task is to decide whether there are vertices $v_1,\ldots,v_k \in V$ such that they form a $k$-clique (i.e.~$\{v_i,v_j\} \in E$ for all $1 \le i < j \le k$) and their total edge-weight is 0 (i.e.~$\sum_{1 \le i < j \le k} w(v_i,v_j) = 0$). In this case we say that $v_1,\ldots,v_k$ is a zero-clique.
	\end{definition}
	
	We remark that by hashing techniques we can assume $c \le 10k$, see e.g.~\cite[Lemma~3.2]{AbboudBDN18}. Due to this bound, we may assume that all numbers that we encounter in the remainder of this paper have bit-size $O(\log(n))$ and thus fit into a constant number of memory cells. As a further consequence, all arithmetic operations on all numbers can be done in constant time in the RAM model. We thus tacitly ignore the size of numbers and the cost of operations on them in the remainder of this paper.
	
	The following conjectures have been used in several places, see e.g.~\cite{LincolnWW18,AbboudBDN18,BringmannGMW20,AbboudWW14,BackursDT16,BackursT17}.
	The first conjecture is for a fixed $k$, the second postulates hardness for all $k$.

	\begin{conjecture}[Zero-$k$-Clique Conjecture]\label{con:zeroclique}
		For no constant $\varepsilon >0$ \pb{Zero-$k$-Clique} has a randomized algorithm running in time $O(n^{k-\varepsilon})$.
	\end{conjecture}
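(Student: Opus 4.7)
Since the statement above is a \emph{conjecture}, not a theorem, there is nothing to prove in the strict sense---conditional lower bounds in fine-grained complexity proceed by \emph{assuming} such statements. The right way to read this task, I think, is as asking for the justification one would give the reader to make the assumption palatable before using it in the paper's reductions. My proposal is therefore to motivate Conjecture~\ref{con:zeroclique} rather than derive it, in two parts.

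First, for the base case $k=3$, the plan is to recall the well-known reductions showing that \pb{Zero-3-Clique} (equivalently, weighted triangle detection for sum $0$) is at least as hard as both 3-SUM and APSP. For 3-SUM, given numbers $a_1,\ldots,a_n$ one builds a tripartite graph on three copies of $[n]$ with edge weights $a_i$ on one side, $a_j$ on a second, and $-a_i-a_j$ encoded on the third so that a zero triangle corresponds to a 3-SUM solution; for APSP the classical min-plus--to--triangle equivalence transfers to the zero-sum setting. This shows the $k=3$ case of the conjecture is already \emph{weaker} than either the 3-SUM or the APSP Conjecture, so the reader who believes either of those already believes this one.

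Second, for $k \ge 4$ the motivation I would offer is that the problem is a natural parametric strengthening: the trivial algorithm runs in time $O(n^k)$ by checking every $k$-tuple; no $O(n^{k-\eps})$ algorithm is known; and matrix-multiplication-based tricks that shave the exponent for unweighted $k$-clique detection do not appear to transfer, because the zero-sum constraint destroys the Boolean structure that those methods exploit. The conjecture therefore says the weight-sum constraint is, if anything, an obstacle rather than a help for algorithms. The hard part of any attempt to \emph{refute} it is precisely this Boolean-versus-additive gap: even for $k=3$ the best weighted triangle algorithms required genuinely new techniques (e.g.\ Williams' min-weight triangle algorithm) and shaved only subpolynomial factors. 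I would close by signalling that, with this motivation in place, the next step in the paper is to use Conjecture~\ref{con:zeroclique} as the starting point of a reduction to online $k$-Set-Disjointness, from which the direct-access lower bounds on join queries then follow.
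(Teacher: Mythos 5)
You correctly identify that a conjecture is an assumption, not a theorem to be proven, and your motivation mirrors the paper's own surrounding discussion: the paper likewise notes that \pb{Zero-3-Clique} is implied by both the 3-SUM and APSP Conjectures (citing~\cite{WilliamsW13,WilliamsW18}) and treats the $k>3$ cases as a natural, well-used generalization (citing~\cite{LincolnWW18,AbboudBDN18,BringmannGMW20,AbboudWW14,BackursDT16,BackursT17}). Your proposal is consistent with the paper's treatment and adds no errors.
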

	
	\begin{conjecture}[Zero-Clique Conjecture]
		For every $k \ge 3$ the Zero-$k$-Clique Conjecture is true.
	\end{conjecture}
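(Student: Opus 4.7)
The final statement is a conjecture rather than a theorem, so there is no proof to sketch: it simply asserts that Conjecture~\ref{con:zeroclique} holds for every $k \ge 3$. In lieu of a proof, my plan is to collect the evidence that makes the conjecture believable, in the same way that the Strong Exponential Time Hypothesis is justified by accumulated algorithmic failures rather than by a formal argument.

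The strongest evidence, already highlighted in the paper's introduction, is the $k=3$ base case: the Zero-$3$-Clique Conjecture is implied by both the $3$-SUM and the APSP Conjectures, two of the most central assumptions in fine-grained complexity. A refutation of the statement in its $k=3$ case would therefore simultaneously refute two foundational conjectures of the area. For larger $k$, the straightforward brute-force algorithm runs in $O(n^k)$ time (enumerate all $k$-tuples of vertices and check whether they form a zero-weight clique), and this exponent has resisted improvement despite the conjecture being studied and applied in several prior works cited above.

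The main obstacle to an actual proof is the one faced throughout fine-grained complexity: we currently have no techniques for proving unconditional polynomial lower bounds for problems in $\mathsf{P}$. A more plausible but still weaker target would be a tight self-reduction from Zero-$k$-Clique to Zero-$3$-Clique, which would reduce the full Zero-Clique Conjecture to its $k=3$ instance and hence to the well-established $3$-SUM and APSP conjectures. Unfortunately no such reduction is known, and the natural padding arguments go in the opposite (and much easier) direction, reducing Zero-$3$-Clique to Zero-$k$-Clique rather than the other way around. The statement is therefore adopted as an axiom and used, in the remainder of the paper, as the basis for the direct-access and enumeration lower bounds.
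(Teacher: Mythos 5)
You correctly identify that this is a conjecture adopted as a hypothesis, not a theorem with a proof, and your summary of the supporting evidence (the $k=3$ case following from 3-SUM and APSP, the lack of improvements over brute force, prior applications of the conjecture) matches exactly the justification the paper itself gives in its preliminaries. There is nothing to compare beyond that, since the paper likewise offers no proof.
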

	
	It is known that the \pb{Zero-$3$-Clique} Conjecture, also called Zero-Triangle Conjecture, is implied by two other famous conjectures: the 3SUM Conjecture~\cite{WilliamsW13} and the APSP Conjecture~\cite{WilliamsW18}. Since we do not use these conjectures directly in this paper, we do not formulate them here and refer the interested reader to the survey~\cite{williams2018some}.
	
	We remark that instead of \pb{Zero-$k$-Clique} some references work with the \pb{\emph{Exact-Weight}-$k$-Clique} problem, where we are additionally given a target weight $t$ and want to find a $k$-clique of weight $t$. Both problems are known to have the same time complexity up to constant factors, see e.g.~\cite{AbboudBDN18}.
	
	A related problem is \pb{\emph{Min}-$k$-Clique},  where we are looking for the $k$-clique of minimum weight. The \pb{Min-$k$-Clique} Conjecture postulates that this problem also cannot be solved in time $O(n^{k-\eps})$. It is known that the \pb{Min-$k$-Clique} Conjecture implies the \pb{Zero-$k$-Clique} Conjecture, see e.g.~\cite{AbboudBDN18}. 
	
	As common in fine-grained complexity, we use $\softO$-notation for many of our results which is similar to $O$-notation but suppresses polylogarithmic factors. More precisely, we say that an algorithm runs in time $\softO(g(n))$ if there is a constant $c$ such that it runs in time $O(\log^c(n)\cdot g(n))$.
	
	\section{Disruption-Free Decompositions and the Direct-Access Algorithm}
	\label{sec:algorithm}
	
	In this section, we give an algorithm that, for every join query and desired lexicographic order, provides direct access to the query result on an input database. In particular, we propose to add new atoms to the query such that the resulting query has no disruptive trios with respect to the order. Then, any direct-access algorithm that assumes acyclicity and no disruptive-trios can be applied, provided we can compute a database for the new query that yields the same answers. 
	We show that the new query is essentially a generalized hypertree decomposition of optimal fractional hypertree width out of all decompositions with the required properties.
	This shows that the suggested solution here is the best we can achieve using a decomposition, but it does not mean we cannot do better using a different method---the latter question is studied in the later sections of this paper.

	\subsection{Disruption-Free Decompositions}
	
	We describe a process that iteratively eliminates disruptive trios in a query by adding new atoms.
	
	\begin{definition}[Disruption-Free Decomposition]\label{def:disruption-free}
		Let $Q$ be a join query and $L=(v_1,\ldots,v_\ell)$ be an ordering of its variables. Let $H_\ell$ be the hypergraph of $Q$, and for $i=\ell, \ldots, 1$ construct hypergraph $H_{i-1}$ from $H_i$ by adding an edge $e_i := \{v_i\}\cup\{v_j\mid j<i\text{ and }v_j\in N_{H_i}(v_i)\}$. The disruption-free decomposition of $Q$ for $L$ is then defined to be $H_0$.
	\end{definition}
	
	\begin{example}\label{ex:1}
		Consider the query $Q(v_1, v_2, v_3, v_4, v_5) \datarule R_1(v_1, v_5),$ $R_2(v_2, v_4), R_3(v_3, v_4), R_4(v_3,v_5)$  with the order $(v_1, v_2, v_3, v_4, v_5)$ of its variables. Its hypergraph is shown in Figure~\ref{fig:ex1}. In the first step of the construction of Definition~\ref{def:disruption-free}, we add the edge $\{v_5\}\cup N(v_5)= \{v_1, v_3, v_5\}$. Similarly, in the second step, we add $\{v_2, v_3, v_4\}$. For the third step, note that $N(v_3) = \{v_1, v_2, v_4, v_5\}$ due to the edges we have added before. Out of these neighbors, only $v_1$ and $v_2$ come before $v_3$ in the order. So we add the edge $\{v_1, v_2, v_3\}$. Finally, for $v_2$, we add the edge $\{v_1, v_2\}$ and for $v_1$ the singleton edge $\{v_1\}$.
	\end{example}
	
	\begin{figure}
		\begin{tikzpicture}[scale=.8]
			\path (0,0) node[draw,circle,dashed] (x1) {$v_1$}
			(0,4) node (x5) {$v_5$}
			(2,2) node (x3) {$v_3$}
			(4,4) node (x4) {$v_4$}
			(4,0) node (x2) {$v_2$};
			\draw (x1) -- (x5);
			\draw (x5) -- (x3);
			\draw (x3) -- (x4);
			\draw (x4) -- (x2);
			\node[ellipse, draw, dashed, fit=(x1) (x3) (x5),inner sep=1mm](FIt1) {};
			\node[ellipse, draw, dashed, fit=(x2) (x3) (x4),inner sep=1mm](FIt2) {};
			\node[ellipse, draw, dashed, fit=(x1) (x3) (x2),inner sep=1mm](FIt3) {};
			\draw[dashed] (x1) -- (x2);
		\end{tikzpicture}
		\caption{The hypergraph of Example~\ref{ex:1}. The original (hyper)graph of the query is drawn in full edges. The edges that we add in the construction are dashed.}\label{fig:ex1}
	\end{figure}
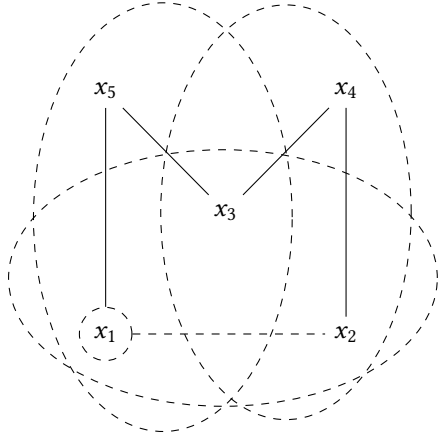
	
	\begin{proposition}
		The disruption-free decomposition of a join query $Q$ for $L$ is an acyclic super-hypergraph $H_0$ of the hypergraph of $Q$ without any disruptive trios with respect to $L$.
	\end{proposition}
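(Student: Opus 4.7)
My plan is to reduce both nontrivial claims of the statement to a single structural observation about $H_0$; the super-hypergraph claim is immediate from the construction. The observation I would prove first is this: for every $i \in \{1,\ldots,\ell\}$ and every edge $e \in E(H_{i-1})$ with $v_i \in e$, the restriction $e \cap \{v_1,\ldots,v_i\}$ is contained in $e_i$. The proof splits by the origin of $e$. If $e$ comes from $Q$, each of its vertices in $\{v_1,\ldots,v_{i-1}\}$ is a neighbor of $v_i$ in $H_\ell$ and hence in $H_i$ (neighborhoods only grow as edges are added), so it lies in $e_i$ by construction. If $e = e_j$ then $e_j \subseteq \{v_1,\ldots,v_j\}$ forces $j \geq i$; for $j = i$ there is nothing to show, and for $j > i$ the edge $e_j$ already belongs to $H_i$, making every $v_k \in e_j$ with $k < i$ a neighbor of $v_i$ in $H_i$ and therefore an element of $e_i$.

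For acyclicity I would exhibit $v_\ell, v_{\ell-1}, \ldots, v_1$ as an elimination order. By induction on the step, just before $v_i$ is processed, the surviving hypergraph lives on vertex set $\{v_1,\ldots,v_i\}$ and its edges are restrictions to this set of edges of $H_0$. Since $e_j \not\ni v_i$ for $j < i$, every edge containing $v_i$ comes from an edge of $H_{i-1}$ containing $v_i$, so the structural observation places them all inside $e_i$. The edge $e_i$ itself is unchanged by earlier eliminations because $e_i \subseteq \{v_1,\ldots,v_i\}$, so the subset-edge rule removes every edge containing $v_i$ except $e_i$, and the degree-one rule then removes $v_i$.

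For the absence of disruptive trios, suppose for contradiction that $v_a, v_b, v_i$ with $a, b < i$ form one, so $v_a, v_b \in N_{H_0}(v_i)$ while $v_a, v_b$ share no edge of $H_0$. The edges $e_j$ with $j < i$ do not contain $v_i$, and $e_i$ introduces only vertices already in $N_{H_i}(v_i)$ as new neighbors of $v_i$, so $N_{H_0}(v_i) = N_{H_i}(v_i)$. Therefore $v_a, v_b \in N_{H_i}(v_i) \cap \{v_1,\ldots,v_{i-1}\} \subseteq e_i$, and $e_i$ is an edge of $H_0$ containing both $v_a$ and $v_b$, a contradiction.

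The main subtlety is the apparent circularity between the iteratively added edges $e_{i+1},\ldots,e_\ell$ and the neighborhood $N_{H_i}(v_i)$ used to define $e_i$: adding edges later in the construction (larger $j$) retroactively enlarges the neighborhoods used by later-defined edges (smaller $j$). The structural observation above is exactly what is needed to tame this interaction, and once it is in hand both the acyclicity argument and the disruptive-trio argument become routine bookkeeping; proving the observation uniformly across both types of edge is the one step that requires care.
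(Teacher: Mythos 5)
Your proof is correct and follows the same route as the paper's: you show that the reverse of $L$ is an elimination order and that any would-be disruptive trio $v_a,v_b,v_i$ is already collapsed inside $e_i$. Your auxiliary structural observation (that $e\cap\{v_1,\ldots,v_i\}\subseteq e_i$ for every edge $e$ of $H_{i-1}$ containing $v_i$) simply makes explicit a fact the paper uses tacitly in the sentence ``all edges containing $v$ are contained in the edge $e$ that was added for $v$,'' so your write-up is a bit more careful but the underlying argument is the same.
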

	\begin{proof}
		Let $L=(v_1,\ldots,v_\ell)$. We also use the notation $e_i$ as defined in Definition~\ref{def:disruption-free}.
		To prove that~$H_0$ is acyclic, we will show that the reverse of $L$ is an elimination order.
		First, set $i=\ell$. Since~$e_i$ contains all neighbors of $v_i$ (that were not yet eliminated), we have that~$e_i$ contains all other edges containing $v_i$, and we can eliminate all of these edges. Afterwards, $v_i$ appears only in $e_i$, so we can eliminate $v_i$.
		Note that we did not eliminate any edges $e_j$ with $j<i$ since all edges we eliminated contain $v_i$ (and $e_j$ does not contain $v_i$ by construction when $j<i$).
		Thus, we can repeat these steps with $i-1$ instead of $i$ until all vertices are eliminated. It follows that, as claimed, the reverse of $L$ is an elimination order for $H_0$, and thus $H_0$ is indeed acyclic.
		
		We next show that $H_0$ has no disruptive trio with respect to $L$. By way of contradiction, assume $H_0$ has a disruptive trio $v_i,v_j,v_k$ with $i<j<k$. Then, by definition, in $H_0$ the vertices $v_i$ and $v_j$ are both neighbors of $v_k$, but $v_i$ and $v_j$ are not neighbors. In the construction of $H_0$ from Definition~\ref{def:disruption-free}, no edge containing $v_k$ is added after introducing $e_k$, and $e_k$ does not introduce new neighbors to~$v_k$. We conclude that $v_i$ and $v_j$ were already neighbors of $v_k$ right before introducing $e_k$. But then $\{v_i,v_j\}\subseteq e_k$, and thus $v_i$ and $v_j$ are neighbors in $H_0$. This contradicts the assumption that $v_i$, $v_j$, $v_k$ forms a disruptive trio.
	\end{proof}
	
	It will be useful in the remainder of this section to have a non-iterative definition of disruption-free decompositions.
	Let $S_i$ be the vertices in the connected component of $v_i$ in $Q[\{v_i,\ldots,v_\ell\}]$. In particular, we have that $v_i\in S_i$.
	
	\begin{lemma}\label{lem:alternativecharacterizationdecomposition}
		The edges introduced in Definition~\ref{def:disruption-free} are $e_i
		=\{v_i\}\cup\{v_j\mid j<i\text{ and }v_j\in N_Q(S_i)\}$ for $i\in\{1,\ldots,\ell\}$.
	\end{lemma}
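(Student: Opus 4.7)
The plan is to prove the lemma by reverse induction on $i$, from $\ell$ down to $1$, using the hypothesis that $e_k = \{v_k\} \cup (N_Q(S_k) \cap \{v_1,\ldots,v_{k-1}\})$ for all $k > i$. The base case $i=\ell$ is immediate since $H_\ell = Q$ and $S_\ell = \{v_\ell\}$. For the inductive step I would unfold $N_{H_i}(v_i)$ as $N_Q(v_i) \cup \bigcup_{k>i,\, v_i\in e_k}(e_k\setminus\{v_i\})$ and prove by double inclusion that $N_{H_i}(v_i)\cap\{v_1,\ldots,v_{i-1}\} = N_Q(S_i)\cap\{v_1,\ldots,v_{i-1}\}$, since by definition $e_i$ is $\{v_i\}$ plus the left-hand side.

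The inclusion $\subseteq$ is routine. If $v_j \in N_{H_i}(v_i)$ with $j<i$, then either $v_j$ is a $Q$-neighbor of $v_i$, in which case $v_j \in N_Q(v_i)\subseteq N_Q(S_i)$, or both $v_i,v_j$ lie in some $e_k$ with $k>i$. In the latter case, the inductive form of $e_k$ yields $Q$-neighbors $v_a, v_b \in S_k$ of $v_i$ and $v_j$ respectively; since $v_a \in N_Q(v_i) \cap \{v_k,\ldots,v_\ell\}$ forces $v_a \in S_i$, and $S_k$ is connected in $Q[v_k,\ldots,v_\ell]$ which is a sub-hypergraph of $Q[v_i,\ldots,v_\ell]$, we get $S_k \subseteq S_i$ and hence $v_b \in S_i$, so $v_j \in N_Q(S_i)$.

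The inclusion $\supseteq$ is where the real work lies. Given $v_j \in N_Q(v_m)$ with $v_m \in S_i$ and $j < i$, the trivial case $m=i$ is handled by the direct $Q$-edge. For $m > i$, I would pick any path $v_i = u_0, u_1, \ldots, u_s = v_m$ in $Q[v_i,\ldots,v_\ell]$ and take $k$ to be the \emph{minimum} index occurring among $u_1,\ldots,u_s$ (note $k > i$ since no $u_t$ with $t\ge 1$ equals $v_i$). The crucial consequence is that the entire sub-path $u_1,\ldots,u_s$ then lives in $Q[v_k,\ldots,v_\ell]$, so all these vertices sit in the component of $v_k$, namely $S_k$. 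In particular $u_1 \in N_Q(v_i) \cap S_k$ and $v_m \in N_Q(v_j) \cap S_k$, so both $v_i$ and $v_j$ lie in $N_Q(S_k)\cap\{v_1,\ldots,v_{k-1}\}$; by the inductive hypothesis they both lie in $e_k$, and since $e_k$ is an edge of $H_i$ we conclude $v_j \in N_{H_i}(v_i)$.

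The main obstacle I anticipate is exactly this choice of $k$: the naive choice $k = \mathrm{index}(u_1)$ fails in general, because the tail of the path may dip below $u_1$'s index, in which case $v_m$ need not lie in $S_{u_1}$ and the inductive description of $e_{u_1}$ does not guarantee $v_j \in e_{u_1}$. Taking $k$ to be the global minimum index along the path is precisely what forces both $v_i$ and $v_j$ into one common $e_k$ and closes the argument.
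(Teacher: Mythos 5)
Your proof is correct and follows essentially the same strategy as the paper's: reverse induction on $i$, proving double inclusion of the two descriptions of $e_i$, with the key observation in both directions being $S_k \subseteq S_i$ for a suitably chosen $k > i$. The paper's $\supseteq$ argument phrases the choice of $k$ in terms of the connected component of $S_i \setminus \{v_i\}$ containing a $Q$-neighbor of $v_j$ (taking $v_k$ to be the smallest vertex therein), while you phrase it via the minimum index along a path from $v_i$ to $v_m$; these yield the same $e_k$ to which the inductive hypothesis is applied. You also spell out the $\subseteq$ direction more fully than the paper, which leaves the step $v_j \in N_Q(S_i)$ mostly implicit; your justification via $S_k \subseteq S_i$ correctly fills that in.
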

	\begin{example}
		Let us illustrate Lemma~\ref{lem:alternativecharacterizationdecomposition} with the query of Example~\ref{ex:1}: For the variable $v_5$ we have $S_5 = \{v_5\}$. Consequently, $e_5 = \{v_5\} \cup N_Q(v_5)= \{v_1, v_3, v_5\}$ which is exactly the edge that is added for $v_5$. The case for $v_4$ is analogous. For $v_3$ we have $S_3 = \{v_3, v_4, v_5\}$. Consequently $e_3 := \{v_3\} \cup N_Q(\{v_3, v_4, v_5\}) = \{v_1, v_2, v_3\}$. We have $S_2 = \{v_2, v_3, v_4, v_5\}$, so we add the edge $e_2 = \{v_1, v_2\}$. For $v_1$ we add again the singleton edge $\{v_1\}$.
	\end{example}
	
	\begin{proof}[Proof of Lemma~\ref{lem:alternativecharacterizationdecomposition}]
		Let $e_i'=\{v_i\}\cup\{v_j\mid j<i\text{ and }v_j\in N_Q(S_i)\}$. We prove that $e_i=e_i'$ by induction on decreasing $i$. The claim holds by definition for $e_\ell$ since $S_\ell=\{v_\ell\}$, and $H_\ell$ is the hypergraph of $Q$, and so $N_Q(S_i) = N_{H_\ell}(v_\ell)$.
		
		For the induction step, we first show $e_i\subseteq e_i'$.
		Let $v_j\in e_i$. If $j=i$, we know that $v_j=v_i\in e_i'$ by definition. Otherwise, $j<i$ and $v_j\in N_{H_i}(v_i)$. If $v_j\in N_Q(v_i)$, we have that $v_j\in N_Q(S_i)$ because $v_i\in S_i$, and so $v_j\in e_i'$.
		Otherwise, the reason $v_j$ is a neighbor of $v_i$ in $H_i$ is because there is some $k>i>j$ such that $\{v_i,v_j\}\subseteq e_k$. From the induction hypothesis, we have that $e_k'= e_k$, and it follows that $v_i, v_j\in N_{Q}(S_k)$. Since $v_i\in N_{Q}(S_k)$, we have that $S_k\subseteq S_i$. It follows that $v_j\in N_{Q}(S_k)\subseteq N_{Q}(S_i)$ and thus $v_j\in e_i'$. Since in all cases $v_j\in e_i'$, we get that $e_i\subseteq e_i'$.
		
		It remains to show that $e_i' \subseteq e_i$. 
		To this end, let $v_j\in e_i'$. If $j=i$, we know that $v_j=v_i\in e_i$ by definition.
		Otherwise, $j<i$ and $v_j\in N_Q(S_i)$.
		If $v_j\in N_Q(v_i)$, then also $v_j\in N_{H_i}(v_i)$, and so $v_j\in e_i$.
		Otherwise, if we remove $v_i$ from $S_i$, it is split into connected components, one of them containing a neighbor of $v_j$.
		Consider the smallest variable $v_k$ in such a connected component. Then, this connected component is $S_k$, and we have that $v_i,v_j\in N_Q(S_k)$ where $k>i>j$.
		It follows that $\{v_i, v_j\}\subseteq e_k'$, and from the induction hypothesis we get that $\{v_i, v_j\}\subseteq e_k$. Thus,~$v_j$ is a neighbor of~$v_i$ when introducing $e_i$, and so $v_j\in e_i$. Consequently, $e_i'\subseteq e_i$.
	\end{proof}
	
	\subsection{The Direct-Access Algorithm}
	
	The idea of our direct access algorithm is to use the disruption-free decomposition. More precisely, we define a new query $Q'$ from $Q$ such that the hypergraph of $Q'$ is the disruption-free decomposition of $Q$. Then, given an input database $D$ for $Q$, we compute a new database $D'$ for $Q'$ such that $Q(D) = Q'(D')$. Since $Q'$ has no disruptive trio, we can then use the algorithm from~\cite{CarmeliTGKR20} on $Q'$ and $D'$ to allow direct access.
	A key component to making this approach efficient is the efficient computation of $D'$. To measure its complexity, we introduce the following notion.
	
	\begin{definition}[Incompatibility Number]
		Let $Q$ be a join query with hypergraph $H$ and let $L=(v_1,\ldots,v_\ell)$ be an ordering of its variables.
		Let $\rho^*(H[e_i])$ with $i\in [\ell]$ be the fractional edge cover number of $H[e_i]$, where $e_i$ is as defined in Definition~\ref{def:disruption-free}.
		We call $\max_{i=1,\ldots,\ell}\rho^*(H[e_i])$ the \emph{incompatibility number} of $Q$ and $L$.
	\end{definition}
	Note that we assume that queries have at least one atom, so the incompatibility number of any query and any order is at least $1$.
	The incompatibility number can also be seen as the \emph{fractional width} of the disruption-free decomposition, and we can show that this decomposition has the minimum fractional width out of all decompositions with the properties we need, see Section~\ref{sec:decompositions} for details.
	We now show that the incompatibility number lets us state an upper bound for the computation of a database $D'$ with the properties claimed above.
	
	\begin{theorem}\label{thm:cover-preprocess}
		Given a join query $Q$ and an ordering $L$ of its variables with incompatibility number $\iota$, lexicographic direct access with respect to $L$ can be achieved with $O(|D|^\iota)$ preprocessing and logarithmic access time.
	\end{theorem}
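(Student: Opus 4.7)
The plan is to realize the construction sketched before the statement: define a new query $Q'$ whose hypergraph is the disruption-free decomposition $H_0$ of $Q$ with respect to $L$, populate a database $D'$ for $Q'$ so that $Q'(D') = Q(D)$, and then invoke the algorithm of \cite{CarmeliTGKR20} on $(Q',D')$, which applies because $H_0$ is acyclic and disruption-free by the preceding proposition. Concretely, $Q'$ would consist of the original atoms of $Q$ together with one fresh atom $R_{e_i}(e_i)$ for each edge $e_i$ added in Definition~\ref{def:disruption-free}.

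To produce the content of $R_{e_i}$, I would run a worst-case optimal join on the following sub-query $Q_{e_i}$: for every atom $R(\vec v)$ of $Q$, take the projected relation $\pi_{\vec v \cap e_i}(R^D)$ over the variable set $\vec v \cap e_i$. The resulting query has hypergraph exactly $H[e_i]$, whose fractional edge cover number is at most $\iota$. Theorem~\ref{thm:worst-case-joins} then computes $R_{e_i} := Q_{e_i}(D)$ in time $O(|D|^{\rho^*(H[e_i])}) = O(|D|^\iota)$, and of size at most the same. Summing over the $O(|V|)$ new edges keeps the total preprocessing inside $O(|D|^\iota)$ and gives $|D'| = O(|D|^\iota)$.

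For the identity $Q'(D') = Q(D)$: every $\mu \in Q(D)$ satisfies all atoms of $Q$, and its restriction $\mu|_{e_i}$ satisfies every projected atom of $Q_{e_i}$, so $\mu|_{e_i} \in R_{e_i}$ and $\mu \in Q'(D')$; conversely every $\mu \in Q'(D')$ satisfies all original atoms of $Q$ and hence lies in $Q(D)$. Feeding $(Q', D')$ into the algorithm from \cite{CarmeliTGKR20} then yields direct access for the ordering $L$ with preprocessing $O(|D'|) = O(|D|^\iota)$ and access time $O(\log|D'|) = O(\iota\log|D|) = O(\log|D|)$, since $\iota$ is constant in data complexity.

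The only delicate point is the correctness of populating $R_{e_i}$ via a worst-case optimal join of the projected atoms: this relation may contain tuples that do not extend to any full answer of $Q$. However, this does not affect $Q'(D')$, because the original atoms of $Q$ are still present in $Q'$ and filter out any spurious extensions, so we only need the easy inclusion $\pi_{e_i}(Q(D)) \subseteq R_{e_i}$. Apart from this verification, the proof is essentially an assembly of Definition~\ref{def:disruption-free}, Theorem~\ref{thm:worst-case-joins}, and the algorithm of \cite{CarmeliTGKR20}, so I do not expect a serious obstacle.
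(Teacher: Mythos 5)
Your proof is correct and takes essentially the same approach as the paper's: build the disruption-free decomposition, populate each new atom $R_{e_i}$ with a worst-case-optimal join of projected atoms (exploiting $\rho^*(H[e_i]) \le \iota$), observe that spurious tuples in $R_{e_i}$ are filtered by the original atoms kept in $Q'$, and invoke the algorithm of \cite{CarmeliTGKR20}. The only cosmetic difference is that you join the projections of \emph{all} atoms of $Q$ onto $e_i$, whereas the paper restricts to the atoms carrying positive weight in an optimal fractional edge cover of $H[e_i]$; both yield a sub-query of fractional edge cover number at most $\iota$ and hence the same $O(|D|^\iota)$ bound.
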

	\begin{proof}
		We construct $Q'$ by adding for every edge $e_i$ an atom $R_i$ whose variables are those in $e_i$. We compute a relation $R_i^{D'}$ as follows: let $\mu$ be a fractional edge cover of $H[e_i]$ of weight $\rho^*(H[e_i])\le \iota$. Let $\bar e_1, \ldots, \bar e_r$ be the edges of $H[e_i]$ that have positive weight in $\mu$. For every $\bar e_j$ there is an edge $e_j'$ of $H$ such that $\bar e_j = e_j'\cap e_i$. Let $\bar R_j(\bar X_j)$ be the projection to $e_i$ of an atom corresponding to $e_j'$. Let $D^*$ be the extension of $D$ that contains the corresponding projected relations for the $\bar R_j(\bar X_j)$. We set $R_i^{D'} := Q_i(D^*)$ where $Q_i(e_i) \datarule \bar R_1(X_1), \ldots, \bar R_r(X_r)$.
		Then clearly for all tuples $\vec t$ in~$Q(D)$, the tuple we get by projecting $\vec t$ to $e_i$ lies in $R_i^{D'}$.
		As a consequence, we can construct $D'$ by adding all relations $R_i^{D'}$ to $D$ to get $Q(D) = Q'(D')$. Moreover, the hypergraph of $Q'$ is the disruption-free decomposition of the hypergraph of $Q$, so we can apply the algorithm from~\cite{CarmeliTGKR20} for direct access.
		
		It remains to show that all $R_i^{D'}$ can be constructed in time $O(|D|^\iota)$. To this end, consider the join query $Q_i(e_i)$ from before. By definition, its variable set is $e_i$, and $\mu$ is a fractional edge cover of weight $\rho^*(H[e_i])\le \iota$. Thus, we can use a worst-case optimal join algorithm from~Theorem~\ref{thm:worst-case-joins} to compute $R_i^{D'}$ in time $O(|D|^\iota)$.
	\end{proof}

	\subsection{Disruption-Free Decompositions and Fractional Hypertree Width}\label{sec:decompositions}
	
	In this section, we will relate disruption-free decompositions to fractional hypertree decompositions and the incompatibility number to fractional width.
	
	Let $H=(V,E)$ be a hypergraph. A \emph{hypertree decomposition} $\mathcal{D}$ of $H$ is defined to be an acyclic hypergraph $(V,B)$ such that for every edge $e\in E$ there is a $b\in B$ with $e\subseteq b$.\footnote{Hypertree decompositions appear in previous work under several names, such as \emph{generalized hypertree decompositions} and \emph{fractional hypertree decompositions} depending on the way the bags get covered. Usually, the definitions are more involved as they also contain a tree structure with the bags as nodes and an edge cover of the bags. This is a simplified definition containing only what we need here.} The sets $b\in B$ are called the \emph{bags} of the decomposition.
	The \emph{fractional width} of $\mathcal D$ is defined as $\max_{b\in B} \rho^*(H[b])$.
	The \emph{fractional hypertree width} of $H$ is defined to be the minimal fractional width of any hypertree decomposition of $H$~\cite{GroheM14}.
	
	Note that the fractional edge cover number is monotone in the sense that $\rho^*(H[b])\le \rho^*(H[b'])$ whenever $b\subseteq b'$, and so the fractional width of a hypertree decomposition is determined by its maximal edges with respect to set inclusion.
	Next, we observe that the incompatibility number is the fractional width of the disruption-free decomposition seen as a hypertree decomposition.
	
	\begin{proposition}
		Let $Q$ be a join query and $L$ be an ordering of its variables. The disruption-free decomposition of $Q$ and $L$ is a hypertree decomposition of~$Q$, and its fractional width is the incompatibility number of $Q$ and $L$.
	\end{proposition}
	\begin{proof}
		Consider an edge $e$ of the disruption-free decomposition $H_0$. First, we claim that $e$ is contained in some edge $e_i$ from Definition~\ref{def:disruption-free}. Let $v_i$ be the latest vertex (according to $L$) in an edge $e$ of $H_0$. Then, $e$ could not have been introduced after the introduction of $e_i$. Thus, either $e=e_i$ or $e$ existed right before the introduction of $e_i$, and then $e\subseteq e_i$ by the definition of $e_i$.
		Due to the monotonicity of the fractional edge cover number, $\rho^*(H[e])\le \rho^*(H[e_i])$, and this is bounded by the incompatibility number by its definition. We also have that $\iota=\rho^*(H[e_j])$ for some $j$, and thus $\iota$ is exactly the fractional width of $H_0$.
	\end{proof}
	
	\begin{observation}\label{observ:incompatibility-width}
		Let $Q$ be a join query and $L$ be an ordering of its variables. Then the incompatibility number of $Q$ and $L$ is at least the fractional hypertree width of $Q$.
	\end{observation}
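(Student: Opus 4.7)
The natural plan is LP duality for the fractional edge cover LP of $Q$. By duality,
\[
\rho^*(Q) = \max\Big\{ \sum_{v\in V} y(v) \;\Big|\; y\ge 0,\; \sum_{v\in e} y(v) \le 1 \text{ for every } e\in E(Q)\Big\}.
\]
I would fix an optimum $y^*$, so that $\sum_v y^*(v) = \rho^*(Q)$, and argue from this extremal witness.

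The first step is a per-bag lower bound. For each bag $e_i$ of the disruption-free decomposition, the restriction $z := y^*|_{e_i}$ is feasible for the dual LP defining $\rho^*(H[e_i])$: every edge of $H[e_i]$ is of the form $f = e \cap e_i$ for some $e \in E(Q)$, and hence
\[
\sum_{v \in f} z(v) = \sum_{v \in e \cap e_i} y^*(v) \le \sum_{v \in e} y^*(v) \le 1.
\]
LP duality then yields $\rho^*(H[e_i]) \ge \sum_{v \in e_i} y^*(v)$ for every $i$, and taking the maximum gives $\iota \ge \max_i \sum_{v \in e_i} y^*(v)$.

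The remaining step is to upgrade this ``local'' bound to the ``global'' inequality $\iota \ge \rho^*(Q)$. It is enough to exhibit a single bag $e_{i^*}$ with $\sum_{v \in e_{i^*}} y^*(v) \ge \sum_v y^*(v) = \rho^*(Q)$, i.e., a bag that carries the full support of the chosen $y^*$. The hard part will be exactly this concentration step, since a generic optimal $y^*$ need not be supported inside a single bag. My plan is to replace $y^*$ by an extreme point of the optimal face of the dual polytope and then exploit the acyclic, tree-like structure of the disruption-free decomposition together with the explicit description $e_i = \{v_i\} \cup \{v_j : j < i,\ v_j \in N_Q(S_i)\}$ from Lemma~\ref{lem:alternativecharacterizationdecomposition}: walking the elimination order $v_\ell, v_{\ell-1}, \ldots, v_1$, the bag indexed by the largest-indexed vertex at which this extremal $y^*$ is positive is a natural candidate for $e_{i^*}$, and the running-intersection-style property of the acyclic decomposition should force the rest of the support to sit inside the same bag.
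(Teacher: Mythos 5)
Your per-bag duality bound is correct, but the ``concentration step'' you defer is not merely the hard part---it is impossible, because the inequality you are trying to prove is false as literally stated. Take $Q(x_1,x_2,x_3) \datarule R_1(x_1,x_2), R_2(x_2,x_3)$ with $L=(x_1,x_2,x_3)$. The bags of the disruption-free decomposition are $e_1=\{x_1\}$, $e_2=\{x_1,x_2\}$ and $e_3=\{x_2,x_3\}$, each with fractional edge cover number $1$, so the incompatibility number is $\iota=1$; but $\rho^*(Q)=2$, since each atom must receive weight $1$ to cover $x_1$, respectively $x_3$. In your dual formulation the unique optimal fractional independent set is the indicator vector of $\{x_1,x_3\}$---already an extreme point---and its support lies in no single bag. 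No running-intersection-style argument can rescue this: acyclicity of the decomposition constrains how bags overlap along the elimination order, not where the support of an independent set concentrates, and your candidate bag (the one indexed by the largest positive vertex, here $e_3=\{x_2,x_3\}$) simply misses $x_1$. So the bound you can actually extract, $\iota \ge \max_i \sum_{v\in e_i} y^*(v)$, is genuinely weaker than $\sum_v y^*(v)=\rho^*(Q)$, and the gap is real.

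The paper states this as an observation without proof, and in view of the counterexample the statement must contain a typo; both plausible intended readings have one-line proofs that need none of your LP machinery. If the intended claim is the reverse inequality $\iota \le \rho^*(Q)$, restrict an optimal fractional edge cover $\mu$ of $H$ to each bag: assigning $\mu(e)$ to $e\cap e_i$ covers every vertex of $e_i$, so $\rho^*(H[e_i]) \le \mu(E) = \rho^*(Q)$ for all $i$. If the intended claim is that $\iota$ is at least the fractional \emph{hypertree width} of $Q$ (which is what the corollary on fractional hypertree width later in the paper actually needs), it is immediate from the sentence preceding the observation: the disruption-free decomposition is a hypertree decomposition of fractional width $\iota$, and the fractional hypertree width is by definition the minimum width over all hypertree decompositions. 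Either way, the duality-plus-concentration route you set up points in the wrong direction.
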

	
	Of course there are other hypertree decompositions $\mathcal D$ of $Q$ that we could have used for a direct access algorithm. The only property that we need is that $\mathcal D$ has no disruptive trio for $L$. If this is the case, then inspection of the proof of Theorem~\ref{thm:cover-preprocess} shows that we get an alternative algorithm whose preprocessing depends on the fractional width of $\mathcal D$.
	We will see next that this approach cannot yield a better running time than Theorem~\ref{thm:cover-preprocess}. To this end,
	we prove that any alternative decomposition contains the disruption-free decomposition in a way.
	
	\begin{lemma}\label{lem:optimaldecomposition}
		Let $Q$ be a join query, $L$ be an ordering of its variables, and $\mathcal D$ be a hypertree decomposition of $Q$ without disruptive trios with respect to $L$. Then, for every edge $e$ of the disruption-free decomposition of $Q$, there exists a bag $b$ of $\mathcal D$ such that $e\subseteq b$.
	\end{lemma}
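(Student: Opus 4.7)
The plan is to verify the claim in two steps: first show that every pair of vertices in $e_i$ lies in a common bag of $\mathcal{D}$, and then promote this pairwise property to a single bag containing all of $e_i$ via a Helly-type argument for acyclic hypergraphs.

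For the pairwise step, I would rely on the alternative characterization of $e_i$ from Lemma~\ref{lem:alternativecharacterizationdecomposition}: writing $S_i$ for the connected component of $v_i$ in $Q[v_i,\ldots,v_\ell]$, we have $e_i = \{v_i\} \cup \{v_j \mid j < i,\ v_j \in N_Q(S_i)\}$. For any two distinct $v, v' \in e_i$, I would construct a walk $v = w_0, w_1, \ldots, w_p = v'$ in which each consecutive pair shares a bag of $\mathcal{D}$ and every internal vertex lies in $S_i$. Such a walk exists because each $v_j \in e_i \setminus \{v_i\}$ has some $Q$-neighbor $u \in S_i$, putting $v_j$ and $u$ in a common atom of $Q$ and therefore in a common bag of $\mathcal{D}$; and since $S_i$ is connected in $Q[v_i,\ldots,v_\ell]$, any two of its vertices are linked by a $Q$-path whose edges each yield a shared bag.

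The crux is then a shortcutting argument. Both endpoints $v, v'$ occupy positions at most $i$ in $L$ (they lie in $e_i$), while every internal vertex belongs to $S_i$ and so has position strictly greater than $i$. Hence the vertex $w_t$ of latest $L$-position on the walk is necessarily internal, and its two neighbors $w_{t-1}, w_{t+1}$ each share a bag with $w_t$ and both precede $w_t$ in $L$. Because $\mathcal{D}$ has no disruptive trio, $w_{t-1}$ and $w_{t+1}$ must themselves share a bag, so I can delete $w_t$ and obtain a strictly shorter walk satisfying the same invariants. Iterating, the walk collapses to a single edge, showing that $v$ and $v'$ share a bag of $\mathcal{D}$.

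For the global step, I would use that $\mathcal{D}$ is acyclic and thus admits a join tree in which, for each variable $v$, the bags containing $v$ induce a subtree. Subtrees of a tree enjoy the Helly property, so pairwise non-empty intersection of these subtrees over all $v \in e_i$ — which is precisely the pairwise bag-sharing just established — yields a single bag containing every element of $e_i$, as required. The main obstacle is making the shortcutting argument go through cleanly: I must maintain the invariant that the walk has endpoints in $e_i$ and interior contained in $S_i$, so that at every stage the latest vertex is interior and the no-disruptive-trio hypothesis applies. The $S_i$-based definition of $e_i$ is exactly what guarantees this separation of positions, so the argument turns entirely on invoking Lemma~\ref{lem:alternativecharacterizationdecomposition} at the outset.
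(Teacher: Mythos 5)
Your proof is correct, but it takes a genuinely different route from the paper's. The paper argues by taking a minimal counterexample: it picks the uncovered edge $e_i = P \cup \{v_i\}$ with the \emph{largest} defining index $i$, uses the iterative definition of $e_i$ together with the inductive fact that all edges present at the moment $e_i$ was created (original atoms and the previously-added $e_{i+1},\ldots,e_\ell$) are already covered by $\mathcal D$ to conclude that $v_i$ shares a bag with each $v' \in P$, then invokes the no-disruptive-trio hypothesis once to conclude $v_1,v_2 \in P$ share a bag, and finishes with conformality. You instead start from the non-iterative characterization of $e_i$ (Lemma~\ref{lem:alternativecharacterizationdecomposition}), build an explicit walk from $v$ to $v'$ whose interior lies in $S_i$ and whose steps are genuine $Q$-edges (hence covered), and repeatedly shortcut the latest-position internal vertex via no-disruptive-trio; you then close with the Helly property of subtrees of a join tree, which is equivalent to conformality. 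The advantage of your approach is that it handles each $e_i$ in isolation, requiring no extremality assumption and no inductive dependence among the added edges, at the price of a longer walk; the paper's approach keeps the ``walk'' to length two but needs the minimal-counterexample structure. One small inaccuracy to fix: you claim every internal vertex has position \emph{strictly} greater than $i$, but $v_i \in S_i$ has position exactly $i$ and may appear as an internal vertex of your walk. This does not break the argument --- whenever $v_i$ is internal, both endpoints have positions strictly less than $i$, so the latest vertex on the walk is still internal and strictly later than both its walk-neighbors --- but the statement of the invariant should be adjusted to ``position at least $i$'' for internal vertices, with the observation that the latest vertex is internal handled in the two cases according to whether $v_i$ is an endpoint or not.
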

	\begin{proof}
		Let $H$ be the hypergraph of $Q$.
		Assume by way of contradiction that there is an edge of the disruption-free decomposition $H_0$ of $Q$ for $L$ that is not contained in any bag of $\mathcal D$.
		Since $\mathcal D$ is a hypertree decomposition of $H$, for every edge $e$ of $H$, the decomposition $\mathcal D$ contains, by definition, a bag $b$ such that $e\subseteq b$. Thus, the edge of $H_0$ not contained in any bag of $\mathcal D$ must be of the form $P\cup\{v\}$ where
		$P$ consists of the preceding neighbors of $v$ at the moment of creation of the edge; here, when we say `preceding', we mean the neighbors that come before $v$ in $L$.
		Consider the non-covered edge $e$ with the largest $v$.
		We show next that every pair of vertices in $e$ appears in a common bag in $\mathcal D$.
		It is known that acyclic hypergraphs are \emph{conformal}~\cite{Brault-Baron16}; that is, any set of pairwise neighbors must be contained in an edge.
		Thus, $\mathcal D$ has a bag containing $e$, which is a contradiction.
		
		Since $e$ is the largest non-covered edge, we know that $\mathcal D$ contains, for all edges $e'$ that were present at the moment of creation of $e$, a bag $b$ with $e'\subseteq b$. Thus, for every vertex $v'$ in $P$, there is a bag of $\mathcal D$ that contains both $v$ and $v'$.
		Now consider $v_1,v_2\in P$.
		If $v_1,v_2$ are not in a common bag $b'$ of $\mathcal D$, then $v_1,v_2,v$ is a disruptive trio of $\mathcal D$, which is a contradiction to the conditions of the lemma. So $v_1, v_2$ must appear in a common bag.
		It follows that the vertices of $e$ are all pairwise neighbors in $\mathcal D$ as we wanted to show.
	\end{proof}

	Due to the monotonicity of the fractional edge cover number, we directly get that the disruption-free decomposition is optimal in the following sense.
	
	\begin{proposition}\label{prop:incompatibility-min-width}
		Let $Q$ be a join query and $L$ be an ordering of its variables. The disruption-free decomposition of $Q$ and $L$ has the minimal fractional width taken over all hypertree decompositions of~$Q$ that have no disruptive trio with respect to $L$.
	\end{proposition}
	
	Note that, in general, finding an optimal fractional hypertree decomposition is known to be hard~\cite{GottlobLPR21}. However, in our case, decompositions are only useful if they eliminate disruptive trios. If we restrict the decompositions in that way, it is much easier to find a good decomposition compared to the general case: since the optimal decomposition is induced by the order, we get it in polynomial time using the procedure of Definition~\ref{def:disruption-free}.
	
	\section{Set-Disjointness-Based Hardness for Self-join free Direct Access}\label{sct:lowerdirect}
	
	In this section, we show lower bounds for lexicographically ranked direct access for all self-join free queries and all variable orders. We first reduce general direct access queries to the special case of star queries, which we introduce in Section~\ref{sec:lowerdirect_star}.
	Then we show that lower bounds for direct access to star queries follow from lower bounds for \pb{$k$-Set-Disjointness}, see Section~\ref{sec:lowerdirect_setdisj}. 
	Later, in Section~\ref{sec:lowersetdisjointness} we prove a lower bound for \pb{$k$-Set-Disjointness} based on the \pb{Zero-Clique} Conjecture, and in Section~\ref{sec:selfjoins} we remove the assumption that $Q$ is self-join free.

	\subsection{From \boldmath$k$-Star Queries to Direct Access} \label{sec:lowerdirect_star}
	
	A crucial role in our reduction is played by the $k$-star query $\qstar_k$:
	\begin{align*}
		\qstar_k(x_1, \ldots, x_k,z) \datarule R_1(x_1, z), \ldots, R_k(x_k, z).
	\end{align*}
	We say that a variable order $L$ is \emph{bad for $\qstar_k$} if $z$ is the last variable in $L$. We will also use the variant $\qstarp_k$ of this query in which the variable $z$ is projected away:
	\begin{align*}
		\qstarp_k(x_1, \ldots, x_k) \datarule R_1(x_1, z), \ldots, R_k(x_k, z).
	\end{align*}

	In the case that the query is acyclic, we can show that the incompatibility number is always an integer, and then the reduction is simpler. We thus start with this case as a warm-up for the general construction.
	
	\begin{lemma}\label{thm:acyclic-fractionalstar}
		Let $Q$ be an acyclic self-join-free join query and $L$ be an ordering of its variables with incompatibility number $\iota$. Then $\iota$ is an integer, and if $Q$ has a lexicographic direct access algorithm according to $L$ with preprocessing time $p(|D|)=\Omega(|D|)$ and access time $r(|D|)$, then so has $\qstar_\iota$ for a bad lexicographic order.
	\end{lemma}
	\begin{proof}
		We use the notation $S_i$ and $e_i$ for $i\in [\ell]$ as in Lemma~\ref{lem:alternativecharacterizationdecomposition}. Recall that $S_i$ is the vertex set of the connected component of $v_i$ in $H[\{v_i, \ldots, v_\ell\}]$ where $H$ is the hypergraph corresponding to $Q$.
		Since every induced subhypergraph of an acyclic hypergraph is also acyclic (this is well known and follows easily from the definition of acyclicity by elimination orders), we have that $H[e_i]$ is acyclic for all $i\in\{1,\ldots,\ell\}$.
		Let $r$ be such that $\rho^*(H[e_r])= \iota = \max_{i\in [\ell]}\left(\rho^*(H(e_i))\right)$.
		In acyclic hypergraphs, the fractional edge cover number is an integer, and there is an \emph{independent set} of this size~\cite{DurandM14}, i.e., a subset of vertices such that every edge contains at most one vertex of this subset.
		Let $\{u_1,\ldots, u_\iota\}$ be such an independent set in $H[e_r]$.
		
		We construct a database $D$ for $Q$, given an input database $D^\star$ for $\qstar_\iota$. 
		We first give roles to the  variables of~$Q$ that correspond to the variables of $\qstar_\iota$.
		First, we assign for every $i\in [\iota]$ the role $x_i$ to every $u_i$. We next add the role $z$ for every $v\in S_r$. Note that when doing so, the variable $v_r$, the only one in both $e_r$ and $S_r$, may have both the role $x_i$ for some $i\in[\iota]$ and the role $z$, while the other variables have at most one role.
		
		To define the relations of $D$, consider an atom $R(\vec{w})$ of $Q$.
		Since the variables with the roles of the form $x_i$ are an independent set, the roles of variables in this atom cannot contain $x_i$ and $x_j$ with $i\neq j$, and so there is an atom $R_{j}(x_{j},z)$ in $Q^\star_\iota$ whose variables contain all roles played by the variables $\vec{w}$.
		We use $R_j^{D^\star}$ to define $R^D$.
		For every tuple $(c_j,c_z)\in R_j^{D^\star}$, we add a tuple $\vec t$ to $R^D$ as follows: for all $w_i$ in $\vec{w}$, if the variable $w_i$ has only the role $x_j$, then $t_i=c_j$; if it has only the role $z$, then $t_i=c_z$; if it has both roles $x_j$ and $z$, then $t_i=(c_j,c_z)$; if it has no role, then $t_i$ is set to a constant $\bot$.
		Note that $|D|=O(|D^\star|)$ and that the construction only requires linear time. 
		
		The construction yields a bijection $\tau$ between $Q(D)$ and $\qstar_k(D^\star)$ as follows:	if a variable with a single role $y$ is assigned a value $c$ in an answer $a\in Q(D)$, then $\tau(a)$ assigned $y$ with $c$; similarly, if a variable with two roles $x_j$, $z$ is assigned a value $(c_j,c_z)$, then $\tau(a)$ assigned $x_j$ with $c_j$ and $z$ with $c_z$.
		This bijection is well-defined because all variables that have the role $z$ must take the same value on the corresponding coordinate by construction, since $S_r$ is connected in $H$ and the construction assigns different coordinates with the role $z$ in the same atom with the same value.
		Moreover, for every atom $R_j(x_j, z)$ of $\qstar_\iota$, there is an atom of $Q$ containing (not necessarily different) variables $v_x$ and $v_z$ such that $v_x$ has the role $x_j$ and $v_z$ has the role $z$. By construction, it then follows that the variables $v_x$ and $v_z$ take values that are consistent with the relation $R_j^{D^\star}$. This directly gives the desired bijection.
		Note that computing $\tau(a)$ given $a \in Q(D)$ can be done in constant time.
		
		Next we claim that this construction yields a bad lexicographic order for $\qstar_\iota(D^\star)$. That is, ordering the solutions to $Q(D)$ by $L$ and then applying $\tau$ on each solution gives the solutions to $\qstar_\iota(D^\star)$ sorted in a bad lexicographic order. This is the case since the variables with the role $z$ appear in $L$ after all variables with a role of the form $x_i$ with some $i\in [\iota]$, with the exception of $v_r$ that may have two roles, in which case the construction assigns it the values corresponding to $z$ after those corresponding to $x_i$.
		
		Using this construction, a lexicographic direct access for $Q$ and $L$ can be used for lexicographic direct access for $\qstar_\iota$ and a bad lexicographic order.
	\end{proof}
	
	\begin{example}
		Let us illustrate the role assignment from the proof of Lemma~\ref{thm:acyclic-fractionalstar} with the query of Example~\ref{ex:1}. The incompatibility number here is $\iota=3$, and it is witnessed by the edge $e_3=\{v_1,v_2,v_3\}$. An independent set of $H[e_3]$ of size $3$ is $\{v_1,v_2,v_3\}$. Thus, the variables $v_1$, $v_2$ and $v_3$ are assigned the roles $x_1$, $x_2$, and $x_3$ respectively. In addition, $v_3$, $v_4$ and $v_5$ (these are the variables of $S_3$) are each assigned the role $z$. Using this role assignment, $Q^\star_3$ is reduced to $Q$.
	\end{example}
	
	In general, the incompatibility number may not be an integer; we will see such a case in Example~\ref{ex:fractionalIncompatible}. We will next show how we can generalize the construction from Lemma~\ref{thm:acyclic-fractionalstar} to also handle this case.
	
	\begin{lemma}\label{lem:fractionalstar}
		Let $Q$ be a self-join-free join query and $L=(v_1,\ldots,v_\ell)$ be an ordering of its variables. Let $\iota$ be the incompatibility number of $Q$ and $L$ and assume $\iota > 1$. If there is an $\eps > 0$ such that for all $\delta > 0$ there is a direct access algorithm for $Q$ and $L$ with preprocessing time $O(|D|^{\iota - \varepsilon})$ and access time $O(|D|^{\delta})$, then there is a $k\in \mathbb{N}$, $k>2$ and $\eps'>0$ such that for all $\delta'>0$ there is a direct access algorithm for $\qstar_k$ with respect to a bad ordering with preprocessing time $O(|D^\star|^{k-\varepsilon'})$ and access time $O(|D^\star|^{\delta'})$.
	\end{lemma}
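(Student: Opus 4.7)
The plan is to locate a $k$-star $\qstar_k$ (for some $k \geq 3$) embedded in $Q$ via the edge of the disruption-free decomposition witnessing $\iota$, and to transform any database $D^\star$ for $\qstar_k$ into a database $D$ for $Q$ so that a direct-access algorithm for $Q$ with order $L$ becomes a direct-access algorithm for $\qstar_k$ with a bad order, with parameters scaled so that shaving $\varepsilon$ off the $\iota$-exponent for $Q$ shaves some $\varepsilon' > 0$ off the $k$-exponent for $\qstar_k$.

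I would start by fixing an index $i^*$ with $\rho^*(H[e_{i^*}]) = \iota$, and setting $v^* := v_{i^*}$, $S^* := S_{i^*}$. By Lemma~\ref{lem:alternativecharacterizationdecomposition}, each $u \in e_{i^*}\setminus\{v^*\}$ precedes $v^*$ in $L$ and is a $Q$-neighbor of some variable of $S^*$, so $u$ lies in some ``witness atom'' of $Q$ that also touches $S^*$. Using the value $\iota = \rho^*(H[e_{i^*}])$ (equivalently, an optimal fractional matching on $H[e_{i^*}]$ of weight $\iota$) together with the self-join-freeness of $Q$ (so distinct witness atoms carry distinct relation symbols), I select $k$ witness atoms $A_1,\ldots,A_k$ and pairwise distinct private variables $x_1,\ldots,x_k \in e_{i^*}\setminus\{v^*\}$ with $x_i \in A_i$ but $x_i \notin A_j$ for $j \neq i$, where $k \geq 3$ is an integer chosen as a function of $\iota$. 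This exhibits a $\qstar_k$-structure inside $Q$ with leaves $x_1, \ldots, x_k$ and hub represented by the whole set $S^*$. Since $S^* \subseteq \{v_{i^*},\ldots,v_\ell\}$, every $S^*$-variable follows every $x_i$ in $L$, so the order induced by $L$ on the star is automatically bad.

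For the reduction, given $D^\star$ over variables $z, x_1, \ldots, x_k$, I construct $D$ for $Q$: encode each value $c$ of $z$ as a tuple $\vec c_c$ over $S^*$ (for instance, $\vec c_c$ with every coordinate equal to $c$), and for each $i$ insert into $A_i$'s relation, for each $(a, c) \in R_i^{D^\star}$, the tuple that takes the value $a$ at the coordinate of $x_i$, the coordinates of $\vec c_c$ at $A_i \cap S^*$, and a fixed dummy constant at every other coordinate of $A_i$. Every other atom of $Q$ is populated with tuples built from the active domains of the $x_i$-, $S^*$- and dummy-coordinates, so that it imposes no constraint on the join. A routine bijection then identifies $Q(D)$ with $\qstar_k(D^\star)$, and since the non-star coordinates of every answer are fixed and every $S^*$-coordinate equals the corresponding $z$-value, lexicographic direct access on $Q(D)$ with $L$ translates into lexicographic direct access on $\qstar_k(D^\star)$ with a bad order. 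Tracking $|D| = O(|D^\star|^{\alpha})$ for the exponent $\alpha$ produced by the construction, and choosing $k$ so that $\alpha\iota = k$, the preprocessing $O(|D|^{\iota-\varepsilon}) = O(|D^\star|^{k - \alpha\varepsilon})$ and access $O(|D|^\delta) = O(|D^\star|^{\alpha\delta})$ yield the claim with $\varepsilon' := \alpha\varepsilon$ and any $\delta'$ obtainable by choosing $\delta$ small enough. The main obstacle is the star-extraction step itself: turning the LP value $\iota > 1$ into an integer $k \geq 3$ of distinct witness atoms with pairwise distinct private leaves, which is where the link (mentioned in the introduction) between the incompatibility number and the embeddable star size must be established; I would do this by a greedy/rounding argument on the support of an optimal fractional matching of $H[e_{i^*}]$, using self-join-freeness so that distinct selected edges yield distinct relation symbols and can be assigned distinct private leaves.
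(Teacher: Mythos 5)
Your overall strategy—embed a star into $Q$ via the edge $e_{i^*}$ that witnesses $\iota$, transform a star database into a $Q$-database, and transfer the direct-access algorithm—matches the paper's. The gap is in the star-extraction step, which you yourself flag as ``the main obstacle,'' and your proposed fix does not work.

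You want $k\geq 3$ pairwise distinct variables $x_1,\ldots,x_k \in e_{i^*}\setminus\{v^*\}$ together with distinct witness atoms $A_1,\ldots,A_k$ such that $x_i \in A_i$ but $x_i\notin A_j$ for $j\neq i$, and you hope a greedy or rounding argument on an optimal LP solution produces this. This integer ``private witness'' structure simply need not exist. Consider $H[e_{i^*}]$ a triangle on $\{a,b,c\}$ with edges $\{a,b\},\{b,c\},\{a,c\}$: here $\rho^* = 3/2 > 1$, yet every atom touching $e_{i^*}$ contains two of $a,b,c$, so no variable has a private atom and no two of them can serve as distinct leaves. More generally, the (integer) independence number of $H[e_{i^*}]$ can be much smaller than $\rho^*(H[e_{i^*}])$, and it is the integer quantity your extraction needs, not the fractional one. (Minor terminological point: the LP dual of fractional edge cover is fractional \emph{independent set}, not fractional matching.) This is precisely why the paper works with a fractional independent set $\phi$ of weight $\iota$, scales by $\lambda$ (the LCM of the denominators) to get integer weights $\phi'(v) = \lambda\phi(v)$, sets $k := \lambda\iota$, and then lets a \emph{single} variable of $Q$ carry \emph{multiple} leaf roles. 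The domain of each variable becomes a product of the domains of its roles, so one variable encodes several leaves at once; this sidesteps the nonexistent integer structure.

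This fractional role assignment is also what makes the size analysis go through. The paper proves each atom carries at most $\lambda$ non-$z$ roles because $\sum_j \phi'(v_{i_j}) = \lambda\sum_j \phi(v_{i_j}) \leq \lambda$ for the variables $v_{i_j}$ of any atom (using that $\phi$ is a fractional independent set), which gives $|D| = O(|D^\star|^\lambda)$ and hence the exponent arithmetic $\lambda\iota = k$. Your proposal just asserts ``$\alpha\iota = k$ for the exponent $\alpha$ produced by the construction,'' but without the fractional-independent-set constraint there is nothing bounding the number of leaf variables in a single atom, and thus nothing tying $\alpha$ to $\iota$ in the required way. Your ``fill the other atoms with everything'' step has the same problem: an atom meeting several leaf variables gets a relation of size $|D^\star|^{\text{(number of leaves in it)}}$, and you need the fractional cover LP to bound that exponent by $\lambda$. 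In the special case where $Q$ is acyclic, $\phi$ can be taken $\{0,1\}$-valued, $\lambda = 1$, and your integer construction does coincide with the paper's (this is exactly Corollary~\ref{cor:isvscover}); but the lemma is stated for general (possibly cyclic) $Q$, where the fractional machinery is essential.
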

	\begin{proof}
		We use a reduction similar to that in the proof of Lemma~\ref{thm:acyclic-fractionalstar}, except we use a \emph{fractional} independent set in order to handle the fractional incompatibility number.
		A \emph{fractional independent set} in a hypergraph $H=(V,E)$ is a mapping $\phi: V\rightarrow [0,1]$ such that for every $e\in E$ we have that $\sum_{v\in e}\phi(v) \le 1$. The weight of $\phi$ is defined to be $\phi(V)= \sum_{v \in V} \phi(v)$. The fractional independent set number of $H$ is defined as $\alpha^*(H) := \max_\phi \phi(V)$ where the maximum is taken over all fractional independent sets of $H$. 
		As already noted in~\cite{GroheM14}, using linear programming duality, we have for every hypergraph $H$ and every vertex set $S$ that $\alpha^*(H[S]) = \rho^*(H[S])$, see e.g.~\cite[Chapter 82.2]{Schrijver03} for details. We use the same notation on join queries, with the meaning of applying it to the underlying hypergraph.
		
		Let $r$ be such that $\rho^*(H[e_r])= \iota = \max_{i\in [\ell]}\left(\rho^*(H(e_i))\right)$. Then, we know that there is a fractional independent set $\phi: e_r\rightarrow [0,1]$ such that $\sum_{v\in e_r}\phi(v) = \iota$. Since $\phi$ is the solution of a linear program with integer coefficients and weights, all values $\phi(v)$ are rational numbers. Let $\lambda$ be the least common multiple of the denominators of $\{\phi(v)\mid v\in e_r\}$. Now define a new weight function $\phi':e_r\rightarrow \{0, \ldots, \lambda\}$ by $\phi'(v) := \lambda \phi(v)$. Let $k:= \lambda \iota$, and consider the query $\qstar_k$.
		
		As in the proof of Lemma~\ref{thm:acyclic-fractionalstar}, we again assign variables of $Q$ with roles that correspond to the variables of $\qstar_k$. The difference in this proof is that now every variable may play several roles of the form $x_i$.
		Indeed, every $v\in e_r$ takes $\phi'(v)$ roles out of the variables $x_1, \ldots, x_k$.
		More specifically, let $\{u_1,\ldots,u_m\}\subseteq e_r$ be the variables that $\phi'$ assigns a non-zero value. Then, for each $i\in[m]$, $u_i$ is assigned the roles $\{x_j|\sum_{t<i}{\phi'(u_t)} < j \leq \sum_{t\leq i}{\phi'(u_t)}\}$.
		Note that the number of roles we distributed so far is indeed $\sum_{v\in e_r} \phi'(v) = \lambda \sum_{v\in e_r} \phi_v = \lambda \iota = k$. We next add the role $z$ for every $v\in S_r$. Note that when doing so, the variable $v_r$, the only one in both $e_r$ and $S_r$, may have both roles of the form $x_i$ and the role $z$.  For an illustration of this role assignment, see Example~\ref{ex:fractionalIncompatible}.
		
		We now construct a database $D$ for $Q$ given an input database $D^\star$ for $\qstar_k$. For every variable $y$ of $\qstar_k$, denote by $\dom^\star(y)$ its active domain in $D^\star$. We fix all variables of $Q$ that have no role to a constant $\bot$, and we will ignore these variables in the rest of the proof, so we only need to consider variables in $e_r \cup S_r$. 
		For each variable $v\in e_r\cup S_r$ we define the domain as follows: let $y_1, \ldots, y_s$ be the roles of $v$ given in the order defined by $L$. Then the domain of $v$ is $\dom(v):=\dom^\star(y_1)\times \ldots \times \dom^\star(y_s)$. We order $\dom(v)$ lexicographically.
		To define the relations of $D$,
		let $x_{j_1}, \ldots, x_{j_t}$ be the roles of the form $x_i$ played by variables in $R(v_{i_1}, \ldots, v_{i_s})$. Then we compute the sub-join $Q_J(D^\star)$ where $Q_J(x_{j_1}, \ldots, x_{j_t},z) \datarule R_{j_1}(x_{j_1},z), \ldots,$ $R_{j_t}(x_{j_t}, z)$. Note that there is a function $\nu : Q_J(D^\star)\rightarrow \dom(v_{i_1}) \times \ldots \times \dom(v_{i_s})$ that consists of ``packing'' the values for the different variables of $\qstar_k$ into the variables of $Q$ according to the roles in the following sense: for every tuple $\vec t$, the function $\nu$ maps each variable $v_i$ with role set $\mathcal R$ to the tuple we get from $\vec t$ by deleting the coordinates not in $\mathcal R$. The relation $R^D$ is then simply $\nu(Q_J(D^\star))$.
		
		The construction yields a bijection $\tau$ between $Q(D)$ and $\qstar_k(D^\star)$: if a coordinate corresponding to a role $y$ is assigned a value $c$ in an answer $a\in Q(D)$, then $\tau(a)$ assigns $c$ to $y$. The same arguments given in the proof of Lemma~\ref{thm:acyclic-fractionalstar} show that $\tau$ is well defined and is indeed a bijection and that this construction yields a bad lexicographic order.
		Note that the bijection $\tau$ can again be computed in constant time.
		
		It remains to analyze the time complexity of the construction of $D$.
		We show that the size of $D$ and the time required for its construction are in $O(|D^\star|^\lambda)$.
		For every atom $A$ in $Q$, its variables in total have at most $\lambda$ roles of the form $x_i$. To see this, let $v_{i_1}, \ldots, v_{i_s}$ be the variables of $A$. Then the overall number of non-$z$-roles is
		$
		\sum_{j=1}^s \phi'(v_{i_j}) = \sum_{j=1}^s \lambda \phi(v_{i_j}) \le \lambda,
		$
		where the inequality comes from the fact that $\phi$ is a fractional independent set of $H$ and $\{v_{i_1}, \ldots, v_{i_s}\}$ is an edge of $H$. Hence, the join $Q_J(D^\star)$ involves at most $\lambda$ atoms of $\qstar_k$, and it can be computed in time $O(|D^\star|^\lambda)$.
		
		Assume that for some $\eps>0$ and all $\delta>0$ there is a direct access algorithm for $Q$ and $L$ with preprocessing time $O(|D|^{\iota - \varepsilon})$ and access time $O(|D|^{\delta})$.
		We set $\eps' := \min(\lambda
		\eps, \lambda (\iota-1))$. Note that indeed $\eps'> 0$ since  $\iota>1$. Using the facts that $|D|=O(|D^\star|^\lambda)$ and $k=\lambda\iota$, we get that the preprocessing is 
		$
		O(|D^\star|^\lambda + |D|^{\iota - \varepsilon})
		= O(|D^\star|^{\lambda\iota - \lambda(\iota-1)} + |D^\star|^{\lambda \iota - \lambda \varepsilon}) = O(|D^\star|^{k - \varepsilon'}).
		$
		Then, given $\delta'$, we set $\delta := \delta' / \lambda$. We get that the access time is $O(|D|^\delta) = O(|D^\star|^{\lambda \delta}) = O(|D^\star|^{\delta'})$.
	\end{proof}
	\begin{example}\label{ex:fractionalIncompatible}
		Let us illustrate the role assignment from the proof of Lemma~\ref{lem:fractionalstar}. Consider the query obtained from that of Example~\ref{ex:1} by adding the atoms $R_5(v_1,v_2),R_6(v_2,v_3),R_7(v_1,v_3)$. This query does not have disruptive trios, but it is cyclic. The edges $e_i$ introduced by the distruption-free decomposition are the same as in the previous example. The incompatibility number here is $\iota=\frac{3}{2}$.
		One witness for this incompatibility number is the edge $e_3=\{v_1,v_2,v_3\}$ using the fractional independent set of $H[e_3]$ that assigns each of its vertices a weight of $\frac{1}{2}$. As a result, we take $\lambda=2$, and the variables $v_1$, $v_2$ and $v_3$ are assigned the roles $\{x_1,x_2\}$, $\{x_3,x_4\}$, and $\{x_5,x_6\}$ respectively. In addition, $v_3$, $v_4$ and $v_5$ (these are the variables of $S_3$) are each assigned the role $z$. Using this role assignment, $Q^\star_6$ is reduced to $Q$.
		An alternative choice of witness for the incompatibility number is the edge $e_5=\{v_1,v_3,v_5\}$ using the fractional independent set of $H[e_5]$ that assigns each of its vertices a weight of $\frac{1}{2}$.
		Here the variables $v_1$, $v_3$ and $v_5$ are assigned the roles $\{x_1,x_2\}$, $\{x_3,x_4\}$, and $\{x_5,x_6,z\}$ respectively, while the variables $v_2$ and $v_4$ have no roles. This choice gives another way of reducing $Q^\star_6$ to $Q$.
	\end{example}

	\subsection{From Projected Stars to Star Queries} \label{sec:testing}
	
	\begin{proposition}\label{prop:testing}
		Let $L$ be a bad order on $x_1, \ldots, x_k,z$. If there is an algorithm that solves the direct access problem for $\qstar_k(x_1, \ldots, x_k, z)$ and $L$ with preprocessing time $p(|D|)$ and access time $r(|D|)$, then there is an algorithm for the testing problem for $\qstarp_k(x_1, \ldots, x_k)$ with preprocessing time $p(|D|)$ and query time $O(r(|D|)\log(|D|))$.
	\end{proposition}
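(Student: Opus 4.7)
The plan is to reduce testing for $\qstarp_k$ to a binary search over the direct access oracle for $\qstar_k$. The crucial structural observation is that, because the order $L$ is bad, $z$ appears last in $L$, so the lex comparison of two answers $(a_1,\ldots,a_k,z)$ and $(a_1',\ldots,a_k',z')$ is determined entirely by the $x$-coordinates as long as $(a_1,\ldots,a_k)\neq(a_1',\ldots,a_k')$. Hence, if we let $L'$ denote the restriction of $L$ to $\{x_1,\ldots,x_k\}$, the sequence of projections of the sorted answers of $\qstar_k$ onto $(x_1,\ldots,x_k)$ is weakly increasing with respect to $L'$, and answers sharing the same $(x_1,\ldots,x_k)$ occupy a contiguous block of indices.

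Given this, the preprocessing is just the preprocessing of the assumed direct access algorithm, costing $p(|D|)$. To answer a testing query $(a_1,\ldots,a_k)$, I would perform a standard binary search for an index $j$ whose $j$th answer projects to $(a_1,\ldots,a_k)$: at each step probe the middle index $m$, invoke direct access to obtain the $m$th answer (or an out-of-bounds response), project onto $(x_1,\ldots,x_k)$, and compare with $(a_1,\ldots,a_k)$ under $L'$; move right, move left, or report \emph{found}, respectively. If direct access returns out-of-bounds, treat the probe as being past the end and move left. If the search terminates without a hit, report \emph{not an answer}.

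For the search range, I would start with lower bound $1$ and an a priori upper bound derived from a polynomial bound on $|\qstar_k(D)|$ (e.g.\ $|D|^k$, since the fractional edge cover number of the star is $k$, so Theorem~\ref{thm:worst-case-joins} or the AGM bound applies); the out-of-bounds responses take care of the actual number of answers. The binary search then uses $O(\log(|D|^k)) = O(\log|D|)$ probes, each costing $r(|D|)$ access time plus $O(k)=O(1)$ comparison time, for a total query time of $O(r(|D|)\log|D|)$, as required.

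The only real thing to justify is the monotonicity claim above, and the main subtlety is checking that my comparison under $L'$ actually tracks what the direct access oracle's ordering does: this is immediate from $z$ being last in $L$ since the lex comparison under $L$ ignores the $z$-coordinate whenever the $x$-projections differ, and is otherwise a tie that keeps equal projections adjacent in the enumeration. Everything else is a routine binary search, so I expect no further obstacle.
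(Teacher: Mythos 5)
Your proof is correct and takes essentially the same approach as the paper: observe that because $z$ is last in $L$, answers sharing the same $(x_1,\ldots,x_k)$-projection occupy a contiguous block in the sorted order, and then binary search over indices using the direct-access oracle, giving $O(\log|D|)$ probes of cost $r(|D|)$ each. The only difference is that you spell out the search-range bound and the handling of out-of-bounds probes, which the paper's terser proof leaves implicit.
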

	\begin{proof}
		If an answer $(a_1, \ldots, a_k)$ is in $\qstarp_k(D)$, then answers of the form $(a_1, \ldots, a_k, b)$ (for any $b$) form a contiguous sequence in the set $\qstar_k(D)$ ordered according to the $L$-lexicographic order, since the position of $z$ is the last variable in $L$. Thus, a simple binary search using the direct access algorithm allows to test whether such a tuple $(a_1, \ldots, a_k,b)\in \qstar_k(D)$ exists, using a logarithmic number of direct access calls.
	\end{proof}
	
	\subsection{From Set-Disjointness to Projected Stars} \label{sec:lowerdirect_setdisj}
	
	We observe (Lemma~\ref{lem:reductiontosetdisjointness}) that the testing problem for the projected star query $\qstarp_k$ is equivalent to the following problem:
	
	\begin{definition} \label{def:setdisjointness}
		In the \pb{$k$-Set-Disjointness} problem, we are given an instance $\I$ consisting of a universe $U$ and families $\mathcal{A}_1, \ldots, \mathcal{A}_k \subseteq 2^U$ of subsets of $U$. We denote the sets in family $\A_i$ by $S_{i,1},\ldots,S_{i,|\A_i|}$.
		The task is to preprocess $\I$ into a data structure that can answer queries of the following form: Given indices $j_1,\ldots,j_k$, decide whether $S_{1,j_1} \cap \ldots \cap S_{k,j_k} = \emptyset$.
		
		We denote the input size by $\|\I\| := \sum_i \sum_{S \in \A_i} |S|$.
	\end{definition}
	
	\begin{example}
		Since we will see several problems later on that have the same type of inputs, let us take some time for an example. Fix the domain $U:=\{1, \ldots, 5\}$ and say $k=3$. Let $\mathcal A_1$ consist of the sets $S_{1,1}:=\{1,3,5\}$ and $S_{1,2}:=\{1,2,4\}$. Let $\mathcal A_2$ consist of $S_{2,1}:=\{1,4\}$, $S_{2,2}:=\{2,4\}$, and $S_{2,3}:=\{1,2,3,4,5\}$. Finally, let $\mathcal A_3$ consist of $S_{3,1}:= \{3,4,5\}$ and $S_{3,2}:=\{4\}$. The size of the resulting instance $\I$ is then~$\|\I\| = 19$.
		
		On input $(2,3,2)$ we have to consider $S_{1,2}\cap S_{2,3}\cap S_{3,2} = \{4\}$, so an algorithm for \pb{$k$-Set-Disjointness} has to report that the queried set intersection is non-empty. For the input $(1,1,1)$, we have that $S_{1,1}\cap S_{2,1}\cap S_{3,1}= \emptyset$, so the correct answer is that the queried set intersection is empty.
	\end{example}
	
	\begin{lemma}\label{lem:reductiontosetdisjointness}
		If $\qstarp_k$ has a testing algorithm with preprocessing time $p(|D|) = \Omega(|D|)$ and access time $r(|D|)$, then \pb{$k$-Set-Disjointness} has an algorithm with preprocessing time $p(\|\I\|)$ and query time $r(\|\I\|)$.
	\end{lemma}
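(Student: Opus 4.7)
The plan is to directly encode a \pb{$k$-Set-Disjointness} instance as a database for the projected star query $\qstarp_k$, so that a testing query on $\qstarp_k$ answers precisely the corresponding disjointness query (up to negation). The encoding is the obvious one: for each family $\A_i$ and each set $S_{i,j} \in \A_i$ and each element $u \in S_{i,j}$, put the tuple $(j,u)$ into the relation $R_i^D$. Here $j$ ranges over indices of sets in $\A_i$ (constants of a fresh sort) and $u$ ranges over universe elements.

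With this construction, a tuple $(j_1, \ldots, j_k, u)$ is an answer to $\qstar_k$ on $D$ if and only if $u \in S_{i,j_i}$ for every $i \in \{1,\ldots,k\}$, i.e., $u \in \bigcap_{i=1}^k S_{i,j_i}$. Projecting away $z$, the tuple $(j_1,\ldots,j_k)$ is an answer to $\qstarp_k$ on $D$ iff $\bigcap_{i=1}^k S_{i,j_i} \ne \emptyset$. Hence a testing query for the tuple $(j_1,\ldots,j_k)$ on $\qstarp_k(D)$ returns the negation of the \pb{$k$-Set-Disjointness} query on the indices $j_1,\ldots,j_k$.

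For the time bounds, note that $|D| = \sum_{i=1}^k |R_i^D| = \sum_{i=1}^k \sum_{S\in\A_i} |S| = \|\I\|$, and the database can clearly be built in time $O(\|\I\|)$ (this is absorbed in the preprocessing). Running the assumed testing algorithm on $D$ then yields preprocessing time $p(\|\I\|)$ and, per disjointness query, a single testing call costing $r(\|\I\|)$.

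There is essentially no obstacle here: the only subtlety is making sure that the testing primitive really suffices (as opposed to needing direct access), but this is exactly what the hypothesis provides, and the construction is tailored so that one test per disjointness query is enough. The previous proposition handles the (slightly more delicate) bridge from direct access on $\qstar_k$ to testing on $\qstarp_k$, so here we can appeal to testing directly.
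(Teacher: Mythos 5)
Your proposal is correct and coincides with the paper's own proof: both encode each family $\A_i$ as the relation $R_i^D = \{(j,u) \mid u \in S_{i,j}\}$, observe that $(j_1,\ldots,j_k) \in \qstarp_k(D)$ iff $\bigcap_i S_{i,j_i} \ne \emptyset$, and note $|D| = \|\I\|$. Nothing is missing.
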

	\begin{proof}
		For each family $\A_i$, consider the relation \[R_i^D:= \{(j, v)\mid v \in S_{i, j}\}.\] 
		Let $D$ be the resulting database, and note that $|D| = \|\I\|$ and that $D$ is computable from $\I$ in linear time. The equivalence $v\in S_{i,j}\Leftrightarrow (j,v)\in R_i^D$ implies the equivalence $v\in S_{1, j_1}\cap \ldots \cap S_{k, j_k}\Leftrightarrow D\models R_1(j_1, v) \land \ldots \land R_k(j_k, v)$, and then for every query $(j_1, \ldots, j_k)$ of the \pb{$k$-Set-Disjointness} problem, we have that $S_{1, j_1}\cap \ldots \cap S_{k, j_k} \ne \emptyset$ if and only if $(j_1, \ldots, j_k)\in \qstarp_k(D)$.
	\end{proof}

	\section{Hardness of Set Disjointness} \label{sec:lowersetdisjointness}
	
	In this section, we will show lower bounds for \pb{$k$-Set Disjointness}, as defined in  Definition~\ref{def:setdisjointness}. In combination with the reductions from the previous section, this will give us lower bounds for direct access in Section~\ref{sec:together}. The main result of this section is the following.
	
	\begin{theorem}\label{thm:lowerksetdisjointness}
		If there is $k\in \mathbb{N}, k \ge 2$ and $\varepsilon >0$ such that for all $\delta > 0$ there is an algorithm for \textsf{$k$-Set-Disjointness} that, given an instance $\mathcal{I}$, answers queries in time $O(\|\mathcal{I}\|^\delta)$ after preprocessing time $O(\|\mathcal{I}\|^{k-\varepsilon})$, then the \textsf{Zero-$(k+1)$-Clique} Conjecture is false.
	\end{theorem}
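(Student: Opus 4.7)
The plan is to reduce \pb{Zero-$(k+1)$-Clique} to $k$-\pb{Set-Disjointness}. Given a \pb{Zero-$(k+1)$-Clique} instance on an $n$-vertex graph with integer weights in $\{-n^c,\ldots,n^c\}$, I first apply standard color-coding to assume the graph is $(k+1)$-partite with parts $V_1,\ldots,V_{k+1}$ of size $n$, so the goal becomes finding $(v_1,\ldots,v_{k+1}) \in V_1 \times \cdots \times V_{k+1}$ forming a clique with $\sum_{i<j} w(v_i,v_j)=0$. I then reduce the effective weight range via random hashing modulo a suitably chosen prime $p$: any true zero clique remains zero mod $p$, while the expected number of spurious mod-zero cliques is $O(n^{k+1}/p)$, each verifiable against the original weights in $O(1)$ time.

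I would construct a $k$-\pb{Set-Disjointness} instance $\mathcal{I}$ whose families $\mathcal{A}_1,\ldots,\mathcal{A}_k$ are indexed essentially by $V_1,\ldots,V_k$ (with one family carrying an auxiliary ``target'' coordinate), and whose universe encodes a candidate $v_{k+1} \in V_{k+1}$ together with a compact hashed witness of the running weight sum. The sets form a prefix-sum chain: $\mathcal{A}_1$ seeds the witness with $w(v_1,v_{k+1}) \bmod p$, each intermediate family adds $w(v_i,v_{k+1}) \bmod p$, and the last family $\mathcal{A}_k$ is additionally indexed by a target $T \in \mathbb{Z}_p$ that closes the chain. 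When testing whether $(v_1,\ldots,v_k)$ extends to a zero clique, I compute at query time $T := -\sum_{i<j\le k} w(v_i,v_j) \bmod p$ in $O(k^2)=O(1)$ time from the pairwise weights among the first $k$ vertices and issue the corresponding $k$-\pb{Set-Disjointness} query; by construction, the intersection is nonempty exactly when some $v_{k+1}$ completes a mod-zero $(k+1)$-clique.

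The algorithm then iterates over all $n^k$ tuples in $V_1 \times \cdots \times V_k$, issues one set-disjointness query per tuple, and verifies each positive in $O(1)$ time. Picking parameters so that $\|\mathcal{I}\| = O(n^{(k+1)/k + o(1)})$, the preprocessing cost is $O(\|\mathcal{I}\|^{k-\varepsilon}) = O(n^{k+1-\varepsilon(1+1/k)}) = O(n^{k+1-\varepsilon'})$ for some $\varepsilon' > 0$, and the total query cost is $O(n^k \cdot \|\mathcal{I}\|^\delta) = O(n^{k + (k+1)\delta/k})$, which is likewise $O(n^{k+1-\varepsilon'})$ once $\delta$ is picked small enough as a function of $\varepsilon'$, $k$ (this is exactly where the ``for all $\delta > 0$'' quantifier in the hypothesis is needed). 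Adding the verification cost $O(n^{k+1}/p)$ with $p = n^{\Theta(\varepsilon')}$, we obtain a randomized $O(n^{k+1-\varepsilon'})$ algorithm for \pb{Zero-$(k+1)$-Clique}, contradicting the conjecture; randomness enters only through $p$ and is absorbed by standard boosting.

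The main obstacle I expect is fitting the prefix-sum construction into the tight budget $\|\mathcal{I}\| \le n^{(k+1)/k}$: a naive encoding with the full $(k{-}1)$-dimensional sum witness in the universe inflates each set by a factor of $p^{k-2}$ from ``free'' coordinates, and the universe by a factor of $p^{k-1}$. The cure is to keep only a \emph{single} compact hashed partial-sum coordinate and to partition $V_{k+1}$ into coarse random blocks of size $n^{1-1/k}$, so that each set $S_{i,v_i}$ holds only $O(n^{1/k})$ entries while the chain condition is still forced on intersection. Verifying that this compressed encoding continues to detect mod-zero cliques without too many spurious intersections, and amplifying success by repeating the hash $O(\log n)$ times, is the technically delicate ingredient; everything else in the proof is routine parameter tuning.
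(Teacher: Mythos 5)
The core difficulty that your proposal does not resolve, and that the paper's proof is built to resolve, is this: the condition $\sum_{i=1}^k w(v_i,v_{k+1}) = T$ is a \emph{global} additive constraint on the universe element $v_{k+1}$, while a $k$-\pb{Set-Disjointness} query can only express a \emph{conjunction of local} membership constraints, one per set $S_{i,v_i}$. Your ``prefix-sum chain'' needs to carry $k-1$ running partial sums in the universe element, which (as you compute) inflates $\|\I\|$ by a factor $p^{k-2}$. The proposed repair of keeping ``a single compact hashed partial-sum coordinate'' and partitioning $V_{k+1}$ into random blocks is not a construction: a set intersection is stateless, so there is no mechanism by which $S_2$ can ``add'' $w(v_2,v_{k+1})$ to a coordinate that $S_1$ already fixed. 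This is precisely what the paper circumvents by abandoning exact sums entirely: it applies a telescoping random re-weighting $w \mapsto w'$ (equation~(\ref{eq:def_new_weights})) so that $w'(v_1,\ldots,v_{k+1}) = x\cdot w(v_1,\ldots,v_{k+1})$, then partitions $\Fp$ into $n^\rho$ intervals and creates one instance per \emph{weight interval tuple} $(I_0,\ldots,I_k)\in S$, so that each set constraint is local: $S_v := \{u : w'(v,u)\in I_i\}$. The false-positive bound (Claim~\ref{clm:secondcutoff}) then rests on a bijection argument showing $(w'_1,\ldots,w'_k)$ is uniformly random in $\Fp^k$ when $(v_1,\ldots,v_k,u)$ is not a zero-clique; naive hashing of $w$ modulo a random small prime $p$ does not give that independence across the $k$ per-edge conditions.

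A second genuine gap is your claim that each positive query is ``verifiable in $O(1)$ time.'' A $k$-\pb{Set-Disjointness} oracle returns only a Boolean, not the witness $v_{k+1}$, so given a positive you would have to search over $n$ candidates. This is why the paper does \emph{not} reduce directly to \pb{$k$-Set-Disjointness}: it first proves hardness of \pb{$k$-Set-Intersection} with a $T$-element listing guarantee (Theorem~\ref{thm:intersection}), then reduces \pb{Unique-$k$-Set-Intersection} to it via subsampling (Lemma~\ref{lem:tounique}) and \pb{Unique-$k$-Set-Intersection} to \pb{$k$-Set-Disjointness} via bit-by-bit queries (Lemma~\ref{lem:uniquetodisjointness}), finally translating $n$-based bounds to $\|\I\|$-based bounds (Lemma~\ref{lem:densetosparse}). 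This chain exists exactly to recover witnesses and control the input size, and your proposal has no substitute for it. As written, the proposal leaves its ``technically delicate ingredient'' unproven, and that ingredient is where the proof actually lives.
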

	
	\subsection*{Case \boldmath$k=2$}
	
	Kopelowitz, Pettie, and Porat~\cite{KopelowitzPP16} established hardness of \textsf{2-Set-Disjointness} under the \textsf{3-SUM} Conjecture, and Vassilevska Williams and Xu~\cite{WilliamsX20} showed that the same hardness holds under the \textsf{Zero-3-Clique} Conjecture (and thus also under the \textsf{APSP} Conjecture).
	
	\begin{theorem}[Corollary 3.12 in \cite{WilliamsX20}]\label{thm:3sum}
		Assuming the \textsf{Zero-3-Clique} Conjecture (or the \textsf{$3$-SUM} or \textsf{APSP} Conjecture), for any $0<\gamma < 1$ and any $\varepsilon > 0$, on \textsf{2-Set-Disjointness} instances $\mathcal{I}$ with $n$ sets, each of size $O(n^{1-\gamma})$, and universe size $|U| = \Theta(n^{2-2\gamma})$, the total time of preprocessing and $\Theta(n^{1+\gamma})$ queries cannot be $O(n^{2-\varepsilon})$.
	\end{theorem}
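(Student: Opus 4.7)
The plan is to split the proof into the base case $k=2$ and the general case $k\ge 3$, which rely on rather different ingredients.

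For $k = 2$, the statement is essentially a reformulation of Theorem~\ref{thm:3sum}. Assume for contradiction a 2-Set-Disjointness algorithm with preprocessing $O(\|\mathcal{I}\|^{2-\varepsilon})$ and query time $O(\|\mathcal{I}\|^\delta)$ for every $\delta > 0$. Apply it to the hard instances guaranteed by Theorem~\ref{thm:3sum} with parameter $\gamma$ close to $1$: there $\|\mathcal{I}\| = \Theta(n^{2-\gamma})$, so preprocessing takes $n^{(2-\gamma)(2-\varepsilon)}$, which is $O(n^{2-\varepsilon''})$ for some $\varepsilon'' > 0$ provided $\gamma$ is close enough to~$1$; the $\Theta(n^{1+\gamma})$ queries together take $n^{1+\gamma+(2-\gamma)\delta}$, also $O(n^{2-\varepsilon''})$ for $\delta$ sufficiently small. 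Thus the combined time is strictly sub-quadratic, contradicting Theorem~\ref{thm:3sum}.

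For $k \ge 3$, I would reduce \textsf{Zero-$(k+1)$-Clique} to \textsf{$k$-Set-Disjointness} in the spirit of the Kopelowitz--Pettie--Porat and Williams--Xu reductions for $k=2$. Given a \textsf{Zero-$(k+1)$-Clique} instance on $n$ vertices, standard color-coding (with $O(\log n)$ repetitions) reduces to detecting a \emph{colorful} zero-$(k+1)$-clique in a random balanced partition $V_1,\ldots,V_{k+1}$. I would then partition $V_{k+1}$ into blocks of size $n^{1-\gamma}$ and build one \textsf{$k$-Set-Disjointness} instance per block, with family $\mathcal{A}_i$ indexed by $v_i \in V_i$ for $i=1,\ldots,k$, and the universe encoding pairs $(v_{k+1}, \text{weight-certificate})$, designed so that $\bigcap_i S_{i,v_i}$ is non-empty exactly when some $v_{k+1}$ in the block extends $(v_1,\ldots,v_k)$ to a zero-$(k+1)$-clique. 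Enumerating all $O(n^k)$ candidate tuples $(v_1,\ldots,v_k)$---after filtering to those forming a $k$-clique in $V_1\cup\cdots\cup V_k$---over all blocks then decides the original instance.

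The main technical obstacle is encoding the sum-to-zero weight condition via plain set-intersection non-emptiness without blowing up $\|\mathcal{I}\|$. My plan is to use a standard hashing device: pick a random prime $p$ larger than any possible clique weight (so that $w\equiv 0 \pmod{p}$ iff $w=0$), choose random shifts $s_1,\ldots,s_k\in\mathbb{Z}_p$ summing to $0$, and extend each element of the universe by a $\mathbb{Z}_p$-coordinate recording the hashed edge-weight contribution $w(v_i,v_{k+1}) - s_i$, so that the $\mathbb{Z}_p$-sum target (obtained by folding in the precomputed inner-clique weight $\sum_{i<j\le k} w(v_i,v_j)$) is enforced by a small ``consistency gadget'' within the sets; $\polylog(n)$ independent repetitions amplify correctness and suppress spurious intersections. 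For the parameter analysis, each block yields an instance with $\|\mathcal{I}\| = O(n^{2-\gamma})$, so summed over the $n^\gamma$ blocks the assumed preprocessing costs $n^{\gamma+(2-\gamma)(k-\varepsilon)}$, and the queries contribute $n^{k+\gamma+(2-\gamma)\delta}$ total time; taking $\gamma$ close to $1$ (so that $\gamma+(2-\gamma)(k-\varepsilon) = k+1-\varepsilon+(1-\gamma)(k-1-\varepsilon)$ stays below $k+1-\varepsilon'$ for some $\varepsilon'>0$) and $\delta$ small, both bounds are $O(n^{k+1-\varepsilon'})$, contradicting the \textsf{Zero-$(k+1)$-Clique} Conjecture. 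Making the hashing-based encoding of the sum-to-zero constraint and its failure analysis rigorous---without inflating either set sizes or the universe---will be the most delicate part of the argument.
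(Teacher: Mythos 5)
You have misidentified the statement. Theorem~\ref{thm:3sum} is a concrete lower bound for \textsf{2-Set-Disjointness} with fixed instance parameters ($n$ sets of size $O(n^{1-\gamma})$, universe size $\Theta(n^{2-2\gamma})$, and $\Theta(n^{1+\gamma})$ queries) under the \textsf{Zero-3-Clique} (or \textsf{3-SUM}/\textsf{APSP}) Conjecture; it is imported verbatim from Williams and Xu (Corollary~3.12) and the paper offers no proof of it. Your ``$k=2$'' branch does not prove this statement --- it \emph{invokes} it (``apply it to the hard instances guaranteed by Theorem~\ref{thm:3sum}''), which is circular; what that branch actually establishes is the corollary that follows the theorem in the paper, namely that \textsf{2-Set-Disjointness} cannot have preprocessing $O(\|\mathcal{I}\|^{2-\varepsilon})$ and query time $O(\|\mathcal{I}\|^{\delta})$ for all $\delta>0$. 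Your ``$k\ge 3$'' branch concerns \textsf{$k$-Set-Disjointness} for larger $k$, which lies entirely outside the scope of the statement; it is a sketch aimed at Theorem~\ref{thm:lowerksetdisjointness}, and even for that target it departs from the paper's route, which proceeds via \textsf{$k$-Set-Intersection} with output bound $T$, then \textsf{Unique-$k$-Set-Intersection} via subsampling and isolation, then a bit-by-bit recovery down to \textsf{$k$-Set-Disjointness}, rather than a one-shot hashing gadget inside the sets.

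A genuine proof of Theorem~\ref{thm:3sum} would have to be a direct reduction from \textsf{Zero-3-Clique} to \textsf{2-Set-Disjointness} realizing exactly the stated instance shape and query count: hash edge weights so that a zero triangle forces a prescribed interval pattern, encode for each vertex $u$ the set of third vertices whose hashed edge weight to $u$ lands in the right interval, issue one disjointness query per candidate pair and interval choice, and then verify that the sets have size $O(n^{1-\gamma})$, the universe has size $\Theta(n^{2-2\gamma})$, the number of queries is $\Theta(n^{1+\gamma})$, and false positives are rare enough that sub-quadratic total time would refute the conjecture. None of this appears in your write-up: the proposal never engages with the parameters $\gamma$, $|U|$, or the query count at all. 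If your intended target was Theorem~\ref{thm:lowerksetdisjointness}, you should say so explicitly and then supply the missing correctness and size analysis of your hashing gadget; as a proof of Theorem~\ref{thm:3sum} itself, the argument is circular.
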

	
	This implies Theorem~\ref{thm:lowerksetdisjointness} for $k=2$.
	
	\begin{corollary}
		If there is $\varepsilon >0$ such that for all $\delta > 0$ there is an algorithm for \textsf{2-Set-Disjointness} that, given an instance $\mathcal{I}$, answers queries in time $O(\|\mathcal{I}\|^\delta)$ after preprocessing time $O(\|\mathcal{I}\|^{2-\varepsilon})$, then the \textsf{Zero-3-Clique} Conjecture is false (and thus also the \textsf{3-SUM} and \textsf{APSP} Conjectures are false).
	\end{corollary}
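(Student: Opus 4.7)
The plan is to derive the contradiction by instantiating the assumed \textsf{2-Set-Disjointness} algorithm on the hard instances provided by Theorem~\ref{thm:3sum} and showing that, for a suitable choice of the free parameter $\gamma$, the total runtime beats the lower bound of the latter.

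First, fix the $\varepsilon > 0$ provided by the assumption and pick a parameter $\gamma \in (0,1)$ (to be tuned below). Theorem~\ref{thm:3sum} gives a family of \textsf{2-Set-Disjointness} instances with $n$ sets, each of size $O(n^{1-\gamma})$, and universe size $\Theta(n^{2-2\gamma})$; the input size of each instance is therefore $\|\mathcal{I}\| = \Theta(n \cdot n^{1-\gamma}) = \Theta(n^{2-\gamma})$. On these instances, the assumed algorithm preprocesses in time $O(\|\mathcal{I}\|^{2-\varepsilon}) = O(n^{(2-\gamma)(2-\varepsilon)})$, and for every $\delta > 0$ answers each of the $\Theta(n^{1+\gamma})$ queries in time $O(\|\mathcal{I}\|^\delta) = O(n^{(2-\gamma)\delta})$, for a total query cost of $O(n^{1+\gamma + (2-\gamma)\delta})$.

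Next, I would choose $\gamma$ and $\delta$ so that both exponents drop strictly below $2$. The preprocessing exponent is $(2-\gamma)(2-\varepsilon)$, which is $< 2$ iff $\gamma > (2 - 2\varepsilon)/(2-\varepsilon)$; since the right-hand side is strictly less than $1$ for every $\varepsilon > 0$, such a $\gamma < 1$ exists. Fix such a $\gamma$; then the query exponent $1 + \gamma + (2-\gamma)\delta$ is $< 2$ iff $\delta < (1-\gamma)/(2-\gamma)$, and we may shrink $\delta$ as needed since the assumption gives us an algorithm for every $\delta > 0$. For these choices there is an $\varepsilon' > 0$ such that the total runtime is $O(n^{2-\varepsilon'})$, which contradicts the lower bound of Theorem~\ref{thm:3sum} and therefore refutes the \textsf{Zero-3-Clique} Conjecture (and, by the implications noted in the paper, the \textsf{3-SUM} and \textsf{APSP} Conjectures as well).

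The only nontrivial step is verifying that the two constraints on $\gamma$ and $\delta$ are simultaneously satisfiable; this is not a deep obstacle but is the one place where care is needed. Everything else is bookkeeping, translating between the two natural size parameters $n$ and $\|\mathcal{I}\|$ and unfolding the quoted hardness statement.
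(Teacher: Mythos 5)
Your proof is correct and follows essentially the same strategy as the paper's: instantiate the assumed \textsf{2-Set-Disjointness} algorithm on the hard instances from Theorem~\ref{thm:3sum} and choose the parameters so that both the preprocessing exponent and the total-query exponent fall strictly below $2$. The only difference is bookkeeping: the paper fixes $\delta := \varepsilon/10$ and then sets $\gamma := 1-4\delta$, whereas you tune $\gamma$ first (so that $(2-\gamma)(2-\varepsilon)<2$) and then shrink $\delta$; both routes land on the same contradiction. One minor slip: since the sets in Theorem~\ref{thm:3sum}'s instances have size $O(n^{1-\gamma})$ rather than $\Theta(n^{1-\gamma})$, you should write $\|\mathcal{I}\| = O(n^{2-\gamma})$ rather than $\Theta(n^{2-\gamma})$; this does not affect the argument since you only use the bound to upper-bound the algorithm's runtime.
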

	\begin{proof}
		Assume that we can solve \textsf{2-Set-Disjointness} with preprocessing time $O(\|\mathcal{I}\|^{2-\varepsilon})$ and query time $O(\|\mathcal{I}\|^\delta)$ where $\delta := \eps/10$. 
		Set $\gamma := 1 - 4\delta$, and consider an instance $\mathcal{I}$ consisting of $n$ sets, each of size $O(n^{1-\gamma})$, so the input size is $\|\mathcal{I}\| = O(n^{2-\gamma})$. Over $\Theta(n^{1+\gamma})$ queries, the total query time is $O(n^{1+\gamma} \|\I\|^{\delta}) \le O(n^{1+\gamma+2\delta}) = O(n^{2-2\delta})$. The preprocessing time on $\mathcal{I}$ is $O(\|\mathcal{I}\|^{2-\varepsilon}) = O(n^{(2-\gamma)(2-10\delta)}) = O(n^{(1+4\delta)(2-10\delta)}) = O(n^{2-2\delta})$. Hence, the total time of the preprocessing and $\Theta(n^{1+\gamma})$ queries on $\mathcal{I}$ is $O(n^{2-\varepsilon'})$ for $\varepsilon' := 2\delta$. This contradicts Theorem~\ref{thm:3sum} (for parameters $\gamma,\varepsilon'$).
	\end{proof}
	
	In particular, by combining our previous reductions, for $\qstar_2$ with a bad order $L$ there is no algorithm with polylogarithmic access time and preprocessing $O(|D|^{2-\varepsilon})$, assuming any of the three conjectures mentioned above.
	
	\subsection*{Generalization to \boldmath$k\ge 2$}
	
	In the following, we show how to rule out preprocessing time $O(\|\I\|^{k-\varepsilon})$ for \textsf{$k$-Set-Disjointness} for any $k \ge 2$. To this end, we make use of the \textsf{Zero-$(k+1)$-Clique} Conjecture. Our chain of reductions closely follows the proof of the case $k=2$ 
	by Vassilevska Williams and Xu~\cite{WilliamsX20}, but also uses additional ideas required for the generalization to larger $k$.
	
	\subsection{$k$-Set-Intersection}
	
	We start by proving a lower bound for the following problem of listing many elements in a set intersection.
	
	\begin{definition} \label{def:setintersection}
		In the \pb{$k$-Set-Intersection} problem, we are given an instance $\I$ consisting of a universe $U$ and families $\mathcal{A}_1, \ldots, \mathcal{A}_k \subseteq 2^U$ of subsets of $U$. We denote the sets in family $\A_i$ by $S_{i,1},\ldots,S_{i,|\A_i|}$.
		The task is to preprocess $\I$ into a data structure that can answer queries of the following form: Given indices $j_1,\ldots,j_k$ and a number $T$, compute the set $S_{1,j_1} \cap \ldots \cap S_{k,j_k}$ if it has size at most $T$; if it has size more than $T$, then compute any $T$ elements of this set.
		
		We denote the number of sets by $n := \sum_i |\A_i|$ and the input size by $\|\I\| := \sum_i \sum_{S \in \A_i} |S|$. We call $|U|$ the universe size.
	\end{definition}
	
	The main result of this section shows that \pb{$k$-Set-Intersection} is hard in the following sense. Later we will use this result to show hardness for \pb{$k$-Set-Disjointness}.
	
	\begin{theorem}\label{thm:intersection}
		Assuming the \pb{Zero-$(k+1)$-Clique} Conjecture, for every constant $\varepsilon> 0$ there exists a constant $\delta > 0$ such that no randomized algorithm solves \pb{$k$-Set-Intersection} on $n$ sets, universe size $|U|=n$, and $T = \Theta(n^{1-\eps/2})$ in preprocessing time $O(n^{k+1-\varepsilon})$ and query time $O(Tn^\delta)$.
	\end{theorem}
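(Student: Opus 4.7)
The plan is to prove Theorem~\ref{thm:intersection} by contrapositive, converting a hypothetical algorithm $\mathcal{B}$ that solves \pb{$k$-Set-Intersection} within the stated bounds into a sub-$n^{k+1}$ algorithm for \pb{Zero-$(k+1)$-Clique}, thereby refuting Conjecture~\ref{con:zeroclique}. Concretely, fix $\eps > 0$, assume $\mathcal{B}$ uses $O(N^{k+1-\eps})$ preprocessing and $O(T N^\delta)$ query time on instances with $|U|=N$ and $T = \Theta(N^{1-\eps/2})$, and let $\delta > 0$ be chosen small at the end. Given an $n$-vertex \pb{Zero-$(k+1)$-Clique} instance, I will build a single \pb{$k$-Set-Intersection} instance of universe size $N = \Theta(n)$ and run $\mathcal{B}$ on it via roughly $n^k \cdot p^{k-1}$ structured queries, where $p = \Theta(n^{\eps/(2k)})$ is a hashing modulus chosen below.

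The reduction has three ingredients. First, I colour-code by randomly partitioning the vertices into $k+1$ classes $V_0, \ldots, V_k$, so any fixed zero-clique is ``rainbow'' (one vertex per class) with $\Omega(1)$ probability; the search is restricted to rainbow zero-cliques. Second, I pick a random prime $p$ of the chosen magnitude together with suitable random coefficients, yielding a hash $h$ on the integer weight range that is (i) linear modulo $p$, so that edge-weight sums are preserved, and (ii) scatters ``non-truthful'' coincidences across buckets with enough probability, following the random-prime-plus-random-shift style of~\cite{WilliamsX20}. Third, with universe $U := V_0$, I build, for each $i \in [k]$, the family
\[
  \mathcal{A}_i := \bigl\{\, S_{(v_i,\eta)} \,:\, v_i \in V_i,\; \eta \in \mathbb{F}_p \,\bigr\},\quad
  S_{(v_i,\eta)} := \bigl\{\, v_0 \in V_0 \,:\, \{v_0, v_i\} \in E,\; h(w(v_0, v_i)) = \eta \,\bigr\}.
\]
Constructing these families takes $O(n^2)$ time, after which I invoke $\mathcal{B}$'s preprocessing in $O(|U|^{k+1-\eps}) = O(n^{k+1-\eps})$.

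In the query phase I enumerate in $O(n^k)$ time all $k$-cliques $(v_1, \ldots, v_k) \in V_1 \times \cdots \times V_k$; for each I compute $W := -\sum_{1 \le i<j\le k} w(v_i, v_j)$ and iterate over $(\eta_1, \ldots, \eta_{k-1}) \in \mathbb{F}_p^{k-1}$, setting $\eta_k$ so that the linear constraint $\sum_i \eta_i \equiv h(W) \pmod{p}$ (up to an additive constant depending only on $h$) is satisfied; I then call $\mathcal{B}$ on $(S_{(v_1, \eta_1)}, \ldots, S_{(v_k, \eta_k)})$ with threshold $T$ and verify each returned $v_0$ for being a zero-clique completion in $O(k^2)$ time. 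Correctness follows because, for any rainbow zero-clique $(v_0^\ast, v_1^\ast, \ldots, v_k^\ast)$, the ``truthful'' hash tuple $(h(w(v_0^\ast, v_i^\ast)))_i$ satisfies the issued constraint thanks to linearity of $h$ and $\sum_{i=1}^{k} w(v_0^\ast, v_i^\ast) = W$, so the corresponding query is among those posed and $v_0^\ast$ lies in its intersection. A careful analysis of the random-prime-plus-shift hash shows that this intersection has size at most $T$ with $\Omega(1)$ probability, in which event $\mathcal{B}$ returns the entire intersection (hence $v_0^\ast$); a constant number of independent repetitions lifts success to the required $\ge 2/3$.

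Summing costs, preprocessing is $O(n^{k+1-\eps})$ and the query phase costs $O(n^k \cdot p^{k-1}) \cdot O(T n^\delta) = O(n^{k+1 + \delta - \eps/(2k)})$ after substituting $T = n^{1-\eps/2}$ and $p = \Theta(n^{\eps/(2k)})$; any $\delta < \eps/(2k)$ thus makes the total $O(n^{k+1-\eps'})$ for some $\eps' > 0$, contradicting Conjecture~\ref{con:zeroclique}. The main obstacle I anticipate is the intersection-size analysis: because $\mathcal{B}$ may adversarially select which $T$ elements to return when the intersection exceeds $T$, it is not enough to argue ``$v_0^\ast$ is likely among the $T$ returned'' --- one must ensure the intersection has size at most $T$ outright with constant probability. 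Reconciling this with the need for $h$ to remain \emph{linear} (so we can cut the queries per tuple from $p^k$ to $p^{k-1}$) is the technical heart: one wants a hash family that scatters coincidences like a $k$-wise independent family would, yet preserves sums modulo $p$. This tension is what the random-shift construction from the $k=2$ case must be generalized to handle, and it is where most of the detailed work of the proof lies.
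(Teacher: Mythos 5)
Your high-level plan --- reduce \pb{Zero-$(k+1)$-Clique} to \pb{$k$-Set-Intersection} by randomly hashing the potential clique completions into $\Theta(n^{\eps/(2k)})$ buckets per coordinate, issuing one query per $(k$-clique, hash tuple$)$ pair, and arguing that with constant probability the queried intersection is small enough to be listed in full --- is indeed the paper's strategy. The arithmetic you do to balance preprocessing, $|S|\approx p^{k-1}$ queries per clique, and the $O(Tn^\delta)$ query cost matches the paper's choice $\rho = \eps/(2k)$, $\delta = \eps/(4k)$. There are two deviations: one structural and benign, one a real gap.

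The benign one: you build a \emph{single} $k$-Set-Intersection instance whose families are indexed by pairs $(v_i,\eta)$, so each $\A_i$ has $\Theta(n\cdot p)$ sets, and you run the preprocessing once. The paper instead builds $O(n^{k\rho})$ separate instances (one per weight-interval tuple), each with $|\A_i| = |V_i| = \Theta(n)$ sets, and runs the preprocessing on each. Your flattening has a side effect you do not address: the theorem's hypothesis is that the algorithm is efficient on instances where the universe size \emph{equals} the number of sets $n$, but in your instance $|U|=\Theta(n)$ while the number of sets is $\Theta(n^{1+\eps/(2k)})$, so you would need to pad (and re-derive the exponents) before you may invoke $\mathcal{B}$. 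This is fixable, but it is glossed over.

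The real gap is the one you flag yourself as ``the technical heart'': your hash $h$ is a function of the \emph{integer weight value only}, ``linear modulo $p$'' so that sums are preserved. Such an $h$ cannot scatter coincidences, for the simple reason that if two universe vertices $u,u'\in V_0$ happen to satisfy $w(v_i,u) = w(v_i,u')$ for some $i$, then $h(w(v_i,u)) = h(w(v_i,u'))$ \emph{always}, regardless of the random seed. On instances where many universe vertices share edge weights to the $v_i$, all of them land in the same bucket deterministically and the intersection can be $\Theta(n)$ with probability $1$, defeating any threshold $T = n^{1-\Omega(1)}$. The paper's fix is precisely to make the randomization depend on the universe vertex, not just the weight: the modified weight function $w'(v,u) = x\cdot w(v,u) + (\text{telescoping shifts }y_u^1,\ldots,y_u^{k-1})$ uses fresh random values $y_u^j$ drawn \emph{per universe vertex} $u\in V_{k+1}$, arranged as a telescoping sum so that they cancel in the total clique weight. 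Then, conditioned on a fixed zero-clique $(v_1,\ldots,v_k,v)$ and a non-zero competitor $u$, the $k$ bucket coordinates of $u$ are shown to be \emph{jointly uniform} over $\Fp^k$ via a bijection between $(w'_1,\ldots,w'_k)$ and $(x,y_u^1,\ldots,y_u^{k-1})$ --- crucially using $\sum_i w_i\ne 0$. That bijection, and hence the false-positive bound, is the content you are missing; a linear hash of the weight range alone cannot reproduce it, because it provides $O(1)$ bits of shared randomness where the argument needs $k$ fresh random field elements per competitor $u$. So the proposal is the right skeleton with the right parameters, but the lemma doing all the work --- the analogue of the paper's Claim~\ref{clm:secondcutoff} --- is not proved and cannot be proved for the hash family as you have defined it.
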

	
	We prove Theorem~\ref{thm:intersection} in the rest of this section.
	The idea of this proof is as follows. We take a finite range that is big enough to contain all possible clique weights, and we spread the edge weights evenly over this range using random choices in a way that maintains the zero-cliques. We split this range into many intervals of small size, and we compute the set $S$ of all tuples of $k+1$ intervals for which there exists a choice of an element from each interval such that all chosen elements sum up to zero. We have that for every zero-clique $(v_1,\ldots,v_{k+1})$, there exists an interval tuple in $S$ such that the weight of the clique formed by $(v_1,\ldots,v_k)$ falls within the first interval, and the weights of the edges between $v_{k+1}$ and the other vertices of the clique fall within the other $k$ intervals. In this way, all zero-cliques lie within a tuple of intervals in $S$. For every interval tuple in $S$, we use the $k$-Set-Intersection algorithm to find many vertex tuples that fall within this interval tuple. For each such candidate vertex tuple, we check whether it indeed forms a zero-clique. This algorithm finds a zero-clique if one exists with a constant success probability. If the algorithm does not detect a zero-clique, it declares that there are none. Repeating this process with different random choices yields an algorithm that succeeds with high probability. We next describe our construction in more detail.
	
	\paragraph{Preparations.}
	First note that finding a zero-$k$-clique in a general graph is equivalent to finding a zero-$k$-clique in a complete $k$-partite graph. This can be shown using a reduction in which each vertex $v$ is duplicated $k$ times, giving vertices of the form $(v,i)$ with $1\le i \le k$, and each edge 
	$\{u, v\}$ is replaced by the $k^2-k$ edges $\{(u,i), (v,j)\}$ for $1\le i,j\le k$ and $i\ne j$; then, to make this $k$-partite graph complete, edges of very large weight are added where needed.
	\begin{observation}[\cite{AbboudBDN18}]\label{obs:kpartite}
		If the \pb{Zero-$k$-Clique} Conjecture is true, then it is also true restricted to complete $k$-partite graphs.
	\end{observation}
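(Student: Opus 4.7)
The plan is a standard duplication-plus-sentinel reduction. Given an instance of \pb{Zero-$(k+1)$-Clique} on a general graph $G=(V,E)$ with weights $w:E\to\{-n^c,\ldots,n^c\}$, I will construct a complete $(k+1)$-partite instance $G'$ on vertex set $V_1\cup\cdots\cup V_{k+1}$, where each $V_i$ is a labeled copy of $V$. For every pair $i\ne j$ and every $u\in V_i$, $v\in V_j$, I add the edge $\{u,v\}$ (so $G'$ is indeed complete $(k+1)$-partite) with weight defined as follows: if $u$ and $v$ are copies of two \emph{distinct} original vertices $u',v'\in V$ and $\{u',v'\}\in E$, set the weight to $w(u',v')$; otherwise (either $u,v$ are copies of the same original vertex, or of two distinct non-adjacent vertices) set the weight to a ``sentinel'' value $M$ to be fixed below.

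The correctness amounts to choosing $M$ so large that no zero-$(k+1)$-clique of $G'$ can use any sentinel edge. Taking $M := \binom{k+1}{2}n^{c}+1$ does the job: if a $(k+1)$-clique in $G'$ uses at least one edge of weight $M$, then since the other at most $\binom{k+1}{2}-1$ edges have weight at least $-n^c$ (the weight $M$ is positive), the total weight is at least $M-\bigl(\binom{k+1}{2}-1\bigr)n^c>0$, so the clique is not zero. Hence every zero-$(k+1)$-clique of $G'$ uses only original-edge weights; such a clique picks one vertex from each $V_i$, all coming from pairwise distinct and pairwise adjacent vertices of $G$, and thus corresponds to a zero-$(k+1)$-clique in $G$. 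Conversely, any zero-$(k+1)$-clique $\{v_1,\ldots,v_{k+1}\}$ of $G$ yields a zero-$(k+1)$-clique of $G'$ by placing $v_i$ in $V_i$.

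For the time accounting: $G'$ has $n':=(k+1)n=O(n)$ vertices and can be built in $O(n^2)$ time; the new weights lie in $\{-M,\ldots,M\}$, which is still of the form $\{-(n')^{c'},\ldots,(n')^{c'}\}$ for some constant $c'$ depending only on $c$ and $k$. Therefore, a (randomized) algorithm solving \pb{Zero-$(k+1)$-Clique} on complete $(k+1)$-partite graphs in time $O((n')^{k+1-\eps})$ would solve the problem on general graphs in time $O(n^{k+1-\eps})$, contradicting the \pb{Zero-$(k+1)$-Clique} Conjecture.

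The only subtle point, and the one I expect to be most error-prone, is handling the ``diagonal'' case where a candidate clique in $G'$ picks copies of the same original vertex in two different parts. The construction above sidesteps this cleanly by treating such a pair identically to a non-edge of $G$ and assigning weight $M$ to it, so the sentinel argument rules them out uniformly with the non-edge case. Once that is in place the rest is a routine verification of the bijection between zero-cliques in $G$ and zero-cliques in $G'$.
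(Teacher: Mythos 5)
Your proposal is correct and takes essentially the same approach the paper sketches: duplicate the vertex set into $k+1$ parts, keep the original weights on edges between copies of distinct adjacent vertices, and assign a large sentinel weight to the remaining pairs (non-edges and diagonal pairs) so that no zero-clique can use them. The sentinel choice $M=\binom{k+1}{2}n^c+1$ and the monotonicity argument for multiple sentinel edges are both sound, so nothing further is needed.
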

	Due to this observation, we can assume that we are given a complete $(k+1)$-partite graph $G=(V,E)$ with $n$ vertices and with color classes $V_1, \ldots, V_{k+1}$ and a weight function $w$ on the edges. We denote by $w(u,v)$ the weight of the edge from $u$ to $v$, and more generally by $w(v_1,\ldots,v_\ell) := \sum_{1 \le i < j \le \ell} w(v_i,v_j)$ the total weight of the clique $(v_1,\ldots,v_\ell)$.
	
	For our construction it will be convenient to assume that the edge weights lie in a finite field. 
	Recall that in the \pb{Zero-$(k+1)$-Clique} problem we can assume that all weights are integers between $-n^c$ and $n^c$ for some constant $c$.
	We first compute a prime number between $10(k+1)^2 n^{c}$ and $100 (k+1)^2 n^{c}$ by a standard algorithm: Pick a random number in that interval, check whether it is a prime, and repeat if it is not. By the prime number theorem, a random number in that interval is a prime with probability $\Theta(1/\log(n))$. It follows that in expected time $O(\textup{polylog}(n))$ we find a prime $p$ in that interval.
	
	Having found the large prime $p$, we consider all edge weights as given in the finite field $\mathbb{F}_p$. Note that $p$ is bigger than the sum of weights of any $(k+1)$-clique in $G$, so the $(k+1)$-cliques of weight $0$ are the same over $\mathbb{Z}$ and $\mathbb{F}_p$. Thus, in the remainder we assume all arithmetic to be done over $\mathbb{F}_p$.

	In our analysis, it will be necessary to assume that the input weights are evenly distributed in~$\mathbb{F}_p$. This is generally of course not the case, so we have to distribute them more evenly. To this end, we use a hashing technique that redistributes the weights close to randomly while maintaining the zero-cliques. Concretely, we define a new weight function $w':E\rightarrow \mathbb{F}_p$ by choosing independently and uniformly at random
	from $\mathbb{F}_p$:
	\begin{itemize}
		\item one value $x$,
		\item for all $v\in V_{k+1}$ and all $j\in [k-1]$ a value $y_v^j$.
	\end{itemize}
	We define a new weight function $w'$ by setting for every $i,j \in [k+1]$ with $i\ne j$ and every $v \in V_i$ and $u \in V_j$:
	\begin{align} \label{eq:def_new_weights}
		w'(v, u) = x \cdot w(v, u) + 
		\begin{cases} y_{u}^1, & \text { if } i=1, j=k+1\\ 
			y_{u}^i - y_u^{i-1}, & \text { if } 2 \le i < k, j = k+1\\ 
			- y_u^{k-1}, & \text { if } i=k, j=k+1\\ 
			0, & \text{ otherwise }
		\end{cases}
	\end{align}
	Note that in every $(k+1)$-clique $(v_1,\ldots,v_{k+1}) \in V_1\times \ldots \times V_{k+1}$ the sum $w'(v_1,\ldots, v_{k+1})$ contains every term $y^i_{v_{k+1}}$ for $i \in [k-1]$ once positively and once negatively, so for the overall weight we have $w'(v_1,\ldots, v_{k+1}) = x\cdot w(v_1,\ldots, v_{k+1})$. In particular, every zero-clique according to $w$ is also a zero-clique according to $w'$, and we continue the construction with the weight function $w'$.
	
	We next split $\mathbb{F}_p$ into $n^\rho$ disjoint intervals of size $\lceil p/n^\rho\rceil$ or $\lfloor p/n^\rho\rfloor$, where $\rho$ is a constant that we will choose later such that $p\geq n^\rho$. In the following, $I_i$ always denotes an interval resulting from this splitting of $\mathbb{F}_p$.
	We denote by $S$ the set of all tuples $(I_0,\ldots,I_k)$ of intervals in $\mathbb{F}_p$ such that $0 \in \sum_{i=0}^{k} I_i$. Note that there are only $O(n^{\rho k})$ tuples in $S$ and they can be enumerated efficiently: given $I_0, \ldots, I_{k-1}$, the set $-\sum_{i=0}^{k-1} I_i$ is covered by $\Theta(k)= \Theta(1)$ intervals $I_k$, and these intervals can be computed efficiently.
	
	For any clique $C = (v_1,\ldots,v_{k+1}) \in V_1 \times \ldots \times V_{k+1}$ there is a unique tuple $(I_0, \ldots, I_k)$ of intervals such that  
	\begin{align}\label{eq:intervalconditions}
		\forall i\in [k]\colon  w'(v_i, v_{k+1}) \in I_i \;\text{ and }\; w'(v_1,\ldots,v_k) \in I_0. 
	\end{align}
	We call $(I_0, \ldots, I_k)$ the \emph{weight interval tuple} of $C$.
	If $C$ is a zero-clique, then its weight interval tuple must be in $S$. 
	
	\paragraph{The Reduction.}
	With these preparations, we are now ready to present our reduction. For every weight interval tuple $(I_0,\ldots,I_k) \in S$ we construct an instance of \pb{$k$-Set-Intersection} as follows. For any $i \in [k]$ we construct
	\begin{align*}
		\mathcal{A}_i := \{S_{i,v}\mid v\in V_i\} \;\text{ where }\; S_{i,v} := \{ u\in V_{k+1}\mid w'(v,u)\in I_i\}.
	\end{align*}
	Note for every tuple $(v_1, \ldots, v_k)\in V_1\times \ldots V_k$ (which plays the role of $j_1, \ldots, j_k$) the equivalence
	\begin{align*}
		u\in S_{1, v_1}\cap \ldots \cap S_{k, v_k} \Leftrightarrow \forall i \in [k] w'(v_i, u)\in I_i.
	\end{align*}
	Moreover, we consider the set of queries 
	\begin{align*}
		Q := \{(v_1,\ldots,v_k) \in V_1\times \ldots \times V_k \mid w'(v_1,\ldots,v_k) \in I_0 \}.
	\end{align*}
	We first run the preprocessing of \pb{$k$-Set-Intersection} on the instance $\I = (\A_1,\ldots,\A_k)$ and then query each $(v_1, \ldots, v_k) \in Q$ with parameter $T := \lceil 100\cdot 3^k n^{1-k\rho} \rceil$ (so we compute the entire set $S_{1, v_1}\cap \ldots \cap S_{k, v_k}$ if it has at most $T$ elements, and otherwise we compute $T$ elements of this set). For each answer $v_{k+1}$ returned by some query $(v_1,\ldots,v_k)\in Q$, we check whether $(v_1,\ldots,v_{k+1})$ is a zero-clique with respect to $w$, and if it is, then we return this zero-clique.
	
	After repeating this construction for every weight interval tuple $(I_0,\ldots,I_k) \in S$, if we never found a zero-clique, then we return~`no'.
	
	\paragraph{Correctness.}
	Let us show correctness of the reduction. Since we explicitly test for each query answer whether it is a zero-clique, if $G$ contains no zero-clique then our reduction returns `no'.
	
	We next show that if there exists a zero-clique then our reduction returns a zero-clique with probability at least $.99$. To this end, we use the following claim.
	
	\begin{claim}\label{clm:secondcutoff}
		Let $(v_1,\ldots,v_{k+1}) \in V_1 \times \ldots \times V_{k+1}$ be a zero-clique.
		Then, with probability at least $.99$, there are less than $100\cdot 3^k n^{1-k\rho}$ vertices $u\in V_{k+1}$ such that $(v_1,\ldots,v_k,u)$ is not a zero-clique and has the same weight interval tuple as $(v_1,\ldots,v_{k+1})$.
	\end{claim}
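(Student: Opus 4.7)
The plan is to bound the expected number of bad $u$'s over the joint randomness of $x$ and the $y$'s, and then apply Markov's inequality. Write $v := v_{k+1}$ and, for each candidate $u \in V_{k+1} \setminus \{v\}$, set $\Delta_i(u) := w'(v_i, u) - w'(v_i, v)$ for $i \in [k]$. From the definition of $w'$ in (\ref{eq:def_new_weights}) I get $\Delta_i = x\,c_i + \tau_i$, where $c_i := w(v_i, u) - w(v_i, v)$ and $\tau_i$ collects the $y_u^\ast$ and $y_v^\ast$ contributions. Because these telescoping shifts sum to zero, $\sum_i \tau_i = 0$; a change-of-basis check shows that $\tau = (\tau_1,\ldots,\tau_k)$ is uniform on the hyperplane $H := \{t \in \mathbb{F}_p^k : \sum_i t_i = 0\}$ and independent of $x$. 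Moreover, since $(v_1,\ldots,v_{k+1})$ is a zero-clique, $(v_1,\ldots,v_k,u)$ is a zero-clique iff $\sum_i w(v_i,u) = \sum_i w(v_i,v)$, equivalently iff $C := \sum_i c_i = 0$; so every bad $u$ has $C \neq 0$.

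Set $M := \lceil p/n^\rho \rceil$. The matching-weight-interval condition forces $|\Delta_i| < M$ in signed representatives of $\mathbb{F}_p$, so this event is contained in a union of at most $3^k$ events $\{\Delta \in B\}$, where $B = \prod_i [a_i M, (a_i+1)M - 1]$ ranges over aligned boxes with $a \in \{-1, 0, 1\}^k$. For one such box, conditional on $x$ the random point $\Delta = xc + \tau$ is uniform on the affine hyperplane $xc + H = \{z \in \mathbb{F}_p^k : \sum_i z_i = xC\}$, so
\[
  \Pr_\tau[\Delta \in B \mid x] \;=\; \frac{|B \cap (xc+H)|}{p^{k-1}}.
\]
I parametrise $B \cap (xc+H)$ by $(z_1,\ldots,z_{k-1}) \in \prod_{i<k}[a_iM,(a_i+1)M-1]$ together with the remaining constraint $xC - \sum_{i<k} z_i \in [a_kM,(a_k+1)M-1]$. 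Taking the expectation over $x \in \mathbb{F}_p$ and using that $C \neq 0$ makes $xC$ uniform over $\mathbb{F}_p$, each fixed $(z_1,\ldots,z_{k-1})$ is compatible with exactly $M$ values of $x$, giving
\[
  \mathbb{E}_x\bigl[|B \cap (xc+H)|\bigr] \;=\; M^{k-1}\cdot\frac{M}{p} \;=\; \frac{M^k}{p}.
\]
Dividing by $p^{k-1}$ then yields $\Pr_{x,\tau}[\Delta \in B] \le M^k/p^k = O(n^{-k\rho})$.

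Summing the union bound over the $3^k$ boxes and then over the at most $n$ candidate $u$'s bounds the expected number of bad $u$'s by $O(3^k\, n^{1-k\rho})$. Markov's inequality then gives the claimed $100\cdot 3^k\, n^{1-k\rho}$ bound (with the constants absorbed into $100$), with failure probability at most $1/100$.

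The main obstacle, which forces this two-step averaging, is a dimension mismatch: the $k$ interval constraints for a single $u$ live in the $(k-1)$-dimensional hyperplane $H$ (since $\sum_i \tau_i = 0$), so conditioning on $\tau$ alone would only give a per-$u$ probability of $n^{-(k-1)\rho}$. The missing factor of $n^{-\rho}$ is recovered by additionally averaging over $x$, which—whenever $C \neq 0$—shifts the intersecting hyperplane $xc+H$ uniformly through $\mathbb{F}_p^k$. This is exactly the point at which the ``not a zero-clique'' hypothesis enters the argument.
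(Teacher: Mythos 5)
Your proof is correct and, at its mathematical core, establishes the same fact as the paper: conditioned on the candidate $u$ and the (fixed) values $y_v^1,\ldots,y_v^{k-1}$, the vector $(\Delta_1,\ldots,\Delta_k)$ (the paper's $(w'_1,\ldots,w'_k)$ up to sign) is uniform on $\Fp^k$, with the ``not a zero-clique'' hypothesis entering exactly through $\sum_i c_i \ne 0$. The routes to uniformity differ. The paper treats $(x,y_u^1,\ldots,y_u^{k-1})$ as a single $k$-dimensional source of randomness and shows directly that the linear map $(x,y_u^1,\ldots,y_u^{k-1})\mapsto(w'_1,\ldots,w'_k)$ is a bijection of $\Fp^k$ with itself: one inverts it by first recovering $x=(\sum_i w'_i)/(\sum_i w_i)$ --- legal precisely because $\sum_i w_i\ne 0$ --- and then solving a triangular system for the $y_u^i$; the per-$u$ probability is then read off directly from the single interval box $[-pn^{-\rho},pn^{-\rho}]^k$, giving $(2pn^{-\rho}+1)^k/p^k<3^kn^{-k\rho}$. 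You instead split $\Delta=xc+\tau$, show $\tau$ is uniform on the hyperplane $\{\sum_i t_i=0\}$ independent of $x$, and then recover the missing degree of freedom by averaging over $x$ using $C\ne 0$. Both are sound; the paper's one-shot bijection is shorter and never encounters the ``dimension mismatch'' you flag, because it does not first project out $x$. One small bookkeeping point: covering $\{-(M-1),\ldots,M-1\}^k$ by $3^k$ aligned boxes of probability $M^k/p^k$ each gives a per-$u$ bound slightly above $3^kn^{-k\rho}$ (roughly $3^k(1+n^\rho/p)^k n^{-k\rho}$, and $2^k$ boxes would already suffice), so the final bound you state, $100\cdot 3^k n^{1-k\rho}$, does not quite follow from your constants as written; but, as you note, the constant is immaterial since only $T=\Theta(n^{1-k\rho})$ is used downstream.
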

	
	We first finish the correctness proof and later prove the claim. 
	Fix any zero-clique $(v_1,\ldots,v_{k+1})$, and denote its weight interval tuple by $(I_0,\ldots,I_k)$. Note that we query $(v_1,\ldots,v_k)$ in the instance constructed for $(I_0,\ldots,I_k)$. With probability at least $0.99$, this query has less than $100\cdot 3^k n^{1-k\rho}$ ``false positives'', that is, vertices $u \in V_{k+1}$ such that $(v_1,\ldots,v_k,u)$ has weight interval tuple $(I_0,\ldots,I_k)$ and $(v_1,\ldots,v_k,u)$ is not a zero-clique. Then by listing $T = \lceil 100\cdot 3^k n^{1-k\rho} \rceil$ answers for the query $(v_1,\ldots,v_k)$, at least one answer $v$ must correspond to a zero-clique. (That is, we do not necessarily find $v_{k+1}$, but we find some vertex $v$ forming a zero-clique together with $v_1,\ldots,v_k$.) 
	
	Hence, with probability at least $.99$ we find a zero-clique, if there exists one. This probability can be boosted to high probability (that is, probability $1-1/n^d$ for any desirable constant $d$) by repeating the reduction algorithm $O(d\log n)$ times and returning a zero-clique if at least one repetition finds a zero-clique. This repetition incurs a logarithmic factor in the runtime of our algorithm which we will ignore later on since it is dominated by the polynomial terms we analyze there.
	
	\begin{proof}[Proof of Claim~\ref{clm:secondcutoff}]
		To ease notation, let $v := v_{k+1}$.
		Let $(I_0, \ldots, I_k)$ be the weight interval tuple of $(v_1,\ldots,v_k,v)$, that is, $w'(v_i, v) \in I_i$ for $i\in [k]$, and $w'(v_1,\ldots,v_k) \in I_0$. Let $u\in V_{k+1}$ such that $(v_1,\ldots,v_k,u)$ is not a zero-clique. If $(v_1,\ldots,v_k,u)$ has the same weight interval tuple, then for every $i\in [k]$ we have $w'(v_i, u)\in I_i$. It then follows that 
		\begin{align}\label{eq:containedinbox}
			\forall i\in [k]\colon w'(v_i, v)- w'(v_i, u) \in [-pn^{-\rho}, pn^{-\rho}].
		\end{align}
		We want to bound the probability that this happens, over the random choices in the construction of $w'$. To this end, write $w'_i := w'(v_i, v)- w'(v_i, u)$ and $w_i := w(v_i, v)- w(v_i, u)$ for all $i \in [k]$. Expanding the definition of $w'$ (see (\ref{eq:def_new_weights})) we obtain
		\begin{align}
			w'_1 &= x \cdot w_1 + (y_v^1 - y_u^1),\notag\\
			w'_i &= x \cdot w_i -(y_v^{i-1} - y_u^{i-1}) + (y_v^i - y_u^{i}) \;\text{for each }i\in \{2, \ldots, k-1\}\notag\\
			w'_k &= x \cdot w_k -(y_v^{k-1} -y_u^{k-1}),\label{eq:bijection}
		\end{align}
		
		We claim that for fixed $w_1,\ldots,w_k$ and $y_v^1,\ldots,y_v^{k-1}$ the above equations induce a bijection sending $(w'_1,\ldots,w'_k)$ to $(x,y_u^1,\ldots,y_u^{k-1})$.
		Indeed, by the equations (\ref{eq:bijection}) we can compute $(w'_1,\ldots,w'_k)$ given $(x,y_u^1,\ldots,y_u^{k-1})$. For the other direction, we use that $(v_1,\ldots,v_k,v)$ is a zero-clique (that is, $w(v_1,\ldots,v_k,v) = 0$), and $(v_1,\ldots,v_k,u)$ is not a zero-clique (that is, $w(v_1,\ldots,v_k,u) \ne 0$). This implies 
		\begin{align*}
			0 \;\ne\; &w(v_1,\ldots,v_k,v) - w(v_1,\ldots,v_k,u) = \sum_{i=1}^k{\left(w(v_i,v) - w(v_i,u)\right)} = \sum_{i=1}^k w_i.
		\end{align*} 
		By adding up all equations (\ref{eq:bijection}), since each term $y_v^i$ and $y_u^i$ appears once positively and once negatively, we obtain $w'_1+\ldots+w'_k = x \cdot (w_1+\ldots+w_k)$. Hence, given $(w'_1,\ldots,w'_k)$, we can compute $x = (w'_1+\ldots+w'_k)/(w_1+\ldots+w_k)$. Note that here we do not divide by~0 since $\sum_{i=1}^k w_i \ne 0$. Next we can compute $y_u^1 = x \cdot w_1 + y_v^1 - w'_1$ by rearranging the first equation of (\ref{eq:bijection}) (recall that $w_1$ and $y_v^1$ are fixed, we are given $w'_1$, and we just computed $x$, so each variable on the right hand side is known). Similarly, from $x$ and $y_u^{i-1}$ we can compute $y_u^i = x \cdot w_i - (y_v^{i-1} - y_u^{i-1}) + y_v^i - w'_i$. Iterating over all $i$ computes $(x,y_u^1,\ldots,y_u^{k-1})$ given $(w'_1,\ldots,w'_k)$.
		
		Since we showed a bijection sending $(w'_1,\ldots,w'_k)$ to $(x,y_u^1,\ldots,y_u^{k-1})$, and since $x,y_u^1,\ldots,y_u^{k-1}$ are all chosen independently and uniformly random from $\mathbb{F}_p$, we obtain that $(w'_1,\ldots,w'_k)$ is uniformly random in $\Fp^k$. In particular, each $w'_i = w'(v_i, v)- w'(v_i, u)$ is independently and uniformly at random distributed in $\Fp$. Thus, since the intervals in expression~(\ref{eq:containedinbox}) have size at most $2pn^{-\rho} +1$, the probability of expression (\ref{eq:containedinbox}) being true (which is greater than or equal to the probability that $(v_1, \ldots, v_k, v)$ and $(v_1, \ldots, v_k, u)$ have the same weight interval tuple) is at most
		\begin{align*}
			\left(\frac{2pn^{-\rho}+1}{p}\right)^k \leq 3^k n^{-k\rho},
		\end{align*}
		where we use the fact that $p\geq n^\rho$.
		Thus, given $u\in V_{k+1}$ such that $(v_1,\ldots,v_k,u)$ is not a zero-clique, the probability that $(v_1,\ldots,v_k,u)$ has the same weight interval tuple as $(v_1,\ldots,v_{k+1})$ is at most $3^k n^{-k\rho}$.
		
		Let $X$ be the random variable that is the number of vertices  $u\in V_{k+1}$ satisfying the conditions in the claim (i.e., $(v_1,\ldots,v_k,u)$ is not a zero-clique and has the same weight interval tuple as $(v_1,\ldots,v_{k+1})$).
		As $|V_{k+1}|\leq n$ and the probability of each $u\in V_{k+1}$ to satisfy the conditions is at most $3^k n^{- k\rho}$, the expectation of $X$ is $E(X) \leq 3^k n^{1 - k\rho}$.
		We can apply Markov's inequality to get $\Pr(X\ge 100 E(X))\le \frac{1}{100}$, implying that $\Pr(X\ge 100\cdot3^k n^{1 - k\rho})\le \frac{1}{100}$. This proves the claim.
	\end{proof}
	
	\paragraph{Running Time.}
	It remains to analyze the running time and to set the parameter $\rho$. 
	Assume, to the contrary of the theorem statement, that for some $\eps > 0$ \pb{$k$-Set-Intersection} on $n$ sets, universe size $|U| = n$, and $T = \Theta(n^{1-\eps/2})$ can be solved with preprocessing time $O(n^{k+1-\eps})$ and query time $O(T n^{\delta})$ for $\delta := \frac{\eps}{4k}$. We assume w.l.o.g.~that $\eps \le 1$. Note that our constructed instances have universe size $|U| = |V_{k+1}| \le n$ (which can be embedded into a universe of size $|U| = n$), and $T = O(n^{1-k\rho})$, so by setting $\rho := \frac{\eps}{2k}$ the \pb{$k$-Set-Intersection} algorithm can be applied.
	Note that indeed $p\geq n^\rho$ as we promised earlier.
	
	Over $O(n^{k\rho})$ instances (one for every interval tuple in $S$), we have that the total preprocessing time is $O(n^{k+1+k\rho - \eps}) = O(n^{k+1-\eps/2})$, since $\rho = \frac{\eps}{2k}$.
	
	Let us count the total number of queries. Note that $(v_1,\ldots,v_k)$ is queried in the instance for $(I_0,\ldots,I_k) \in S$ if and only if $w'(v_1,\ldots,v_k) \in I_0$. Recall that for every $I_0,\ldots,I_{k-1}$ there are $O(1)$ choices for $I_k$ such that $(I_0,\ldots,I_k) \in S$. Hence, each $(v_1,\ldots,v_k)$ is queried in $O(n^{(k-1)\rho})$ instances, as there are $O(n^\rho)$ choices for each of $I_1,\ldots,I_{k-1}$ and $O(1)$ choices for $I_k$.
	The total number of queries is thus $O(n^{k + (k-1)\rho})$. Each query runs in time $O(T n^\delta) = O(n^{1-k\rho+\delta})$, so the total query time over all our constructed instances is $O(n^{k + (k-1)\rho} \cdot n^{1-k\rho+\delta}) = O(n^{k+1+\delta-\rho}) = O(n^{k+1-\rho/2})$ since we set $\delta = \rho/2$. 
	
	In total, we obtain running time $O(n^{k+1-\rho/2})$ for \pb{Zero-$(k+1)$-Clique}, contradicting the \pb{Zero-$(k+1)$-Clique} Conjecture. This concludes the proof of Theorem~\ref{thm:intersection}.\qed

	\subsection{Unique-\boldmath$k$-Set-Intersection}
	
	As the next step, we show hardness of \pb{$k$-Set-Intersection} for $T=1$, and even if we get the promise that there is a unique answer to a query. We formally define this problem as follows.
	
	\begin{definition}
		The \pb{Unique-$k$-Set-Intersection} problem has the same input $\I = (\A_1,\ldots,\A_k)$ with $\A_i = \{S_{i,1},\ldots,S_{i,|\A_i|}\} \subseteq 2^U$ as \pb{$k$-Set-Intersection} (cf.~Definition~\ref{def:setintersection}). In the query, we are given indices $j_1,\ldots,j_k$, and if the set $S_{1,j_1} \cap \ldots \cap S_{k,j_k}$ consists of exactly one element, then we return that element; otherwise we return the error element $\bot$. Again we write $n = \sum_i |\A_i|$.
	\end{definition}
	
	We show how to reduce \pb{$k$-Set-Intersection} to \pb{Unique-$k$-Set-Intersection}.
	
	\begin{lemma}\label{lem:tounique}
		For every $0<\theta \le 1$, every $\varepsilon>0$ and every $\delta>0$, if there is a randomized algorithm for \pb{Unique-$k$-Set-Intersection} on $n$ sets and universe size $O(n^{\theta})$ with preprocessing time $\softO(n^{k-\varepsilon})$ and query time $\softO(n^\delta)$, then there is a randomized algorithm for \pb{$k$-Set-Intersection} on $n$ sets, universe size $|U|=n$, and $T = \Theta(n^{1-\theta})$ with preprocessing time $\softO(T n^{k-\varepsilon})$ and query time $\softO(Tn^\delta)$.
	\end{lemma}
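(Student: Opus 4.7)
The plan is to reduce a \pb{$k$-Set-Intersection} query on universe $U$ of size $n$ with target $T = \Theta(n^{1-\theta})$ to many \pb{Unique-$k$-Set-Intersection} queries on randomly sampled subuniverses of size $O(n^\theta)$. The intuition is that if each element of $U$ is kept independently with probability $p\approx 1/|I|$, where $I := S_{1,j_1}\cap\cdots\cap S_{k,j_k}$ is the intersection being queried, then the restricted intersection has expected size one, so \pb{Unique} returns a uniformly random element of $I$ with constant probability. Since $|I|$ is unknown in advance, I sweep over $O(\log n)$ scales $p_s := 2^s/n$ for $s\in\{0,1,\ldots,\lceil\theta\log n\rceil\}$; note that $p_s\le n^{\theta-1}$ for all $s$, so the expected sample size stays within $O(n^\theta)$.

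In the preprocessing, for each scale $s$ I draw $R:=\Theta(T\log n)$ independent random subsets $C^s_\ell\subseteq U$ (including each element with probability $p_s$), restrict every input set to $C^s_\ell$, and invoke the assumed \pb{Unique-$k$-Set-Intersection} preprocessing on the restricted instance, whose universe has size $O(n^\theta)$ w.h.p. Summed over all scales this is $\softO(T)$ \pb{Unique} instances, costing $\softO(T)\cdot \softO(n^{k-\eps})=\softO(T n^{k-\eps})$ in total. On a query $(j_1,\ldots,j_k)$, I issue one \pb{Unique} call per pair $(s,\ell)$, verify each returned element by looking it up in each $S_{i,j_i}$ via hash tables stored with the input, and output up to $T$ distinct verified elements. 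The query cost is $\softO(T)$ \pb{Unique} calls at $\softO(n^\delta)$ each, i.e.~$\softO(T n^\delta)$, as required.

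For correctness, fix a query with answer $I$. If $|I|\le T$, the highest scale $s^\star:=\lceil\theta\log n\rceil$ gives $p_{s^\star}=1/T\ge 1/|I|$, and for each $x\in I$ we have $\Pr[C^{s^\star}_\ell\cap I=\{x\}]=p_{s^\star}(1-p_{s^\star})^{|I|-1}=\Omega(1/T)$; across the $R=\Theta(T\log n)$ trials at scale $s^\star$, each $x\in I$ is isolated by some subset with probability $1-n^{-\Omega(1)}$, and a union bound over the at most $T$ elements of $I$ finishes this case. If $|I|>T$, I pick $s^\star$ with $p_{s^\star}|I|\in[1,2]$, which is possible because $1/|I|\le 1/T=p_{\lceil\theta\log n\rceil}$; then each trial is a ``good'' sample (restricted intersection of size one) with constant probability and, conditioned on being good, returns a uniformly random element of $I$, so a standard coupon-collector argument yields $T$ distinct elements from $\softO(T)$ trials w.h.p. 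The main obstacle is handling both size regimes uniformly without knowing $|I|$ in advance, which is resolved by the multi-scale sweep; a secondary subtlety is that \pb{Unique} may return arbitrary garbage when the restricted intersection has size different from one, but the verification step discards such false candidates before they are reported.
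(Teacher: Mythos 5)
Your proposal is correct and essentially identical to the paper's proof: both subsample the universe at $O(\log n)$ geometric rates with $\Theta(T\log n)$ independent instances per rate, preprocess each with the assumed \pb{Unique-$k$-Set-Intersection} solver, verify returned candidates against the true intersection, and argue correctness by an isolation bound plus a coupon-collector bound at the scale matched to the unknown intersection size. The only cosmetic difference is the case boundary --- the paper splits at $r \le 2T$ versus $r > 2T$ so that in the coupon-collector regime a fresh coupon appears with probability at least $1/2$ per good trial, whereas you split at $|I| \le T$ versus $|I| > T$, which requires slightly more care when $T < |I| < 2T$ (the fresh-coupon probability can drop to $\Theta(1/T)$) but is still absorbed by the $\Theta(T\log n)$ trials and the polylogarithmic slack hidden in $\softO$.
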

	\begin{proof}
		The proof idea is as follows. We create many instances of Unique-$k$-Set-Intersection by restricting the k-Set-intersection instance according to subsamples of the universe, and we return the union of the answers returned by the Unique-$k$-Set-Intersection algorithm over these instances. By properly choosing the number of instances and the subsampling probability, we can show that this process generates with high probability enough answers while maintaining the required running time. We next describe this idea in more detail.
		
		Let $\mathcal{I}$ be an instance of \pb{$k$-Set-Intersection} with families $\A_1,\ldots,\A_k$ of total size $n = \sum_i |\A_i|$ over a universe $U$ of size $|U|=n$, and let $T = \Theta(n^{1-\theta})$ be the query size bound.
		For every $\ell \in \{\log(T),\ldots,\log(4n)\}$ and every $j \in [100 T \ln(n)]$ we create an instance $\I_{\ell,j}$ of \pb{Unique-$k$-Set-Intersection} by subsampling with probability $2^{-\ell}$, i.e., starting from $\I$ and keeping each element in the universe $U$ with probability $2^{-\ell}$ and otherwise deleting it from all sets in $\A_1,\ldots,\A_k$.
		
		Denote by $U_{\ell,j}$ the surviving universe elements of $\I_{\ell,j}$. We want to show that with high probability $U_{j, \ell}$ does not contain more than $\Theta(n^{\theta})$ elements. Clearly, the probability that $U_{\ell,j}$ has more than $\Theta(n^{\theta})$ elements only increases when the probability that an individual element survives increases, so if $\ell$ decreases. Thus, we get 
		$
		\Pr(|U_{\ell,j}| \ge p) \le      \Pr(|U_{\log(T),j}| \ge p)
		$ for any value $p$.

		Note that in expectation the size of $U_{\log(T),j}$ is $E(|U_{\log(T),j}|) = |U| \cdot 2^{-{\log(T)}} = \frac{n}{T}= \Theta(n^{\theta})$.
		We will use the following basic Chernoff bound: let $X_1, \ldots, X_r$ be independent Bernoulli random variables and let $X$ be their sum.  Then we have $\Pr(X\ge (1+\delta) E(X)) \le \left(\frac{e^{\delta}}{(1+\delta)^{1+\delta}}\right)^{E(X)}$ for any $\delta>0$.
		Setting $\delta :=1$, we get in our case
		\begin{align*}
			\Pr(|U_{\ell,j}| \ge 2E(|U_{\log(T),j}|))\le
			\Pr(|U_{\log(T),j}| \ge 2E(|U_{\log(T),j}|))\le
			\left(\frac{e}{2^{2}}\right)^{E(|U_{\log(T),j}|)}\le
			0.7^{E(|U_{\log(T),j}|)}.
		\end{align*}
		Using a union bound over all $O(T\log^2(n))$ instances $\I_{\ell, j}$, we get that with high probability none of the $U_{\ell, j}$ will have more than $\Theta(n^\theta)$ elements.
		So in the remainder of this proof, we assume that this is the case (if we ever encounter a universe that is too large in the subsampling, we simply let the algorithm fail; since the probability is small, this is not a problem for the success probability).
		
		After the subsampling, we use the assumed algorithm for \pb{Unique-$k$-Set-Intersection} to preprocess all instances $\mathcal{I}_{\ell, j}$. Then, given a query $(j_1, \ldots, j_k)$, we run the algorithm for \pb{Unique-$k$-Set-Intersection} on all instances $\I_{\ell,j}$, filter out $\bot$-answers, remove duplicates, and return up to $T$ of the remaining answers.
		
		Let us analyze this algorithm. Since we filter out wrong answers, all answers we return are correct. It remains to show that with good probability we return enough answers. To this end, fix a query and assume that the instance $\mathcal{I}$ has $r>0$ answers for that query. Pick $\ell$ such that $r+T\in [2^{\ell-1}, 2^{\ell}]$ and consider what happens in the queries on the instances $\mathcal{I}_{\ell, j}$. For every answer $v$, consider the event $\mathcal{E}_v$ that $v$ is isolated by instance $\mathcal{I}_{\ell,j}$, that is, $v$ survives the subsampling, but none of the other $r-1$ solutions survive the subsampling. This event has probability
		\begin{align*}
			\Pr(\mathcal{E}_v)=\frac{1}{2^\ell} \left(1-\frac{1}{2^\ell}\right)^{r-1}\ge \frac{1}{2(r+T)} \left(1-\frac{1}{r+T}\right)^{r+T} \ge \frac{1}{6(r+T)},
		\end{align*}
		where in the first inequality we used that $r+T\leq 2^\ell \leq 2(r+T)$ and $T\ge 0$, and for the last inequality we assume that $r+T\ge 10$ in which case $\left(1-\frac{1}{r+T}\right)^{r+T}\ge \frac 1 3$.
		Assume first that $r \le 2T$, so the above probability is at least $\frac{1}{18T} \ge \frac{1}{20T}$. Then the probability that the answer $v$ is not isolated in any instance $\I_{\ell,j}$ over all $j \in [100T \ln(n)]$ is at most $(1-\frac{1}{20T})^{100T \ln(n)} = n^{100T \ln(1-\frac{1}{20T})}\le n^{-5}$ (using the fact $\ln(1-x) \le -x$).
		By a union bound over all $r\le |U| = n$ answers, with probability $\ge 1-n^{-4}$ each of the $r$ answers will be isolated in some instance $\I_{\ell,j}$, so the \pb{Unique-$k$-Set-Intersection} algorithm will return each answer at least once. Hence, we see all answers and can correctly return $\min\{r,T\}$ of them.
		
		Now consider the case $r > 2T$. We have $100T\ln(n)$ rounds where in round $j$ we make a random experiment, creating an instance $\I_{\ell,j}$ as above. This is essentially a coupon collector problem (see e.g.~\cite{mitzenmacher2017probability} for more background) which we analyze next. We will show that with high probability we found $T$ different answers after all rounds are done.
		If the event $\mathcal{E}_v$ happens, we say that the answer $v$ was found. Since the events $\mathcal{E}_v$ are disjoint, in each round we get an answer with probability at least
		\begin{align*}
			\frac{r}{6(r+T)} \ge \frac{r}{6(r+\frac{1}{2}r)} = \frac{1}{9} \ge \frac{1}{10}.
		\end{align*}
		In case we get an answer, it is a uniformly random one.
		Since $T < r/2$, as long as we have seen less than $T$ distinct answers, the probability that the next answer is a fresh one is at least half. In combination with the bound on getting an answer, in each round we find an answer that we have not seen before with probability at least $\frac{1}{20}$.
		We next treat the $100T\ln(n)$ rounds as $T$ batches of $100\ln(n)$ rounds. The probability of not getting a new answer in a batch of $100 \ln(n)$ rounds is at most $(1-\frac{1}{20})^{100\ln(n)} \le n^{-5}$. Taking the union bound over $T$ batches, we get at least $T$ distinct answers over all rounds with probability at least $1 - \frac{T}{n^5} \ge 1 - \frac{1}{n^4}$.
		
		In both cases, with high probability we return the correct answer.
		
		To analyze the running time, note that we construct $100T\ln(n) = \softO(T)$ instances $\mathcal{I}_{\ell, j}$, so the preprocessing takes time $\softO(Tn^{k- \varepsilon})$. Also, each query to an instance $\mathcal{I}_{\ell,j}$ takes time $\softO(n^\delta)$ and since we make $\softO(T)$ of them, the total time of all queries is $\softO(Tn^\delta)$. Finally, note that the different answers can easily be collected in, say, a binary search tree on which all operations are in time $O(\ln(T))= \softO(1)$, so the queries to $\mathcal{I}$ can overall be answered in time $\softO(Tn^\delta)$.
	\end{proof}
	
	\subsection{\boldmath$k$-Set-Disjointness}
	
	Recall that the \pb{$k$-Set-Disjointness} problem is defined similarly as \pb{$k$-Set-Intersection}, except that a query should just decide whether the intersection $S_{1,j_1}\cap \ldots \cap S_{k,j_k}$ is empty or not. Here an instance~$\I$ is given by a universe set $U$ and families $\A_1,\ldots,\A_k \subseteq 2^U$, where we write $\A_i = \{S_{i,1},\ldots,S_{i,|\A_i|}\}$. 
	Again we denote the number of sets by $n = \sum_i |\A_i|$.
	
	\begin{lemma}\label{lem:uniquetodisjointness}
		Let $\varepsilon, \delta > 0$ and $0 < \theta < 1$ be constants. If there is a randomized algorithm for \pb{$k$-Set-Disjointness} on $n$ sets and universe size $|U| = \Theta(n^\theta)$ with preprocessing $\softO(n^{k-\varepsilon})$ and query time $\softO(n^\delta)$, then there is a randomized algorithm for \pb{Unique-$k$-Set-Intersection} on $n$ sets and universe size $|U| = \Theta(n^\theta)$ with preprocessing time $\softO(n^{k-\varepsilon})$ and query time $\softO(n^\delta)$.
	\end{lemma}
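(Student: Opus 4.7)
The plan is to reduce Unique-$k$-Set-Intersection to $k$-Set-Disjointness by a standard bit-wise isolation (binary search on the identity of the unique element). Label the universe elements by bit strings of length $B := \lceil \log |U| \rceil = O(\log n)$. For every bit position $b \in [B]$ construct a $k$-Set-Disjointness instance $\mathcal{I}_b$ over the sub-universe $U_b := \{u \in U \mid \text{the } b\text{-th bit of } u \text{ is } 0\}$, where each family $\mathcal{A}_i$ is replaced by the restricted family $\{S_{i,j} \cap U_b : j \in [|\mathcal{A}_i|]\}$, keeping set indices unchanged. Also preprocess the original instance $\mathcal{I}$. Each of the $B+1 = \softO(1)$ instances has the same number of sets $n$, universe of size $O(n^\theta)$, and by the hypothesis can be preprocessed in time $\softO(n^{k-\varepsilon})$.

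To answer a unique-intersection query on indices $(j_1, \ldots, j_k)$, first query disjointness on $\mathcal{I}$: if the intersection is empty, return $\bot$. Otherwise, for each $b \in [B]$, query disjointness on $\mathcal{I}_b$ at the same indices; set the $b$-th bit of the candidate $u^\star$ to $0$ if this query reports non-empty, and to $1$ otherwise. Return $u^\star$ (optionally verifying membership in the intersection by one extra pass through the original relations). Since $B = \softO(1)$ and each disjointness query costs $\softO(n^\delta)$, the query time is $\softO(n^\delta)$.

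Correctness is the easy part: if $S_{1,j_1} \cap \cdots \cap S_{k,j_k}$ is a singleton $\{u\}$, then the intersection restricted to $U_b$ is non-empty exactly when the $b$-th bit of $u$ is $0$, so the bits read off by the $B$ queries reconstruct $u$ exactly. In the non-unique cases (empty or $\ge 2$ elements), the specification of Unique-$k$-Set-Intersection permits any answer including $\bot$, so no further care is needed. Preprocessing is $(B+1)\cdot\softO(n^{k-\varepsilon}) = \softO(n^{k-\varepsilon})$ as claimed.

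The only minor technicality is that the disjointness hypothesis is phrased for universe size $\Theta(n^\theta)$ while our restricted instances $\mathcal{I}_b$ have universe size at most that bound (roughly $|U|/2$). This is handled either by observing that the usual convention is that the bound is an upper bound on universe size, or, to be formally safe, by padding each $\mathcal{I}_b$ with dummy universe elements that belong to no set until $|U_b| = \Theta(n^\theta)$; padding changes neither $n$ nor the correctness of any disjointness query. I do not anticipate any real obstacle here; the whole argument is a textbook isolation reduction whose only subtle point is ensuring the $O(\log n)$ overhead of binary search is absorbed into the $\softO(\cdot)$ notation, which it is.
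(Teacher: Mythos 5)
Your proof is correct and uses essentially the same approach as the paper: bit-wise isolation by constructing $O(\log n)$ $k$-Set-Disjointness subinstances, one per bit position, and reconstructing the unique element from the disjointness answers (the paper keeps elements whose $j$th bit is $1$ where you keep those whose $b$th bit is $0$, and the paper also performs the final membership check, but these are cosmetic differences). Your explicit padding remark for the $\Theta(n^\theta)$ universe-size bound is a small point the paper glosses over but is handled correctly.
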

	\begin{proof}
		The principle of our reduction is given by the following claim.
		\begin{claim}\label{clm:unique}
			Let $E$ be a set of words over $\{0,1\}$ of length $\ell$. For every $j\in [\ell]$ and $b\in \{0,1\}$, let $E^{j,b}$ denote the set of words in $E$ whose $j$th bit is different from $b$. Then $E$ contains exactly one word if and only if for all $j\in [\ell]$ exactly one of $E^{j,0}$ and $E^{j,1}$ is empty. Moreover, if $E=\{a\}$, then the $j$th bit $a_j$ of $a$ is the (unique) $b\in \{0,1\}$ such that $E^{j,b}=\emptyset$.
		\end{claim}
		\begin{proof}
			There are three cases depending on whether $E$ contains a single word, is empty or contains at least two words:
			\begin{itemize}
				\item If $E=\{a\}$, then for all $j$ and $b$ we have $E^{j,b} = \emptyset$ if $b=a_j$, and $a\in E^{j,b}\ne \emptyset$ if $b\ne a_j$.
				\item If $E=\emptyset$, then for all $j$ and $b$ we have $E^{j,b}= \emptyset$ since $E^{j,b}\subseteq E$ by construction.
				\item If $E$ contains two distinct words that differ on the $j$th bit, then $E^{j,0}$ and $E^{j,1}$ are non-empty.
			\end{itemize}
			This proves the claim.
		\end{proof}
		
		We now describe an algorithm for \pb{Unique-$k$-Set-Intersection}, using an algorithm for \pb{$k$-Set-Disjointness}, and whose correctness will be justified with the previous claim. Let $\mathcal{I}$ be an instance of \pb{Unique-$k$-Set-Intersection} consisting of sets $\mathcal{A}_1, \ldots, \mathcal{A}_k$. Assume w.l.o.g.~that the universe $U$ of $\mathcal{A}$ is the set $[|U|]$, so that each element of $U$ can be written with $\ell = \lfloor \log(|U|)\rfloor +1 = O(\log(n))$ bits, since $|U| = \Theta(n^\theta)$.
		
		We create $2\ell$ instances $\I^{j,b}$ of \pb{$k$-Set-Disjointness}, for $j\in [\ell]$ and $b\in \{0,1\}$, as follows. For any $j\in [\ell]$ and $b\in \{0,1\}$, remove every $v \in U$ whose $j$th bit is $b$ from the universe and from all sets in $\A_1,\ldots,\A_k$, and call the resulting sets $\mathcal{A}_i^{j, b}$. Define $\mathcal{I}^{j,b}$ to contain the sets $\mathcal{A}^{j,b}_i$ and the same queries as the original instance $\mathcal{I}$. Denote by $S_{i,h}^{j,b}$ the sets belonging to $\A_i^{j,b}$. We run the preprocessing of the \pb{$k$-Set-Disjointness} algorithm on each instance $\mathcal{I}^{j,b}$. Now, given a query $q = (h_1,\ldots,h_k)$, we ask the same query for each instance $\I^{j,b}$, for all $j\in [\ell]$ and $b\in \{0,1\}$. By definition, each answer $\alpha^{j,b}$ we obtain is the answer \emph{yes/no} to the question whether $E^{j,b}$ is empty, where $E^{j,b}:= S^{j,b}_{1, h_1}\cap \ldots \cap S_{k, h_k}^{j,b}$. With Claim~\ref{clm:unique}, we get that the following (two cases of the) answer to $q$ for the original instand $\I$ of \pb{Unique-$k$-Set-Intersection} is correct:
		\begin{itemize}
			\item if for all $j\in [\ell]$, we have exactly one of $E^{j,0}$or $E^{j,1}$ is empty, i.e., either $\alpha^{j,0} = \text{yes}$ or $\alpha^{j,1} = \text{yes}$, then $S^{j,b}_{1, h_1}\cap \ldots \cap S_{k, h_k}^{j,b} = \{a\}$ where the $j$th bit of $a$ is the unique $b$ such that $\alpha^{j,b} = \text{yes}$ and we return $a$;
			\item otherwise, $S^{j,b}_{1, h_1}\cap \ldots \cap S_{k, h_k}^{j,b}$ does not contain a single element, and we return $\bot$.
		\end{itemize}
		We have thus justified that the algorithm is correct. It is clear that it satisfies the claimed time bounds. 
	\end{proof}
	
	So far we measured running times in terms of the number of sets $n$. Next we show how to obtain lower bounds in terms of the input size $\|\I\| = \sum_i \sum_{S \in \A_i} |S|$ which is at most $\sum_i |\A_i|\cdot|U|$.
	
	\begin{lemma}\label{lem:densetosparse}
		For any constants $\eps,\delta > 0$, if there is an algorithm that solves \pb{$k$-Set-Disjointness} with preprocessing time $\softO(\|\I\|^{k-\varepsilon})$ and query time $\softO(\|\mathcal{I}\|^\delta)$, then for any constant $0 < \theta \le \frac{\eps}{2k}$ there is an algorithm that solves \pb{$k$-Set-Disjointness} on $n$ sets and universe size $|U| = \Theta(n^\theta)$ with preprocessing time $\softO(n^{k-\varepsilon/2})$ and query  time $\softO(n^{\delta(1+\theta)})$.
	\end{lemma}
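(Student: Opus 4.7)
The plan is very direct: the sparse setting simply means the input size is automatically small, so we just apply the general algorithm as a black box and track the exponents carefully. The key observation is that an instance $\mathcal{I}$ of \pb{$k$-Set-Disjointness} with $n = \sum_i |\mathcal{A}_i|$ sets and universe size $|U| = \Theta(n^{\theta})$ satisfies the trivial bound
\[
\|\mathcal{I}\| = \sum_i \sum_{S \in \mathcal{A}_i} |S| \le n \cdot |U| = O(n^{1+\theta}),
\]
since each set is a subset of $U$.

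Given this, I would run the assumed general algorithm on $\mathcal{I}$ without modification. The preprocessing time becomes
\[
\softO(\|\mathcal{I}\|^{k-\varepsilon}) = \softO\!\bigl(n^{(1+\theta)(k-\varepsilon)}\bigr) = \softO\!\bigl(n^{k - \varepsilon + \theta k - \theta \varepsilon}\bigr) \le \softO\!\bigl(n^{k - \varepsilon + \theta k}\bigr).
\]
Using the hypothesis $\theta \le \varepsilon/(2k)$, we have $\theta k \le \varepsilon/2$, so the exponent is at most $k - \varepsilon/2$, giving the desired preprocessing bound $\softO(n^{k-\varepsilon/2})$. Similarly, the query time becomes
\[
\softO(\|\mathcal{I}\|^{\delta}) = \softO\!\bigl(n^{\delta(1+\theta)}\bigr),
\]
which matches the claimed bound exactly.

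There is no real obstacle here: the lemma is essentially a bookkeeping step that converts a bound stated in terms of the input size $\|\mathcal{I}\|$ into one stated in terms of the number of sets $n$, at the mild cost of shrinking the saving $\varepsilon$ to $\varepsilon/2$. The only thing to be careful about is to verify that $\theta$ is chosen small enough so that the blow-up $(1+\theta)(k-\varepsilon)$ in the preprocessing exponent stays below $k-\varepsilon/2$, which is precisely what the constraint $\theta \le \varepsilon/(2k)$ guarantees.
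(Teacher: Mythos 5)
Your proof is correct and follows essentially the same approach as the paper's: bound $\|\mathcal{I}\| = O(n^{1+\theta})$ from the universe size, plug into the assumed algorithm, and use $\theta \le \eps/(2k)$ to collapse the exponent to $k - \eps/2$. The arithmetic matches the paper's verbatim.
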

	\begin{proof}
		For every instance $\mathcal{I}$ of \pb{$k$-Set-Disjointness} on universe size $|U| = \Theta(n^\theta)$, we have that $\|\mathcal I\|= O(n^{1+\theta})$ because $\|\I\| \le \sum_i |A_i|\cdot|U|\le n|U|$. Now on $\mathcal{I}$, the assumed algorithm for \pb{$k$-Set-Disjointness} takes preprocessing time 
		\begin{align*}
			\softO(\|\I\|^{k-\varepsilon}) = \softO(n^{k + k \theta - \varepsilon - \theta \varepsilon}) = \softO(n^{k - \varepsilon/2}),
		\end{align*}
		where we used $\theta \le \frac{\eps}{2k}$ in the last step.
		Observing that the query time is $\softO(\|\mathcal{I}\|^\delta)= \softO(n^{\delta(1+\theta)})$ completes the proof.
	\end{proof}
	
	Finally we are ready to prove the main result of this section: the hardness of \textsf{$k$-Set-Disjointness} for all $k \ge 2$ based on the \textsf{Zero-$(k+1)$-Clique} Conjecture.
	
	\begin{proof}[Proof of Theorem~\ref{thm:lowerksetdisjointness}]
		Almost all algorithms discussed in this proof are randomized; we omit this word for readability.
		
		Assume that there is a constant $\eps > 0$ such that for all $\delta > 0$ there is an algorithm for \pb{$k$-Set-Disjointness} with preprocessing time $O(\|\I\|^{k-\varepsilon})$ and query time $O(\|\mathcal{I}\|^\delta)$. 
		
		For $\eps' := \frac{\eps}{2k}$, Theorem~\ref{thm:intersection} shows that there is a constant $\delta' > 0$ such that \pb{$k$-Set-Intersection} on $n$ sets, universe size $|U|=n$, and $T = \Theta(n^{1-\eps'/2})$ has no algorithm with preprocessing time $O(n^{k+1-\varepsilon'})$ and query time $O(Tn^{\delta'})$, if we assume the \pb{Zero-$(k+1)$-Clique} Conjecture.
		
		Using this $\delta'$, we set $\delta := \frac{\delta'}{2(1+\eps'/2)}$. Since we can solve \pb{$k$-Set-Disjointness} with preprocessing time $O(\|\I\|^{k-\varepsilon})$ and query time $O(\|\mathcal{I}\|^\delta)$, by Lemma~\ref{lem:densetosparse} for $\theta := \eps'/2$ there is an algorithm for \pb{$k$-Set-Disjointness} on $n$ sets and universe size $\Theta(n^{\eps'/2})$ with preprocessing time $\softO(n^{k-\varepsilon/2})$ and query time $\softO(n^{\delta(1+\eps'/2)}) = \softO(n^{\delta'/2})$. By Lemma~\ref{lem:uniquetodisjointness}, we get an algorithm with the same time bounds for \pb{Unique-$k$-Set-Intersection} on $n$ sets and universe size $\Theta(n^{\eps'/2})$. Then, using Lemma~\ref{lem:tounique}, we get an algorithm for \pb{$k$-Set-Intersection} on $n$ sets, universe size $|U|=n$, and $T = \Theta(n^{1-\eps'/2})$ with preprocessing time $\softO(T n^{k-\varepsilon/2})$ and query time $\softO(Tn^{\delta'/2}) = O(T n^{\delta'})$.
		We simplify the preprocessing time to $\softO(T n^{k-\varepsilon/2}) = \softO(n^{k+1-\eps'/2-\varepsilon/2}) = O(n^{k+1-\eps'})$, where we used $\frac{\eps'}{2}<\eps' = \frac{\eps}{2k} < \frac{\eps}{2}$. 
		The obtained preprocessing time $O(n^{k+1-\eps'})$ and query time $O(T n^{\delta'})$ contradict the statement that we got from Theorem~\ref{thm:intersection} in the last paragraph, so the \pb{Zero-$(k+1)$-Clique} Conjecture is false.
	\end{proof}
	
	\section{Removing Self-joins in Direct Access}\label{sec:selfjoins}
	
	In Section~\ref{sec:lowerdirect_star} we showed how to simulate a $k$-star query by embedding it into another query $Q$ of high incompatibility number. The construction required that $Q$ was self-join free. However, we would like to show lower bounds also for queries that have self-joins. To this end, in this section, we show how to simulate direct access to queries without self-joins by direct access to queries with self-joins in such a way that the incompatibility number---and in fact also the underlying hypergraph---does not change. This will allow us to show hardness of queries with self-joins later on. To state our result formally, we start with a definition.
	Let $Q$ be a join query, possibly with self-joins.
	We say that a query $\Qsf$ is a \emph{self-join free version} of a query $Q$ if it can be obtained from $Q$ by replacing the relation symbols such that each relation symbol appears at most once in $\Qsf$.
	The main result of this section is the following.

	\begin{theorem}\label{thm:selfjoins}
		Let $\Qsf$ be a self-join-free version of a join query $Q$, and let $L$ be a variable order for $Q$ and $\Qsf$. 
		If there is a lexicographic direct access algorithm for $Q$ by $L$ with preprocessing time $p(|D|) \in \Omega(|D|)$ and access time $r(|D|)$, then there is a lexicographic direct access algorithm for $\Qsf$ by $L$ with preprocessing time $O(p(|D|))$ and access time $O(r(|D|) \log^2 |D|)$. The same sentence holds after exchanging $Q$ and $\Qsf$.
	\end{theorem}
	
	One of the directions of Theorem~\ref{thm:selfjoins} is trivial as an algorithm for a self-join free query can be used to solve queries with self-joins with only linear overhead at preprocessing by duplicating the relevant relations, and the $\log^2(|D|)$ factor of the access time is not required. However, the other direction is by no means trivial. In particular, the same claim for the task of enumeration instead of direct access does not hold~\cite{berkholz2020tutorial,CarmeliS22}.
	We devote the rest of this section to proving this direction.
	
	\subsection{Colored and Self-Join-Free Versions}
	
	To prove that direct access for $Q$  implies direct access for $\Qsf$, it will be convenient to work with an intermediate query.
	We define the \emph{colored version} $Q^c$ of a query $Q$ to be the query obtained from $Q$ by adding the unary atom $R_x(x)$ for each variable $x$.
	The colored version is useful because it is equivalent to any self-join free version:
	
	\begin{lemma}\label{lem:color-sjf}
		Let $Q^c$ and $\Qsf$ be the colored version and a self-join-free version of the same join query $Q$.
		There are lex-preserving exact reductions in both directions between $Q^c$ and $\Qsf$.	
	\end{lemma}
	\begin{proof}
		Without loss of generality, let $\Qsf$ denote the self-join free version of $Q$ obtained by replacing each atom $R(x_1, \ldots, x_k)$ by $R_{\underline{x_1, \ldots, x_k}}(x_1, \ldots, x_k)$ (where $R_{\underline{x_1, \ldots, x_k}}$ is a new relation symbol associated to $R$ and the list of variables $x_1, \ldots, x_k$). The easy direction is the following exact reduction from $Q^c$ to $\Qsf$. Given a database $D^c$ for $Q^c$, we construct in linear time the database $D^\text{sf}$ as follows.
		For every atom $R(x_1, \ldots, x_k)$ of $Q$, we set
		\begin{align*}R_{\underline{x_1, \ldots, x_k}}^{D^\text{sf}}:= \{ (a_1, \ldots, a_k)\mid D^c \models R(a_1, \ldots, a_k)\land \bigwedge_{i\in [k]}R_{x_i}(a_i)\}.\end{align*}
		Obviously, this entails the equality $Q^c(D^c) = \Qsf(D^{\text{sf}})$ which shows the first direction.
		
		Let us construct the reduction from $\Qsf$ to $Q^c$. (Note that this is the reduction we will need in the following.) Given a database $D^{\text{sf}}$ for $\Qsf$, we construct in linear time the database $D^c$ defined as follows. We take $\dom(D^c):= \var(Q)\times \dom(D^{\text{sf}})$. For each $x\in \var(Q)$, we set
		\begin{align*}
			R^{D^c}_x := \{(x,b)\mid b\in \dom(D^{\text{sf}})\},
		\end{align*}
		and for each atom $R(x_1, \ldots, x_k)$ of $Q$, we include in $R^{ D^c}$
		\begin{align*}
			\{((x_1, a_1), \ldots, (x_k, a_k))\mid D^{\text{sf}} \models R_{\underline{x_1, \ldots, x_k}}(a_1, \ldots, a_k)\}.
		\end{align*}
		That is, $R^{ D^c}$ is defined by the union of the corresponding relations in $\Qsf$.
		
		For every tuple $(a_1, \ldots, a_k)\in (\dom(D^{\text{sf}}))^k$, we get the equivalence
		\begin{align*}
			D^{\text{sf}}\models R_{\underline{x_1, \ldots, x_k}}(a_1, \ldots, a_k) \Leftrightarrow D^c \models R((x_1, a_1), \ldots, (x_k,a_k))\land \bigwedge_{i\in [k]}R_{x_i}((x_i,a_i)).
		\end{align*}
		This immediately implies for every assignment $a$ of the variables of $\Qsf$ in $\dom(D^{\text{sf}})$ the equivalence $a\in \Qsf(D^{\text{ sf}}) \Leftrightarrow \bar a \in Q^c(D^c)$ where $\bar a$ is the mapping that assigns to every $x\in \var(Q)$ the value $(x, a(x))$. It is easy to verify that the function $a\mapsto \bar a$ is a bijection between $\Qsf(D^{\text{sf}})$ and $Q^c(D^c)$, computable in constant time and preserving any lexicographic order.
	\end{proof}

	Lemma~\ref{lem:color-sjf} implies that if there is a direct-access algorithm for $Q^c$ in some lexicographic order $L$, then there is a direct-access algorithm for $\Qsf$ in order $L$ with the same time guarantees. With this at hand, it is left to prove that efficient direct access for any query implies the same for its colored version. 
	
	\subsection{Counting Under Prefix Constraints and Direct Access}
	
	A natural approach for lexicographic direct access is to decide on the assignment to the variables one at a time in the required order. This approach has a strong connection to the task of \emph{counting} the possible assignments when a prefix of the variables is already determined, and it will be convenient in our proof to go through this intermediate task.
	
	Consider the following notion of constraints on answers. Let $Q$ be a join query, $L$ an order of its variables, and $D$ a database. We define a \emph{prefix constraint} $\mathfrak c$ on a prefix $x_1, \ldots, x_r$ of $L$ to be a mapping $\mathfrak c$ on variables $\{x_1, \ldots, x_r\}$ such that for $i\in [r-1]$ we have $\mathfrak c(x_i)\in \dom(D)$ and $\mathfrak c(x_r)$ is an interval in $\dom(D)$. We denote by $\var(\mathfrak c)$ the set $\{x_1, \ldots, x_r\}$ of variables on which $\mathfrak c$ is defined. To make the notation for prefix constraints more symmetric, we treat elements of $\dom(D)$ as intervals of length $1$, so $\mathfrak c$ maps all $x_i$ in the prefix to intervals, but for $i\in [r-1]$ these intervals have length $1$. We say that an answer $a\in Q(D)$ satisfies the prefix constraint $\mathfrak c$ if and only if for all $i\in [r]$ we have $a(x_i)\in \mathfrak c(x_i)$.
	\emph{Counting under prefix constraints} is a task defined by a query and an order where, after a preprocessing phase on a database, the user can specify a prefix constraint $\mathfrak c$ and expect the number of query answers that satisfy $\mathfrak c$. We call the time it takes to provide an answer given a prefix constraint the \emph{counting time}.
	
	Prefix constraints will be useful in the proof of Theorem~\ref{thm:selfjoins} due to the following equivalence between direct access and counting.
	
	\begin{proposition}\label{prop:prefixcounting}
		Let $Q$ be a join query and $L$ an ordering of its variables. 
		\begin{itemize}
			\item
			If lexicographic direct access for $Q$ and $L$ can be done with preprocessing time $p(|D|) \in \Omega(|D|)$ and access time $r(|D|)$, then counting under prefix constraints for $Q$ and $L$ can be done with preprocessing time $O(p(|D|))$ and counting time $O(r(|D|) \log |D|)$.
			\item
			If counting under prefix constraints for $Q$ and $L$ can be done with preprocessing time $p(|D|) \in \Omega(|D|)$ and counting time $r(|D|)$, then lexicographic direct access for $Q$ and $L$ can be done with preprocessing time $O(p(|D|))$ and access time $O(r(|D|) \log |D|)$.
		\end{itemize}
	\end{proposition}
	\begin{proof}
		To go from a direct access algorithm to a counting algorithm, note that the answers satisfying a prefix constraint appear consecutively in the ordered list of query answers. We use binary search to find the indices of the first and the last answers satisfying the given prefix constraint. The difference between these indices is the count we are looking for.
		
		For the other direction, given an index we want to access, we iteratively fix the assignments to the query variables in the specified order $L$. For each variable, we perform binary search on $\dom(D)$ by using prefix constraints to count the answers in the two halves of the interval and choose the relevant half.
	\end{proof}

	\subsection{Counting Under Prefix Constraints for the Colored Version}
	
	The final and most involved component of our proof of Theorem~\ref{thm:selfjoins} is to use an algorithm for counting under prefix constraints for any join query to solve the same task for its colored version.
	
	\begin{lemma}\label{lem:colorjoins}
		Let $Q$ be a join query and $L$ an ordering of its variables.
		If there is an algorithm for counting under prefix constraints for $Q$ and $L$, then there is an algorithm for counting under prefix constraints for the colored version $Q^c$ and $L$ with the same time guarantees.	
	\end{lemma}
	
	Notice that, by combining Lemma~\ref{lem:color-sjf} with Lemma~\ref{lem:colorjoins}, we get that an algorithm for counting under prefix constraints for a join query implies an algorithm for counting under prefix constraints for any of its self-join free versions.
	
	We remark that Lemma~\ref{lem:colorjoins} cannot simply be proven by constructing an equivalent database for $Q$ given a database for $Q^c$ by filtering the original relations using the new unary relations and then removing the unary relations, due to self-joins. As an example, if the query is 
	$Q(x,y) \datarule R(x), R(y)$, the colored version is $Q(x,y) \datarule R(x), R(y), R_x(x), R_y(y)$, and we cannot simply eliminate the last two atoms of the colored version by filtering the other atoms because we cannot have two copies of $R$ each filtered according to a different variable.
	We next prove Lemma~\ref{lem:colorjoins}.
	
	It will be convenient to have a more symmetric perspective on join queries. To this end, we remind the reader of some basics on homomorphisms between finite structures. Let $A$ and $B$ be two finite structures over the same vocabulary. Then $h:\dom(A)\rightarrow \dom(B)$ is called a \emph{homomorphism} from $A$ to $B$ if and only if, for every relation $R^A$ of $A$ and every tuple $(a_1, \ldots, a_k)\in R^A$, we have $(h(a_1), \ldots, h(a_k))\in R^B$. To simplify notation, we make the convention $h(a) := (h(a_1), \ldots, h(a_k))$ for $a= (a_1, \ldots, a_k)$. An \emph{automorphism} of $A$ is a bijective homomorphism from $A$ to itself. We will use the fact that homomorphisms compose, i.e., if $h_1$ is a homomorphism from $A$ to $B$ and $h_2$ is a homomorphism from $B$ to $C$, then $(h_2 \circ h_1)$ defined for every $a\in \dom(A)$ by $(h_2 \circ h_1)(a):= h_2(h_1(a))$ is a homomorphism from $A$ to $C$.
	
	Every query $Q$ is assigned a finite structure $A_Q$ on domain $\var(Q)$ as follows: for every $k$-ary relation symbol $R$ of $Q$, the structure $A_Q$ has a $k$-ary relation $R^{A_Q}$ which contains exactly the tuples $(x_{i_1}, \ldots, x_{i_k})$ such that $R(x_{i_1}, \ldots, x_{i_k})$ is an atom of $Q$. Then $h$ is an answer to $Q$ on a database $D$ if and only if it is a homomorphism from $A_Q$ to $D$.
	The notions on prefix constraints carry over from query answers to homomorphisms in the obvious way.
	
	\begin{example}\label{ex:queryStructure}
		Consider the join query $Q(x,y,z) \datarule R(x), R(y), R(z)$ which we will use as a running example in this section. The associated structure $A_Q$ has the domain $\{x, y, z\}$ and the single relation $R^{ A_Q}= \{(x),(y),(z)\}$. There are $3^3$ homomorphisms from $A_Q$ to itself (in this example, any variable can map to any variable), but there are only $3!$ automorphisms. In particular, if we consider the order $L=(x,y,z)$ of the variables and we denote by $(a,b,c)$ the mapping $\{x\mapsto a, y\mapsto b, z\mapsto c\}$ for any values $a,b,c$, we get that the automorphisms correspond to the vectors of permutations of the vertices.
	\end{example}
	
	\paragraph*{Notation.} For a join query $Q$ to a database $D$, an order $L$ of $\var(Q)$, and a prefix constraint $\mathfrak{c}$, we let $\hom(A_Q, D, \mathfrak{c})$ denote the set of homomorphisms from $A_Q$ to $D$ (corresponding to the set of answers in $Q(D)$) which satisfy $\mathfrak{c}$.
	
	Let $D^c$ be a database on which we want to solve the counting problem for $Q^c$.
	We construct a new database $D$ over the relations of $Q$ (without the additional unary relations) as follows by tagging the values of the database by variable names. 
	For every relation symbol $R$ of arity $k$ of $Q$, the database $D$ contains the relation:
	\begin{align*}
		R^D:= \{((a_1, b_1), \ldots, (a_k, b_k))\mid (a_1, \ldots, a_k)\in R^{A_Q}, (b_1, \ldots, b_k)\in R^{D^c}, \forall i\in [k]: b_i \in R_{a_i}^{D^c}\}.
	\end{align*}
	The domain is $\dom(D) = \{(a, b) \in \var(Q)\times \dom(D^c)\mid b \in R_a^{D^c}\}$ which we order lexicographically, so 
	$(a,b) \prec (a', b') \Leftrightarrow a \prec_L a' \lor (a = a' \land b \prec b')$, where $\prec_L$ is the order relation of~$L$.
	
	\begin{example}
		Consider again the query from Example~\ref{ex:queryStructure}.
		The structure $A_{Q^c}$ for the colored version of the query has the relations $R^{A_{Q^c}}=\{(x),(y),(z)\}$, $R_x^{A_{Q^c}}=\{(x)\}$, $R_y^{A_{Q^c}}=\{(y)\}$, and $R_z^{A_{Q^c}}=\{(z)\}$.
		Given a database $D^c$ with $R^{D^c}=\{(a),(b),(c)\}$, $R_x^{D^c}=\{(a),(c)\}$, $R_y^{D^c}=\{(a),(b)\}$, and $R_z^{D^c}=\{(a)\}$, we construct the database $D$ with $R^D=\{(x,a),(x,c),(y,a),(y,b),(z,a)\}$.
	\end{example}
	
	Let $\pi_1:\dom(D)\rightarrow \var(Q)$ be the projection on the first coordinate, i.e., for all $(a,b)\in \dom(D)$ we have $\pi_1(a,b) = a$. Analogously, we define $\pi_2: \dom(D)\rightarrow \dom(D^c)$ as the projection to the second coordinate. We will use the following simple observation throughout the remainder of this proof.
	
	\begin{observation}\label{obs:automorphism}
		Let $h$ be a homomorphism from $A_Q$ to $D$. Then we have $(\pi_1\circ h)(\var(Q)) = \var(Q)$ if and only if $\pi_1\circ h$ is an automorphism.
	\end{observation}
	
	\begin{example}		
		Consider $h_1=((y,a),(x,a),(z,a))$ and $h_2=((y,a),(y,b),(z,a))$ in our running example, following the homomorphism notation given in Example~\ref{ex:queryStructure}. We have that $\pi_1\circ h_1$ is an automorphism, while $\pi_1\circ h_2$ is not.
	\end{example}
	
	Now let $\mathfrak c$ be a prefix constraint on a prefix $x_1, \ldots, x_r$ of $L$., i.e., we want to count the number of answers $h$ in $Q^c(D^c)$ which satisfy $\mathfrak c$. Note that this subset of the answers can equivalently be seen as the set of homomorphisms $h$ from $A_{Q^c}$ to $D^c$ which satisfy $\mathfrak c$. We denote this set by $\hom(A_{Q^c}, D^c, \mathfrak c)$.
	We also denote by $\Nid$ the set of homomorphisms $h'$ from $A_Q$ to $D$ for which $\pi_1\circ h'$ is the identity, denoted $\id$, on $\var(Q)$, and $\pi_2\circ h'$ satisfies $\mathfrak c$.  We show that there is a bijection between these two sets of homomorphisms.
	
	\begin{claim}\label{clm:bijection}
		\[|\hom(A_{Q^c}, D^c, \mathfrak c)| = |\Nid|\]
	\end{claim}
	\begin{example}
		In our running example, consider the prefix constraint $\mathfrak c$ assigning $a$ to $x$.
		We have that $\Nid=\{((x,a),(y,a),(z,a)),((x,a),(y,b),(z,a))\}$, and for the colored version we have that $\hom(A_{Q^c}, D^c, \mathfrak c)=\{(a,a,a),(a,b,a)\}$.
	\end{example}
	\begin{proof}[Proof of Claim~\ref{clm:bijection}]
		We show a bijection between $\hom(A_{Q^c}, D^c, \mathfrak c)$ and $\Nid$.
		For every $h\in \hom(A_{Q^c}, D^c, \mathfrak c)$, define $P(h):= h'$ where for every $a\in \var(Q)$ we set $h'(a):= (a, h(a))$. We claim that $P$ is the required bijection.
		
		We first show that $P$ is a mapping from $\hom(A_{Q^c}, D^c, \mathfrak c)$ to $\Nid$.
		We show that $h'$ is a homomorphism from $A_Q$ to $D$. Consider any relation $R^{A_{Q}}$ of $A_Q$ and a tuple $(a_1, \ldots, a_k)\in R^{A_{Q}}$. Then $h'(a_1, \ldots, a_k) = ((a_1, h(a_1)), \ldots, (a_k, h(a_k)))$ by definition. We have that $(h(a_1), \ldots, h(a_k))\in R^{D^c}$, because $h$ is a homomorphism.
		By definition of $Q^c$, we have that $a_i\in R_{a_i}^{A_{Q^c}}$ for all $1\le i\le k$. As $h$ is a homomorphism from $A_{Q^c}$ to $D^c$, we have that $h(a_i)\in R_{a_i}^{D_c}$. By definition of $D$, we conclude that $h'(a_1, \ldots, a_k)\in R^D$, and thus $h'$ is indeed a homomorphism.
		By definition of $h'$, we have $\pi_1(h'(a))= a$ for every $a\in \var(Q)$, so $\pi_1\circ h'$ is the identity on $\var(Q)$. Moreover, since $h\in \hom(A_{Q^c}, D^c, \mathfrak c)$, we have that $h$ satisfies $\mathfrak c$ and with $h = \pi_2\circ h'$ it follows that $\pi_2\circ h'$ satisfies $\mathfrak c$. So we have that $h'\in \Nid$.
		
		It remains to show that $P$ is a bijection. 
		To show that $P$ is injective, let $h_1, h_2\in \hom(A_{Q^c}, D^c, \mathfrak c)$ be two different functions. There is $a\in \var(Q)$ such that $h_1(a) \ne h_2(a)$. We get that $P(h_1)(a) = (a, h_1(a)) \ne (a, h_2(a)) = P(h_2)(a)$, so $P(h_1)\ne P(h_2)$, and thus $P$ is indeed injective.
		To show that $P$ is surjective, let $h'\in \Nid$. Then, by definition of $D$, there is a function $h:\var(Q)\rightarrow \dom(D^c)$ such that for every $a\in \var(Q)$ we have $h'(a) = (a, h(a))$. We claim that $h\in \hom(A_{Q^c}, D^c, \mathfrak c)$. This will prove surjectivity since $P(h)= h'$. First note that $h=\pi_2\circ h'$ and, by definition of $\Nid$, $\pi_2\circ h'$ satisfies $\mathfrak c$, so $h$ satisfies $\mathfrak c$. To see that $h$ is a homomorphism, consider first the unary atoms of $Q^c$ of the form $R_a(a)$. Since for every $a\in \var(Q)$ we have $h'(a)= (a, h(a))\in \dom(D)$, we have by definition of $\dom(D)$ that $h(a)\in R_{a}^{D^c}$ and thus the requirement for homomorphisms is verified for $R_a$. Now let $R$ be one of the other relations of $A_{Q^c}$, so $R$ is a relation of $A_Q$ and $R^{A_{Q^c}}= R^{A_Q}$. Let $(a_1, \ldots, a_k)\in R^{A_Q}$, then $h'(a_1, \ldots, a_k)= ((a_1, h(a_1)), \ldots, (a_k, h(a_k)))\in R^D$. From the definition of $D$, it follows then directly that $h(a_1, \ldots, a_k)\in R^{D^c}$ which shows that $h$ is a homomorphism as claimed.
	\end{proof}
	
	We conclude that it is enough to compute $|\Nid|=|\hom(A_{Q^c}, D^c, \mathfrak c)|$ to answer the counting problem for $\mathfrak c$. To do this, we find its connection to other counting tasks that we can directly compute.
	Let $\aut(A_Q, \mathfrak c)$ denote the set of automorphisms of $A_Q$ that are the identity on all $a\in \var(\mathfrak c)$. Moreover, let $\Naut$ be the set of homomorphisms $h'$ from $A_Q$ to $D$ such that $\pi_1\circ h'\in \aut(A_Q, \mathfrak c)$ and $\pi_2\circ h'$ satisfies $\mathfrak c$. In particular, if $h'\in \Naut$, then $(\pi_1\circ h')(\var(Q))= \var(Q)$.

	\begin{claim}\label{clm:sizes-eq}
		\begin{align*}
			|\Naut| = |\Nid|\cdot|\aut(A_Q, \mathfrak c)|.	
		\end{align*}
	\end{claim}
	\begin{example}
		We have seen that $|\Nid|=\{((x,a),(y,a),(z,a)),((x,a),(y,b),(z,a))\}$ in our running example. We also have that
		$\aut(A_Q, \mathfrak c)=\{(x,y,z),(x,z,y)\}$, and $\Naut$ is $\{((x,a),(y,a),(z,a)),((x,a),(y,b),(z,a)),((x,a),(z,a),(y,a)),((x,a),(z,a),(y,b))\}$. We get that $|\Naut| = 4 = 2\cdot 2= |\Nid|\cdot|\aut(A_Q, \mathfrak c)|$.
	\end{example}
	\begin{proof}[Proof of Claim~\ref{clm:sizes-eq}]
		We first show 
		$\Naut = \{h\circ g\mid h\in \Nid, g\in \aut(A_Q, \mathfrak c)\}.$
		
		For the containment ``$\supseteq$'', let $h\in \Nid$ and $g\in \aut(A_Q, \mathfrak c)$.
		Note first that $h\circ g$ is a homomorphism from $A_Q$ to $D$.
		Next, as $\pi_1\circ h$ is the identity, we have that $\pi_1\circ h\circ g\in \aut(A_Q, \mathfrak c)$.
		By definition of $\aut(A_Q, \mathfrak c)$, we have $g|_{\var(\mathfrak c)}= \id$, so $(\pi_2\circ h\circ g)|_{\var(\mathfrak c)} = (\pi_2\circ h)|_{\var(\mathfrak c)}$, and $\pi_2\circ h\circ g$ satisfies $\mathfrak c$ because $h\in\Nid$. Thus, $h\circ g\in \Naut$ which proves the containment.
		
		For the direction ``$\subseteq$'', let $h'\in \Naut$. We define $g := \pi_1\circ h'\in \aut(A_Q, \mathfrak c)$, so $g^{-1}$ is well-defined, and $g^{-1}\in \aut(A_Q, \mathfrak c)$ since the inverse of an automorphism is an automorphism as well. Now define $h:= h'\circ g^{-1} = h'\circ (\pi_1\circ h')^{-1}$. We claim that $h\in \Nid$. First, as a composition of homomorphisms, $h$ is a homomorphism as well. Moreover, $\pi_1\circ h = (\pi_1\circ h')\circ (\pi_1\circ h')^{-1} = \id$. Also, we have that $g^{-1}$ is the identity on $\var(\mathfrak c)$, so $(\pi_2 \circ h)|_{\var(\mathfrak c)} = (\pi_2\circ h'\circ g^{-1})|_{\var(\mathfrak c)} = (\pi_2\circ h')|_{\var(\mathfrak c)}$. Since $\pi_2\circ h'$ satisfies $\mathfrak c$, we have that $\pi_2 \circ h$ does too. It follows that $h\in \Nid$, as claimed, which proves the containment in the second direction.
		
		To prove the claim, we show that for $h_1, h_2\in \Nid$ and $g_1, g_2\in \aut(A_Q, \mathfrak c)$, whenever $h_1\ne h_2$ or $g_1\ne g_2$, then $h_1\circ g_1\ne h_2\circ g_2$. Assume first that $g_1\ne g_2$. Since $\pi_1\circ h_1 =\pi_1\circ h_2 = \id$, we have $\pi_1\circ h_1\circ g_1 = g_1 \ne g_2 = \pi_1\circ h_2\circ g_2$. So $h_1\circ g_1$ and $h_2\circ g_2$ differ in the first coordinate. Now assume that $g_1= g_2 = g$ but $h_1\neq h_2$. Let $a\in \var(Q)$ be such that $h_1(a)\ne h_2(a)$. Since $g$ is a bijection, the inverse $g^{-1}$ exists and is a bijection as well. Set $a':= g^{-1}(a)$. Then $(h_1\circ g)(a') = (h_1\circ g \circ g^{-1})(a)= h_1(a) \ne h_2(a) = (h_2\circ g \circ g^{-1})(a) = (h_2\circ g)(a')$. So again $h_1\circ g_1 \ne h_2\circ g_2$.
	\end{proof}
	
	Next we show that we can compute $|\Naut|$ efficiently.
	
	\begin{claim}\label{clm:computing-size}
		Given an algorithm $\mathcal A$ for counting under prefix constraints for~$Q$, there is an algorithm computing $|\Naut|$ with the following properties:
		\begin{itemize}
			\item The preprocessing phase is given $D$ (but not $\mathfrak c$) and runs in time $O(|D|)$, making $O(1)$ calls to the preprocessing of $\mathcal A$ on databases of size $O(|D|)$. 
			\item The query phase is given a prefix constraint $\mathfrak c$ and computes $|\Naut|$ in time $O(1)$, making $O(1)$ calls to the query of $\mathcal A$.
		\end{itemize}
	\end{claim}
	\begin{proof}
		Given $\mathfrak c$, let $\mathfrak c^*$ be the prefix constraint on the same prefix $x_1, \ldots x_r$ as $\mathfrak c$, defined by $\mathfrak c^*(x_i):= [(x_i, y_i^1), (x_i, y_i^2)]$ where $\mathfrak c(x_i)= [y_i^1, y_i^2]$. Note that, since we assume $\dom(D)$ to be lexicographically ordered, $\mathfrak c^*$ is indeed a prefix constraint for the database $D$. By definition, $\hom(A_Q, D, {\mathfrak c}^*)$ is the set of homomorphisms $h$ from $A_Q$ to $D$ that satisfy $\mathfrak c^*$, or, equivalently, such that $\pi_2\circ h$ satisfies $\mathfrak{c}$ and $\pi_1\circ h$ is the identity over $\var(\mathfrak c)$.
		
		Note that $\Naut \subseteq \hom(A_Q, D, \mathfrak c^*)$ but there are homomorphisms $h\in \hom(A_Q, D, \mathfrak c^*)$ that are not in $\Naut$ because $\pi_1\circ h$ is not a bijection (see Example~\ref{ex:notABijection}). This motivates for every subset $\var(\mathfrak c)\subseteq T\subseteq \var(Q)$ the definition
		\begin{align*}
			\mathcal N_T := \{h \in \hom(A_Q, D, \mathfrak c^*) \mid (\pi_1\circ h)(\var(Q))\subseteq T\}.
		\end{align*}
		The set $\Naut$ consists exactly of homomorphisms $h\in \hom(A_Q, D, \mathfrak c^*)$ for which $(\pi_1\circ h)$ is a bijection, which by Observation~\ref{obs:automorphism} is equivalent to $(\pi_1\circ h)(\var(Q)) = \var(Q)$. 
		
		\begin{example}\label{ex:notABijection}
			We continue our running example.
			Given $\mathfrak c$ as before, we have that $\mathfrak c^*$ is the prefix constraint that assigns $(x,a)$ to $x$.
			An example for a homomorphism in $\hom(A_Q, D, \mathfrak c^*)$ but not in $\Naut$ is $((x,a),(y,b),(y,c))$. This homomorphism is also in $\mathcal N_{\{x,y\}}$ and $N_{\{x,y,z\}}$.
		\end{example}
		
		Using the fact that for all $h\in \hom(A_Q, D, \mathfrak c^*)$, the restriction of $\pi_1\circ h$ to $\var(\mathfrak c)$ is by definition the identity, we first compute the value $|\Naut|$ based on the $|\mathcal N_T|$ values. To this end, we use the abbreviations $V:= \var(Q)$ and $C:= \var(\mathfrak c)$, the obvious equality $|\Naut| = \mathcal N_V\setminus (\bigcup_{x\in V\setminus C} \mathcal N_{V\setminus \{x\}})$ and the inclusion-exclusion principle. We obtain
		\begin{align*}
			\left|\bigcup_{x\in V\setminus C} \mathcal N_{V\setminus \{x\}}\right| &= 
			\sum_{i=1}^{|V\setminus C|}\sum_{X\subseteq V\setminus C \colon |X|  = i}  (-1)^{i-1} |\mathcal N_{V\setminus X}|= \sum_{C\subseteq T\subsetneq V} (-1)^{|V\setminus T| -1} |\mathcal N_T|
		\end{align*}
		and 
		\begin{align*}
			|\Naut| = |\mathcal N_V| - \left|\bigcup_{x\in V\setminus C} \mathcal N_{V\setminus \{x\}}\right| = \sum_{C\subseteq T\subseteq V} (-1)^{|V\setminus T|} |\mathcal N_T|.
		\end{align*}
		Setting $v:= |\var(Q)|$, this gives 
		\begin{align}\label{eq:naut}
			|\Naut| = \sum_{\var(\mathfrak{c})\subseteq T\subseteq \var(Q)} (-1)^{v-|T|} |\mathcal N_T|.
		\end{align}
		(For the sake of simplicity in the rest of the proof, we will continue using the notation $v:= |\var(Q)|$ and $r:= |\var(\mathfrak{c})|$. Of course we have $0\le r\le v$.)
		
		The number of summands is bounded by an interger depending only on $v:=|\var(Q)|$, so if we can efficiently compute each $|\mathcal N_T|$ for $T\subseteq \var(Q)$ with $\var(\mathfrak{c})\subseteq T$, we are done.
		
		Now, for every $i\in [0,v]$, define $\mathcal N_{T,i}$ to be the set of homomorphisms $h\in\hom(A_Q, D, \mathfrak c^*)$ such that for exactly $i$ elements $a\in \var(Q)$ we have that $(\pi_1\circ h)(a)\in T$. With this notation, $\mathcal N_T = \mathcal N_{T,v}$. Then, for every $j\in [v+1]$, construct a new database $D_{T,j}$ over the same relation symbols as $D$. To this end, for every $a\in T$, introduce $j$ clones $a^{(1)}, \ldots, a^{(j)}$. Then the domain of $D_{T,j}$ is
		\begin{align*}
			\dom(D_{T,j}):= \{(a^{(k)},b)\mid (a,b)\in \dom(D), a\in T, k\in [j]\}  \cup \{(a,b)\in \dom(D)\mid a\notin T\}
		\end{align*}
		where we assume the lexicographic order between the new elements as follows:
		\begin{align*}
			(a^{(i)}, b) \prec (c^{(\ell)},d) \Leftrightarrow i< \ell \lor (i=\ell \land a\prec_L c) \lor (i=\ell \land a=c \land b \prec d).
		\end{align*}
		We add the elements in $\{(a,b)\in \dom(D)\mid a\notin T\}$ in an arbitrary way to this order.
		
		We then define a mapping $B:\dom(D)\rightarrow \mathcal P(\dom(D_{T,j}))$ where $\mathcal P(\dom(D_{T,j}))$ is the power set of $\dom(D_{T,j})$ as follows:
		\begin{align*}
			B(a,b) := \begin{cases}
				\{(a^{(k)},b)\mid k\in [j]\}, &\text{ if } a\in T\\
				\{(a,b)\}, & \text{ otherwise.}
			\end{cases}
		\end{align*}
		Then, for every relation symbol $R$ of $Q$ of arity $s$, define 
		\begin{align*}
			R^{D_{T,j}}:= \bigcup_{((a_1, b_1), \ldots, (a_s, b_s))\in R^D} B(a_1,b_1)\times \ldots \times B(a_s, b_s).
		\end{align*}
		Define a new prefix constraint $\mathfrak c^{**}$ on the variables $\var(\mathfrak c)$ by setting for every $a\in \var(\mathfrak c)$ the intervals $\mathfrak c^{**}(a)= [(a^{(1)}, y^1), (a^{(1)}, y^2)]$ where $\mathfrak c(a)= [y^1, y^2]$. 
		
		Note that for every $h\in \mathcal N_{T,i}$, since $(\pi_1\circ h)(\var(\mathfrak c))= \var(\mathfrak c) \subseteq T$, there are exactly $i-r$ elements $a\in \var(Q)\setminus \var(\mathfrak c)$ with $(\pi_1\circ h)(a) \in T$. For every $i\in [0,v]$ and $h\in \mathcal N_{T,i}$, each variable $a\in \var(Q)\setminus \var(\mathfrak c)$ with $(\pi_1\circ h)(a)\in T$ can be mapped to one of $j$ copies to get a homomorphism  in $\hom(A_Q, D_{T,j}, \mathfrak c^{**})$.
		Thus, every $h\in \mathcal N_{T,i}$ corresponds to exactly $j^{i-r}$ homomorphisms $h_{k_1, \ldots, k_{i-r}}$ in $\hom(A_Q, D_{T,j}, \mathfrak c^{**})$ for $(k_1, \ldots, k_{i-r})\in [j]^{i-r}$, defined by the following expression where $\ell\in [i-r]$:
		\begin{align*}
			h_{k_1, \ldots, k_{i-r}}(x) := \begin{cases}
				(x^{(1)}, b), & \text{ if $x\in \var(\mathfrak c)$ and } h(x)= (x,b),\\
				(a^{(k_\ell)},b) & \text{ if $x$ is the $\ell$-th element of $(\pi_1\circ h)^{-1}(T)\setminus \var(\mathfrak c)$ and $h(x) = (a,b)$},\\
				h(x), & \text{ otherwise, i.e., if $x\notin (\pi_1\circ h)^{-1}(T)$.}
			\end{cases}
		\end{align*}
		The following two points are immediate:
		\begin{enumerate}
			\item for all homomorphisms $h, h'\in \mathcal N_{T,i}$ and tuples $(k_1, \ldots, k_{i-r})$ and $(k_1', \ldots, k_{i-r}')$ in $[j]^{i-r}$ such that $(h,k_1, \ldots, k_{i-r}) \ne (h',k_1', \ldots, k_{i-r}')$, the resulting homomorphisms $h_{k_1, \ldots, k_{i-r}}$ and $h'_{k_1', \ldots, k_{i-r}'}$ in $\hom(A_Q, D_{T,j}, \mathfrak c^{**})$ are not equal; 
			\item\label{item:essential2} every homomorphism $H\in \hom(A_Q, D_{T,j}, \mathfrak c^{**})$ is a homomorphism $h_{k_1, \ldots, k_{i-r}}$ for exactly one $i\in [r,v]$, one $h\in \mathcal N_{T,i}$ and one tuple $(k_1, \ldots, k_{i-r}) \in[j]^{i-r}$. (To show this, we have $H= h_{k_1, \ldots, k_{i-r}}$, with each $k_\ell, \ell\in [i-r]$ determined as follows: if the $\ell$-th element of $(\pi_1\circ h)^{-1} (T)\setminus \var(\mathfrak{c})$ is $x$ with $H(x) = (a^{(k)}, b)$, then $k_\ell := k$.)
		\end{enumerate}
		
		Together, the points (1) and (2) imply, for every $j\in [v-r+1]$, the equality:
		\begin{align}\label{eq:vandermonder}
			\sum_{i=r}^{v} j^{i-r}|\mathcal N_{T,i}| = |\hom(A_Q, D_{T,j}, \mathfrak c^{**})|.
		\end{align}
		This is a linear system of $v-r+1$ equations with $v-r+1$ unknowns $|\mathcal N_{T,i}|$; the coefficients $j^{i-r}$, for $i\in [r,v]$ and $j\in [v-r+1]$, form a square Vandermonde matrix which is therefore invertible. For each fixed $T$, this allows to compute the values $|\mathcal N_{T,i}|$ (and in particular the value $|\mathcal N_{T,v}| = |\mathcal N_T|$ which interests us) in time $O((v-r+1)^3)$, which is constant time, if the $v-r+1$ values $|\hom(A_Q, D_{T,j}, \mathfrak c^{**})|$ are known.
		
		Since the values $|\hom(A_Q, D_{T,j}, \mathfrak c^{**})|$ are just counting queries under a prefix constraint, we can compute them using the algorithm $\mathcal A$. We note that, by the definition of the relations $R^{D_{T,j}}$, each database $D_{T,j}$ with $j\in [v+1]$ has size at most $j^\alpha|D|$ where $\alpha$ is the maximum arity of the relations, so that $|D_{T,j}| = O(|D|)$ and can be easily constructed from $D$ in linear time.
		
		To summarize, we can compute the required value $|\Naut|$ by the following two-phase algorithm:
		\begin{itemize}
			\item Preprocessing (independent of the prefix constraint): Compute the databases $D_{T,j}$ for all $\var(\mathfrak c)\subseteq T\subseteq \var(Q)$ and $j\in [v+1]$, and apply the preprocessing phase of $\mathcal A$ to each $D_{T,j}$; 
			\item Query phase: Read the prefix constraint $\mathfrak{c}$ and turn it into $\mathfrak{c}^{**}$; compute $|\hom(A_Q, D_{T,j}, \mathfrak c^{**})|$ by applying the query phase of $\mathcal A$ to $D_{T,j}$ and $\mathfrak c^{**}$ for all $j\in [v-r+1]$ and all $\var(\mathfrak c)\subseteq T\subseteq \var(Q)$; for each such $T$, compute (from the $|\hom(A_Q), D_{T,j}, \mathfrak c^{**})|$ values) the (unique) solution $|\mathcal N_{T,v}| = |\mathcal N_T|$ of the linear system of equations (\ref{eq:vandermonder}); finally, from the $|\mathcal N_T|$ values, compute $|\Naut|$ by equation~(\ref{eq:naut}).
		\end{itemize}
		
		Let us analyze the complexity of the above algorithm: the number of calls to the preprocessing phase and the query phase of $\mathcal A$ made by the above algorithm is the number of pairs $(T,j)$ involved, which depends only on $Q$. Therefore the preprocessing time of the algorithm is $O(|D|)$ and its query time is constant (both when ignoring the runtime of the calls to $\mathcal A$).
	\end{proof}
	
	We can now complete the proof of Lemma~\ref{lem:colorjoins}.
	From Claims \ref{clm:bijection} and \ref{clm:sizes-eq} we get that what we need to compute is 
	\begin{align*}
		|\hom(A_{Q^c}, D^c, \mathfrak c)| = |\Nid| = \frac{|\Naut|}{|\aut(A_Q, \mathfrak c)|}.
	\end{align*}
	From Claim~\ref{clm:computing-size}, $|\Naut|$ can be computed efficiently.
	Since we work in data complexity, we can easily compute $\aut(A_Q, \mathfrak c)$: the value only depends on $A_Q$ and the length $r$ of the prefix that $\mathfrak c$ is on, so we can simply compute it for every possible value of $r$. 
	This completes the proof of Lemma~\ref{lem:colorjoins}.\qed
	
	\subsection{Assembling the Proof}
	
	With all the components at hand, we can now prove Theorem~\ref{thm:selfjoins}.
	
	\begin{proof}[Proof of Theorem~\ref{thm:selfjoins}]
		By Proposition~\ref{prop:prefixcounting}, a lexicographic direct access algorithm for $Q$ and~$L$ implies an algorithm for counting under prefix constraints for $Q$ and $L$. By Lemma~\ref{lem:colorjoins}, this implies an algorithm for counting under prefix constraints for $Q^c$ and $L$. By Proposition~\ref{prop:prefixcounting}, this implies a lexicographic direct access algorithm for $Q^c$ and $L$. By Lemma~\ref{lem:color-sjf}, this finally implies a lexicographic direct access algorithm for $\Qsf$ and $L$. Each step increases the preprocessing time only by constant factors, and the access/counting is increased by a logarithmic factor whenever we use Proposition~\ref{prop:prefixcounting}. This proves the non-trivial direction of the theorem.
	\end{proof}
	
	\section{Tight Bounds for Direct Access}
	\label{sec:together}
	
	In this section, we formulate and prove the main result of this paper on direct access, showing that the incompatibility number determines the required preprocessing time, even for queries with self-joins.
	
	\begin{theorem}\label{thm:main}
		Let $Q$ be a join query, let $L$ be an ordering of its variables, and let ~$\iota$ be the incompatibility number of $Q$ and $L$.
		\begin{itemize}
			\item There is an algorithm that allows lexicographic direct access with respect to the order induced by $L$ with preprocessing time $O(|D|^\iota)$ and logarithmic access time.
			\item Assuming the \pb{Zero-Clique} Conjecture, there is no constant $\varepsilon > 0$ such that for all $\delta > 0$ there is an algorithm allowing lexicographic direct access with respect to the order induced by $L$ with preprocessing time $O(|D|^{\iota-\eps})$ and access time $O(|D|^\delta)$.
		\end{itemize}
	\end{theorem}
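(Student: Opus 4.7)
The first part of the theorem is exactly the statement of Theorem~\ref{thm:cover-preprocess}, which already provides a direct access algorithm with preprocessing $O(|D|^\iota)$ and logarithmic access time. So my plan focuses entirely on the conditional lower bound, which will be obtained by chaining together the reductions developed in Sections~\ref{sct:lowerdirect} and~\ref{sec:lowersetdisjointness} and applying the hardness of $k$-Set-Disjointness.

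Assume toward contradiction that there is $\varepsilon > 0$ such that for every $\delta > 0$ there is a direct access algorithm for $(Q, L)$ with preprocessing $O(|D|^{\iota - \varepsilon})$ and access time $O(|D|^\delta)$. The boundary case $\iota = 1$ is trivial: the hypothesized preprocessing is sublinear, yet any correct direct access algorithm for a query with at least one atom must touch $\Omega(|D|)$ tuples of the input, so the hypothesis is vacuous in that case. Hence assume $\iota > 1$. Then Lemma~\ref{lem:fractionalstar} converts the hypothesized algorithm into an integer $k \ge 3$ and a constant $\varepsilon' > 0$ such that for every $\delta' > 0$ there is a direct access algorithm for $\qstar_k$ under a bad order with preprocessing $O(|D^\star|^{k - \varepsilon'})$ and access time $O(|D^\star|^{\delta'})$. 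Next I would apply Proposition~\ref{prop:testing} to obtain a testing algorithm for $\qstarp_k$ with the same preprocessing bound and query time $O(|D^\star|^{\delta'} \log |D^\star|)$; the logarithmic factor is absorbed by replacing $\delta'$ with a slightly larger but still arbitrary constant. Lemma~\ref{lem:reductiontosetdisjointness} then yields an algorithm for $k$-Set-Disjointness with preprocessing $O(\|\mathcal I\|^{k - \varepsilon'})$ and query time $O(\|\mathcal I\|^{\delta'})$.

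To close the argument I would invoke Theorem~\ref{thm:lowerksetdisjointness}: the existence of such a fixed $\varepsilon' > 0$ together with query time $O(\|\mathcal I\|^{\delta'})$ for every $\delta' > 0$ falsifies the Zero-$(k+1)$-Clique Conjecture, and since $k + 1 \ge 4$, this falsifies the Zero-Clique Conjecture as a whole. The main obstacle is purely a bookkeeping one: one must verify that the quantifier structure ``there exists $\varepsilon$, for every $\delta$'' is preserved through all four reductions, i.e.\ that a single $\varepsilon'$ depending only on $\varepsilon, Q, L$ works while $\delta'$ ranges freely, and that the polynomial blowup of the instance size by the factor $\lambda$ in Lemma~\ref{lem:fractionalstar}, the logarithmic overhead of Proposition~\ref{prop:testing}, and the constant-factor overheads of Lemma~\ref{lem:reductiontosetdisjointness} can all be absorbed into small perturbations of $\varepsilon'$ and $\delta'$. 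Each reduction in the chain already encapsulates exactly this bookkeeping, so the proof is essentially their composition together with an invocation of the terminal hardness result.
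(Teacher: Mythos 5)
Your reduction chain for $\iota > 1$ matches the paper's proof exactly: Theorem~\ref{thm:cover-preprocess} for the upper bound, then Lemma~\ref{lem:fractionalstar} $\to$ Proposition~\ref{prop:testing} $\to$ Lemma~\ref{lem:reductiontosetdisjointness} $\to$ Theorem~\ref{thm:lowerksetdisjointness} to falsify the Zero-$(k+1)$-Clique Conjecture, with the same quantifier bookkeeping you correctly flag.

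The one soft spot is your dismissal of $\iota = 1$ as ``trivial.'' You assert that any correct direct access algorithm must touch $\Omega(|D|)$ tuples, but that claim is the lower bound and cannot simply be stated as folklore --- a priori a clever algorithm might answer index queries without inspecting most of the input. The paper closes this with a short adversary argument: build a constant-size instance $D'$ with exactly one answer, pad it with dummy tuples that provably participate in no answer to obtain $D$, permute the relations at random, and observe that any algorithm inspecting $O(|D|^{1-\eps})$ tuples during preprocessing plus one access misses the relevant tuples with probability $1-o(1)$, hence fails on some fixed permutation. You should supply something of this form (or explicitly cite the conformity of the word-RAM adversary model) rather than declaring the case vacuous. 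Everything else is sound and is the paper's argument.
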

	\begin{proof}
		The first part was shown in Theorem~\ref{thm:cover-preprocess}. Let us now prove the second part first for $\iota > 1$. If there is an $\eps>0$ such that for all $\delta>0$ we have direct access for $Q$ and $L$ with preprocessing time $O(|D|^{\iota-\eps})$ and access time $O(|D|^\delta)$, then by Theorem~\ref{thm:selfjoins}, there is a similar algorithm for a self-join free version of $Q$.  
		By Lemma~\ref{lem:fractionalstar}, there is a $k \ge 2$ and $\eps'>0$ such that for all $\delta'>0$ the star query $\qstar_k$ with respect to a bad order admits direct access with preprocessing time $O(|D|^{k-\eps'})$ and access time $O(|D|^{\delta'})$. Combining Proposition~\ref{prop:testing} and Lemma~\ref{lem:reductiontosetdisjointness}, we obtain that $k$-Set-Disjointness has an algorithm with preprocessing time $O(\|I\|^{k-\eps'})$ and query time $\softO(\|\I\|^{\delta'}) = O(\|\I\|^{\delta''})$ for $\delta'' := 2\delta'$. The quantifiers on $\eps'$ and $\delta''$ match Theorem~\ref{thm:lowerksetdisjointness}, which implies that the \pb{Zero-$(k+1)$-Clique} Conjecture is false, in particular the \pb{Zero-Clique} Conjecture is false.
		
		Since the incompatibility number is always at least $1$, it remains to prove the second part for $\iota = 1$.
		Here we start from the well-known fact that searching in an unordered array requires linear time. More precisely, given an array $A[1..n]$ with $\{0,1\}$-entries, deciding whether $A$ contains a 0 requires time $\Omega(n)$, and this holds for deterministic algorithms as well as for randomized algorithms (with success probability at least 2/3). We reduce this problem to direct access for $Q$, in order to show a linear lower bound on the preprocessing time. To this end, given an array $A[1..n]$ with $\{0,1\}$-entries we construct a database $D$ with $\dom(D) = [2n]$. Set $B := \{i + A[i]\cdot n \mid i \in [n]\}$. For any relation symbol $R$ of $Q$, we set $R^D := \{(b,\ldots,b) \mid b \in B\}$. Note that every $(b,\ldots,b)$ with $b \in B$ is a solution to $Q(D)$, but there can also be other solutions. In any case, the lexicographically smallest solution to $Q(D)$ is the tuple $(b^*,\ldots, b^*)$ with $b^* := \min(B)$. Observe that $b^* \le n$ if and only if $A[1..n]$ has a 0-entry, so we can read off whether array $A$ has a 0-entry from $b^*$.
		Hence, by preprocessing $D$ and then using one direct access to determine the lexicographically smallest solution, we can decide whether the array $A[1..n]$ contains a 0-entry. Note that this reduction is implicit, meaning we do not need to explicitly compute the database $D$. Instead, when the preprocessing or direct access algorithm reads an entry of a relation $R^D$, we can compute this entry on the fly in constant time by reading an entry $A[i]$. Therefore, a direct access algorithm for $Q$ with preprocessing time $o(|D|)$ and access time $o(|D|)$ would imply that we can search in an unordered array in time $o(n)$. Since the latter is impossible, there is no direct access algorithm for $Q$ with preprocessing time $o(|D|)$ and access time $o(|D|)$. In particular, for any $\eps \in (0,1)$ and $\delta := 1-\eps$ there is no direct access algorithm with preprocessing time $O(|D|^{1-\eps})$ and access time $O(|D|^\delta)$. This lower bound is unconditional, proving the second part for $\iota = 1$.
	\end{proof}
	
	Note that for every $\delta,\delta' > 0$ we have $O((\log |D|)^\delta) \le O(|D|^{\delta'})$. Therefore, if for some $\varepsilon,\delta>0$ there was an algorithm with preprocessing time $O(|D|^{\iota-\varepsilon})$ and access time $O((\log |D|)^\delta)$, then for every $\delta'>0$ there would also be an algorithm with preprocessing time $O(|D|^{\iota-\varepsilon})$ and access time $O(|D|^{\delta'})$. Thus, Theorem~\ref{thm:main} implies the lower bound for polylogarithmic query time stated in the introduction.

	\section{Relaxed Orders and Projections}\label{sec:relaxations}

	So far, we have considered the case where the underlying order of the results is lexicographic and specified by the user as part of the problem definition.
	We now consider the relaxation of this requirement, when the order is either: (1) completely flexible; (2) has to be lexicographic, but the ordering of the variables is flexible; or (3) the answers must be ordered lexicographically by some of the variables, and breaking ties originating from the assignments to the other variables can be done in an arbitrary order. We will see that such relaxations allow to reduce the preprocessing time.
	We also briefly discuss how to support queries with projections.
	
	\subsection{Unconstrained Lexicographic Orders}
	
	From Theorem~\ref{thm:main} we can also get a lower bound for direct access for unconstrained lexicographic order based on fractional hypertree width.
	
	\begin{proposition}\label{prop:width-min-incompatibility}
		The fractional hypertree width of a hypergraph $H$ is the minimum incompatibility number of $H$ minimized over all orders $L$.
	\end{proposition}
	\begin{proof}
		First, we claim that for any hypertree decomposition $\mathcal{D}$, there exists an order $L$ of the variables, such that $\mathcal{D}$ and $L$ have no disruptive trio.
		Indeed, since $\mathcal{D}$ is acyclic, it has an elimination order, and the reverse of an elimination order is known to not have disruptive trios.
		From this claim we conclude that by considering all permutations $L$ and all hypertree decompositions that do not have a disruptive trio with respect to $L$, we get all hypertree decompositions of $H$. 
		
		Denote the set of all permutations of the vertices by $P$ and the set of all hypertree decompositions of $H$ by $D$.
		Given $\mathcal{D}\in D$ and $L\in P$, denote the fractional width of $D$ by $w_\mathcal{D}$, the set of all hypertree decompositions of $H$ that have no diruptive trio with respect to $L$ by $D_L$, and the incompatibility number of $\mathcal{D}$ and $L$ by $\iota(\mathcal{D},L)$.
		Denote also the fractional hypertree width of $H$ by $w$.
		Then,
		
		\begin{align*}
			w = \min_{\mathcal{D}\in D}{w_\mathcal{D}}
			= \min_{L\in P}{\min_{\mathcal{D}\in D_L}{w_\mathcal{D}}}
			= \min_{L\in P, \mathcal{D}\in D}{\iota(\mathcal{D},L)}
		\end{align*}
		Here, the first equality is the definition of fractional hypertree width, the second equality follows from the claim we just proved, and the third equality follows from Proposition~\ref{prop:incompatibility-min-width}.
	\end{proof}
	
	From Theorem~\ref{thm:main} and Proposition~\ref{prop:width-min-incompatibility}, we get the following corollary that essentially says that if we allow any lexicographic order for direct access, then the fractional hypertree width determines the exponent of the preprocessing time.
	\begin{corollary}\label{cor:fhtw}
		Let $Q$ be a join query of fractional hypertree width $w$.
		\begin{itemize}
			\item There is an ordering $L$ of the variables of $Q$ such that there is an algorithm that allows lexicographic direct access with respect to the order induced by $L$ with preprocessing time $O(|D|^w)$ and logarithmic access time.
			\item Assuming the \pb{Zero-Clique} Conjecture, there are no  order $L$ and constant $\varepsilon > 0$ such that for all $\delta > 0$ there is an algorithm allowing lexicographic direct access with respect to the order induced by $L$ with preprocessing time $O(|D|^{w-\eps})$ and access time $O(|D|^\delta)$.
		\end{itemize}
	\end{corollary}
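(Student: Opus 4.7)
The plan is to reduce the corollary to Theorem~\ref{thm:main} by establishing the identity $\min_L \iota(Q, L) = k$, where $k = \text{fhtw}(Q)$ and the minimum ranges over all orderings of the variables of $Q$.

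For the upper bound, let $\mathcal{D}$ be a hypertree decomposition of $Q$ of fractional width $k$, fix an elimination order $(u_\ell,\ldots,u_1)$ of its vertices (which exists by acyclicity), and set $L := (u_1, \ldots, u_\ell)$. I would first verify that $\mathcal{D}$ itself has no disruptive trio with respect to $L$: if $(x_1, x_2, x_3)$ were such a trio with $x_3$ the latest in $L$, then $x_3$ is the first of the three to be eliminated; the vertex-elimination rule requires $x_3$ to lie in a single remaining edge $e$, and since $x_1$ and $x_2$ share bags with $x_3$ in $\mathcal{D}$ and are still present, both must lie in $e$; the original bag containing $e$ then contains all three vertices, contradicting disruptiveness. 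The corollary following Lemma~\ref{lem:optimaldecomposition} now gives $\iota(Q, L) \le k$, since the disruption-free decomposition of $(Q, L)$ has minimum fractional width among all hypertree decompositions of $Q$ without disruptive trio for $L$, and $\mathcal{D}$ is such a decomposition of width $k$. The first part of Theorem~\ref{thm:main} then yields the claimed direct-access algorithm.

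For the lower bound, I would observe that, for every order $L$, the disruption-free decomposition of $(Q, L)$ is itself a hypertree decomposition of $Q$ (it is acyclic and a super-hypergraph of $Q$, as shown immediately after Definition~\ref{def:disruption-free}) of fractional width $\iota(Q, L)$. Therefore $\iota(Q, L) \ge \text{fhtw}(Q) = k$, and any algorithm with preprocessing time $O(|D|^{k-\varepsilon})$ is, a fortiori, an algorithm with preprocessing time $O(|D|^{\iota(Q, L) - \varepsilon})$, contradicting the lower-bound part of Theorem~\ref{thm:main} under the \pb{Zero-Clique} Conjecture.

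The only nontrivial step is the disruptive-trio check for the chosen $L$. The delicate point is to confirm that the bags witnessing $x_1, x_2$ sharing bags with $x_3$ in $\mathcal{D}$ survive, intersected with the remaining vertex set, until $x_3$'s elimination, so that $x_1$ and $x_2$ are indeed current neighbors of $x_3$ in the snapshot of the elimination procedure; once this is in place, everything else is a routine invocation of earlier results.
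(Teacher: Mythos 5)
Your proof is correct and is, in effect, the argument the paper has in mind but does not spell out (the paper dispatches this corollary with the remark ``adapting the order to the query, we directly get the following corollary''). Both halves are handled properly: the inequality $\iota(Q,L)\ge\mathrm{fhtw}(Q)$ for every $L$ follows, exactly as you observe, from the paper's own statement that the disruption-free decomposition is a hypertree decomposition of $Q$ of fractional width $\iota(Q,L)$, and the application of Theorem~\ref{thm:main} is then immediate. For the converse direction, your construction of $L$ as the reverse of an elimination order of a width-optimal decomposition $\mathcal{D}$, combined with the corollary following Lemma~\ref{lem:optimaldecomposition}, is the right move.

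The one step you flag as ``delicate'' --- that the adjacency of $x_1$ with $x_3$ (and of $x_2$ with $x_3$) in $\mathcal{D}$ persists until $x_3$ is eliminated --- does hold, and is worth making explicit rather than leaving as a promissory note. The invariant to prove by induction over the elimination steps is: if $u,v$ are both in the current vertex set and were contained in a common edge of $\mathcal{D}$, then they are contained in a common edge of the current hypergraph. Rule~1 deletes an edge $e$ only when $e\subseteq e'$ for some other edge $e'$, so any adjacency witnessed by $e$ is then witnessed by $e'$; rule~2 removes a vertex $w$ only from the vertex set and from its unique containing edge, so for $u,v\neq w$ any common edge is unaffected (or merely loses $w$). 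With this invariant, at the moment $x_3$ is eliminated it lies in a unique edge $e$, both $x_1$ and $x_2$ must lie in $e$, and $e$ is contained in an original bag $b$ of $\mathcal{D}$; thus $x_1$ and $x_2$ share the bag $b$, contradicting the definition of a disruptive trio. Once that is written out, your proof is complete.
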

	
	\subsection{Unconstrained Orders}
	
	As we have seen, the complexity of direct access for join queries depends strongly on the order in which we want the query results to be represented. Moreover, we know from Corollary~\ref{cor:fhtw} that if we insist on lexicographic orders, then the preprocessing time necessarily degrades with the fractional hypertree width.
	The next natural question is whether we can get a more efficient algorithm in case we have no constraints on the order at all.
	
	Let us discuss a model in which we do not require any particular order on the answers in direct access. That is, we only require that there is a bijection $b: [s] \rightarrow Q(D)$ where $s:= |Q(D)|$ such that, given a query integer $i\in [s]$ we can efficiently compute the answer $b(i)$. We call this model \emph{orderless direct access}. It is again convenient to extend $b$ to the domain $\mathbb{N}$, defining $b(i)$ to be some fixed error value whenever $i\notin [s]$. We again measure the preprocessing time, i.e., the time to prepare suitable data structures given a database $D$, and the query time, which is the time to compute the answer $b(i)$ given $i\in \mathbb{N}$ using these data structures. We will see that there are queries in which orderless direct access can be performed in a more efficient manner than direct access with a lexicographic order, proving the following proposition.
	
	\begin{proposition}
		Assuming the Zero-Clique Conjecture, there exists a query for which orderless direct access can be done with a lower complexity compared to direct access in any lexicographic order.
	\end{proposition}
	
	To prove this, we consider the $4$-cycle query 
	\begin{align*}
		Q^\circ(x_1, x_2, x_3, x_4) \datarule R_1(x_1, x_2), R_2(x_2, x_3), R_3(x_3, x_4), R_4(x_4, x_1).
	\end{align*}
	It is easy to verify that the query $Q^\circ$ has fractional hypertree width $2$. It follows from Corollary~\ref{cor:fhtw} that for any direct access algorithm with a lexicographic order and polylogarithmic access time, the preprocessing time must be essentially quadratic. We next show that this preprocessing time can be improved if we drop the requirement of having a lexicographic order.
	
	\begin{lemma}\label{lem:cycle}
		There is an algorithm for orderless direct access for $Q^\circ$ with logarithmic access time and preprocessing time $O(|D|^{3/2})$.
	\end{lemma}
	\begin{proof}
		We use a decomposition technique based on degree information as pioneered in~\cite{AlonYZ97}. To this end, we split every relation $R_i^D$ into a \emph{heavy} part $R_i^h$ and a \emph{light} part $R_i^\ell$ as follows:
		\begin{align*}
			R_i^\ell &:= \{(a,b)\in R_i^D \mid \deg_{R_i}(a)\le |R_i^D|^{1/2}\},\\
			R_i^h &:= \{(a,b)\in R_i^D \mid \deg_{R_i}(a)> |R_i^D|^{1/2}\},
		\end{align*}
		where $\deg_{R_i}(a)$ denotes the \emph{degree} of $a$ in $R_i^D$, i.e.,
		\begin{align*}
			\deg_{R_i}(a) := | \{b\mid (a,b)\in R_i^D\}|.
		\end{align*}
		Denote by $D^*$ the database we get from $D$ by adding all these relations. Note that $|D^*| = \Theta(|D|)$. Clearly, $D^*$ can be computed in time $O(|D|)$ by sorting the tuples in $D$ in linear time and computing the degrees.
		
		We have that $R_i^D = R_i^\ell \uplus R_i^h$ where $\uplus$ denotes disjoint union. Thus, we can rewrite the query $Q^\circ$ as follows, where $\bowtie$ denotes a join:
		\begin{align*}
			Q^\circ \equiv &\left(R_1^\ell(x_1, x_2)\uplus R_1^h(x_1, x_2)\right)\bowtie  \left(R_2^\ell(x_2, x_3)\uplus R_2^h(x_2, x_3)\right) \\ &\bowtie \left(R_3^\ell(x_3, x_4)\uplus R_3^h(x_3, x_4)\right)\bowtie \left(R_4^\ell(x_4, x_1)\uplus R_4^h(x_4, x_1)\right)\\
			\equiv & \biguplus_{(o1, o2, o3, o4)\in \{\ell, h\}^4} R_1^{o_1}(x_1, x_2)\bowtie R_2^ {o_2}(x_2, x_3)\bowtie R_3^{o_3}(x_3, x_4)\bowtie R_4^{o_4}(x_4, x_1)\\
			\equiv & \biguplus_{(o1, o2, o3, o4)\in \{\ell, h\}^4} Q^{(o1, o2, o3, o4)}(x_1, x_2, x_3, x_4),
		\end{align*}
		where for every tuple $(o1, o2, o3, o4)\in \{\ell, h\}^4$ we define
		\begin{align*}
			Q^{(o1, o2, o3, o4)}(x_1, x_2, x_3, x_4) := R_1^{o_1}(x_1, x_2)\bowtie R_2^ {o_2}(x_2, x_3)\bowtie R_3^{o_3}(x_3, x_4)\bowtie R_4^{o_4}(x_4, x_1).
		\end{align*}
		We can evaluate each individual $Q^{(o1, o2, o3, o4)}$ efficiently.
		\begin{claim}\label{clm:cycle}
			For all $(o_1, o_2, o_3, o_4)\in \{\ell, h\}^4$, there is an algorithm for direct access on $Q^{(o1, o2, o3, o4)}$ with logarithmic access time and preprocessing time $O(|D|^{3/2})$.
		\end{claim}
		\begin{proof}
			The idea is to rewrite $Q^{(o1, o2, o3, o4)}$ into an acyclic query on a database of size $O(|D|^{3/2})$. We consider three cases.
			
			\textbf{Case 1 -- there exists $i\in \{1,2\}$ such that $o_i = o_{i+2} = \ell$: } We assume that w.l.o.g.~$i=1$ since the other case is completely analogous.
			We rewrite $Q^{(\ell, o2, \ell, o4)}$ into an acyclic query as follows: let $S_1(x_4, x_1, x_2)\datarule R_1^{\ell}(x_1, x_2), R_4^{o_4}(x_4, x_1)$ and $S_2(x_2, x_3, x_4)\datarule R_2^{o_2}(x_2, x_3), R_3^{\ell}(x_3, x_4)$ and define a new database $D'$ with two relations $S_1^{D'}:=S_1(D^*)$ and $S_2^{D'}:=S_2(D^*)$. Finally, we define the query $Q'(x_1,x_2, x_3, x_4)\datarule S_1(x_4, x_1, x_2), S_2(x_2, x_3, x_4)$. Then, by definition, $Q^{(\ell, o2, \ell, o4)}(D^*) = Q'(D')$, so it suffices to have an algorithm with direct access for $Q'$ such as those from~\cite{CarmeliRandom, bb:thesis}. It only remains to show that $D'$ can be computed in time $O(|D^{3/2}|)$. We only show this for the computation of $S_1^{D'}$, as the case for $S_2^{D'}$ is completely analogous.
			In the join $R_1^\ell \bowtie R_4^{o_4}$, every tuple $(a,b)\in (R_4^{o_4})^{D^*}$ can by construction of $(R_1^\ell)^{D^*}$ only be extended in at most $|R_1^D|^{1/2}$ ways, so overall $|S_1^{D'}| \le |R_4^{D^*}| \cdot |R_1^{D^*}|^{1/2} \le |D|^{3/2}$ and the relation can again be computed efficiently.
			
			\textbf{Case 2 -- there exists $i\in \{1,2\}$ such that $o_i = o_{i+2} = h$: }
			We only consider the case $i=1$; the other case is analogous. 
			As before, we show how $Q^{(h, o2, h, o4)}$ can be rewritten into an acyclic query, but we regroup the atoms differently. We define $S_1'(x_1, x_2, x_3) \datarule R_1^{h}(x_1, x_2), R_2^{o_2}(x_2, x_3)$ and $S_2'(x_3, x_4, x_1) \datarule R_3^{h}(x_3, x_4), R_4^{o_4}(x_4, x_1)$. Then we define $Q'(x_1,x_2, x_3, x_4)\datarule S_1'(x_1, x_2, x_3), S_2'(x_3, x_4, x_1)$ and construct a database $D'$ by setting $S_1'^{D'}:= S_1(D^*)$ and $S_2'^{D'}:= S_2(D^*)$. Reasoning similarly to before, it only remains to show that $D'$ can be constructed in time $O(|D|^{3/2})$.
			To this end, remark that every element in $\pi_{x_1}(R_1^{h})^{D^*}$ appears in at least $|R_1^D|^{1/2}$ tuples of $R_1^D$, so $|\pi_{x_1}(R_1^{h})^{D^*}| \le |R_1^{D}|^{1/2}$. We get that $|S_1'^{D'}| \le |\pi_{x_1}(R_1^{h})^{D^*}| \cdot |R_2^{D^*}| \le |D|^{3/2}$ and the relation can easily be computed in the claimed time.
			
			\textbf{Case 3 -- there is no $i\in \{1,2\}$ such that $o_i = o_{i+2}$: } In that case, 
			$(o1, o2, o3, o4)$ must be a cyclic shift of $(\ell, \ell, h, h)$, so we only consider that specific index tuple, as all other cases are completely analogous. We decompose the same way as in Case 2. We only have to show that the relations $S_1'^{D'}$ and $S_2'^{D'}$ can be computed in time $O(|D|^{3/2})$. For $S_2'$, this follows immediately as in Case 2 since $S_2'(x_3, x_4, x_1) \datarule R_3^{h}(x_3, x_4), R_4^{h}(x_4, x_1)$. For $S_1'(x_1, x_2, x_3) \datarule R_1^{\ell}(x_1, x_2), R_2^{\ell}(x_2, x_3)$, the reasoning is analogous to Case 1.
			
			Applying one of the direct-access algorithms from~\cite{bb:thesis,CarmeliRandom} in all cases yields the claim for all $(o1, o2, o3, o4)$.
		\end{proof}
		
		We now complete the proof of Lemma~\ref{lem:cycle}. We order $\{\ell, h\}^4$ in an arbitrary way. Then we order the answers in $Q^\circ(D)$ by first ordering them with respect to which $Q^{(o_1, o_2, o_3, o_4)}(D')$ they appear in and then with respect to the order of the algorithm from Claim~\ref{clm:cycle}. This is the order which our algorithm for~$Q^\circ$ uses, working in the following way: in the preprocessing phase, we first compute $D'$ and then perform the preprocessing for all $Q^{(o_1, o_2, o_3, o_4)}$ on input $D'$. We also compute the values $|Q^{(o_1, o_2, o_3, o_4)}(D')|$ for all queries $Q^{(o_1, o_2, o_3, o_4)}$ which we can do with a simple binary search using the respective direct access algorithms. Using these answer counts, given $j\in \mathbb{N}$ in the query phase, we first compute in which $Q^{(o_1, o_2, o_3, o_4)}(D')$ the $j$th answer lies and compute the corresponding index $j'$ for which the $j'$th answer in $Q^{(o_1, o_2, o_3, o_4)}(D')$ is the $j$th answer in $Q^\circ(D)$. Finally, we use the direct access algorithm for $Q^{(o_1, o_2, o_3, o_4)}$ to compute the desired answer.
	\end{proof}
	
	We remark that the technique used in the proof of Lemma~\ref{lem:cycle} has a rich history, comprised of algorithms based on splitting using degree information. In fact, it is one of the basic techniques for worst-case optimal join algorithms (see e.g.~\cite{NgoPRR18}). Notice that simply applying a worst-case optimal algorithm during preprocessing is not possible in our setting since the output can be too big to allow this in the desired runtime bounds. Thus the rewriting into an acyclic instance that we do in Claim~\ref{clm:cycle} is a second crucial ingredient. 
	
	For finding cycles of fixed length in graphs, splitting on degree information was first used in~\cite{AlonYZ97}. For conjunctive queries, the technique is also part of different algorithms based on so-called submodular width, which is a measure capturing more general classes of tractable queries than fractional hypertree width for decision problems, see~\cite{Marx13,Khamis0S17}; for the special case of cycles, this is also discussed in~\cite{Scarcello18}. In this line of work, it is in particular also known that submodular width can be used for efficient enumeration algorithms~\cite{BerkholzS19}. Note that our splitting is slightly stronger than that generally guaranteed by the techniques based on submodular width as it guarantees that the subinstances that we split into are disjoint. For the algorithms in~\cite{Marx13,Khamis0S17,BerkholzS19}, this is generally not the case, so they cannot directly be adapted to work for direct access as in the proof of Lemma~\ref{lem:cycle}. In fact, it is an open question whether disjoint splitting is always possible for algorithms based on submodular width, see~\cite{KhamisCMNNOS20}, and in particular it is an open question to which extent submodular width can be used for counting algorithms. Since direct access algorithms directly imply counting algorithms by binary search, this means that the  question to which extent Lemma~\ref{lem:cycle} can be generalized to additional queries is also open.
	
	\subsection{Projections and Partial Lexicographic Orders}
	
	In this section, we discuss two extensions. First, we would like to support conjunctive queries (with projections) instead of only join queries. Second, we would like to support \emph{partial lexicographic orders}. These are specified by a permutation $L$ of a \emph{subset} of the non-projected variables, and the answers are ordered lexicographically by this permutation as before; the order between answers that agree on all variables in $L$ is not specified as part of the problem definition. This way, we define a preorder on the query answers and turning it into a (complete) order is left to the discretion of the algorithm. More formally, we say that an order $\preceq$ of the query answers is \emph{compatible} with a partial lexicographic order if it is a refinement of that preorder.
	
	To achieve these extensions, we use the connection between our work and the notion of elimination orders.
	Since an order $L$ of the variables of a query $q$ is the reverse of an elimination order if and only if $Q$ is acyclic and $L$ contains no disruptive trio with $Q$, the preprocessing we suggested in Theorem~\ref{thm:cover-preprocess} can be seen as constructing an equivalent query and corresponding database that have the reverse of $L$ as an elimination order.
	For more information on the connection between elimination orders and decompositions, see~\cite{faq}.
	
	To support projections, we use a folklore construction that lets us eliminate any prefix of an elimination order while constructing an equivalent corresponding database in linear time.
	One way of doing that is to eliminate one variable at a time: first, filter the one atom that contains this variable and all its neighbors according to all other atoms that contain this variable (this is the semijoin operator), and then remove the variable from the query and the corresponding columns from the relations (this is the projection operation).
	To support partial orders, we can complete them into a full order and use the previous algorithm.
	This leads us to the following definition of the incompatibility number of conjunctive queries and partial lexicographic orders.
	
	\begin{definition}\label{def:project-partial}
		Let $Q$ be a conjunctive query and $L$ an ordering of a subset of the free variables of $Q$.
		Let $\mathcal{L}^+_Q$ be the set of all orderings of the variables of $Q$ that start with $L$ and end with the projected variables.
		The incompatibility number of $Q$ and $L$, denoted~$\iota$, is defined as the minimum over all incompatibility numbers of $Q$ and an ordering in $\mathcal{L}^+_Q$.
	\end{definition}
	Using this definition, we get an extension of Theorem~\ref{thm:cover-preprocess}.
	
	\begin{theorem}\label{thm:project-partial}
		Given a conjunctive query and an ordering $L$ of a subset of its free variables with incompatibility number $\iota$, direct access with an order compatible with the partial lexicographic order of $L$ can be achieved with $O(|D|^\iota)$ preprocessing and logarithmic access time.
	\end{theorem}
	
	\begin{proof}
		Let $L'$ be the ordering of the query variables yielding the incompatibility number according to Definition~\ref{def:project-partial}.
		We perform the preprocessing of the algorithm of Theorem~\ref{thm:cover-preprocess} to construct a database $D'$ corresponding to a disruption-free decomposition for $Q$ with respect to $L'$. This takes $O(|D|^\iota)$ time.
		We then eliminate the projected variables to be left with an equivalent join query with no projections and no disruptive trios. 
		This can be done in $O(|D'|)$ since the projected variables appear at the end of an elimination order. 
		We can now use the existing direct-access algorithm from Theorem~\ref{thm:linearcase} on the prefix of $L'$ obtained by removing the projected variables. This takes a further $O(|D'|)$ during preprocessing followed by logarithmic access time.
	\end{proof}
	
	The next natural question is whether this algorithm provides optimal time guarantees for the task at hand. 
	In contrast to queries without projections and complete lexicographic orders, the answer is negative.
	An example for this is the $4$-cycle query $Q^\circ$ from the previous section. There we saw that if the partial lexicographic order is empty, then we can perform direct-access faster than any completion to a full lexicographic order. If all variables of this query were projected, it would have at most one answer, and the existence of this answer can be determined in time $O(|D|^{3/2})$ since the algorithm of Lemma~\ref{lem:cycle} in particular lets us decide the existence of an answer. This runtime is again faster than any completion to a full order and so better than what we get from Theorem~\ref{thm:project-partial}.
	
	We remark that, when only considering linear preprocessing and logarithmic delay, all queries and partial orders that can be solved within this time guarantee, can be solved this way through a completion to a full lexicographic order~\cite{CarmeliTGKR20}. In contrast, as we have seen here, this is no longer the case when considering larger preprocessing times.

	\section{Lower Bounds for Enumeration of Cyclic Joins}\label{sec:enumeration}
	
	In this section we show another application of the \pb{Zero-Clique} Conjecture. Specifically, we prove enumeration lower bounds for cyclic (i.e.~non-acyclic) joins, assuming the hardness of \pb{Zero-Clique}. These lower bounds reprove a dichotomy from~\cite{bb:thesis} under a different complexity assumption.
	
	First, in Section~\ref{sec:set-intersect-enum} we prove a lower bound for a variation of \pb{$k$-Set-Intersection}. In Section~\ref{sec:lw-joins} we prove hardness of enumerating Loomis-Whitney joins, and then we use a reduction from Loomis-Whitney joins to general cyclic joins in Section~\ref{sec:cyclic-enum}.
	
	\subsection{Hardness of Set-Intersection-Enumeration}
	\label{sec:set-intersect-enum}
	
	We start by defining an offline variant of \pb{$k$-Set-Intersection}, in which we want to enumerate all elements of all queried intersections.
	
	\begin{definition} \label{def:setintersection_enum}
		In the \pb{$k$-Set-Intersection-Enumeration} problem, we are given an instance $\I$ consisting of a universe $U$ and families $\mathcal{A}_1, \ldots, \mathcal{A}_k \subseteq 2^U$, where we denote the sets in family $\A_i$ by $S_{i,1},\ldots,S_{i,|\A_i|}$. We are also given a set of queries $Q$, where each query specifies indices $(j_1,\ldots,j_k)$. The goal is to enumerate all answers to all queries, that is, to enumerate all pairs $(q,u)$ such that $q = (j_1,\ldots,j_k) \in Q$ and $u \in S_{1,j_1}\cap \ldots \cap S_{k,j_k}$.
	\end{definition}
	
	As usual for enumeration problems, we assume algorithms for \pb{$k$-Set-Intersection-Enumeration} work in two phases: in a first phase, we preprocess the input into an index data structure. In a second phase, we enumerate the answers one after the other, measuring the efficiency of the algorithm in the \emph{delay}, i.e., the time between two consecutive answers.
	
	We prove the following analogue of Theorem~\ref{thm:intersection}.
	
	\begin{lemma}\label{lem:enumerationintersection}
		Assuming the \pb{Zero-$(k+1)$-Clique} Conjecture, for every constant $\varepsilon> 0$ there exists a constant $\delta > 0$ such that no randomized algorithm solves \pb{$k$-Set-Intersection-Enumeration} on universe size $|U|=n$ with at most $n$ sets in preprocessing time $O(n^{k+1-\varepsilon})$ and delay $O(n^\delta)$.
	\end{lemma}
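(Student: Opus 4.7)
The plan is to mimic the reduction from the proof of Theorem~\ref{thm:intersection}, but without the truncation parameter $T$. As before, I would start from a \pb{Zero-$(k+1)$-Clique} instance on a complete $(k+1)$-partite graph (Observation~\ref{obs:kpartite}) with parts $V_1,\ldots,V_{k+1}$, move to a prime field $\mathbb{F}_p$, and define the randomized re-weighting $w'$ from (\ref{eq:def_new_weights}), which preserves zero-cliques and satisfies $w'(v_1,\ldots,v_{k+1}) = x\cdot w(v_1,\ldots,v_{k+1})$. I would partition $\mathbb{F}_p$ into $n^\rho$ intervals (for a parameter $\rho>0$ fixed later) and let $S$ be the set of interval tuples $(I_0,\ldots,I_k)$ with $0\in\sum_i I_i$. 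For every $(I_0,\ldots,I_k)\in S$, construct the same instance $\A_i=\{S_v:v\in V_i\}$ with $S_v=\{u\in V_{k+1}:w'(v,u)\in I_i\}$ and query set $Q=\{(v_1,\ldots,v_k)\in V_1\times\cdots\times V_k:w'(v_1,\ldots,v_k)\in I_0\}$, now feeding everything into the assumed \pb{$k$-Set-Intersection-Enumeration} algorithm.

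The algorithm itself is straightforward: preprocess each of the $O(n^{k\rho})$ instances, then enumerate answers. Each pair $((v_1,\ldots,v_k),v)$ returned corresponds to a $(k+1)$-clique $(v_1,\ldots,v_k,v)$ whose weight interval tuple equals the tuple of that instance; check whether this clique is a zero-clique and output it if so. Correctness is immediate: for a zero-clique we have $w'(v_1,\ldots,v_{k+1})=0$, so $0\in\sum_iI_i$ and the clique must appear in the enumeration of its own instance.

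The main technical step is the running-time analysis, and the key point is bounding the total number of answers across all instances. A non-zero-clique $C$ has $w'(C)=x\cdot w(C)$ uniformly distributed in $\mathbb{F}_p$ over the random choice of $x\in\mathbb{F}_p^\star$. Since $w'_i\in I_i$ and $\sum_i w'_i=w'(C)$, the condition $(I_0,\ldots,I_k)\in S$ forces $w'(C)$ to lie in the interval $\sum_iI_i$ of width $O(p/n^\rho)$ together with $0$, so $|w'(C)|=O(p/n^\rho)$. This event has probability $O(n^{-\rho})$, so by linearity of expectation the expected number of non-zero-clique answers over all instances is $O(n^{k+1-\rho})$, and by Markov this bound also holds up to a constant factor with constant probability. (Note we do \emph{not} need the bijection argument of Claim~\ref{clm:secondcutoff}; the weaker ``sum in a window'' bound suffices here.)

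Combining, the total preprocessing time is $O(n^{k\rho}\cdot n^{k+1-\varepsilon})=O(n^{k+1+k\rho-\varepsilon})$ and the total enumeration time is $O(n^{k+1-\rho+\delta})$. Setting $\rho:=\varepsilon/(2k)$ and $\delta:=\rho/2$ makes both of these $O(n^{k+1-\eps'})$ for some $\eps'>0$, giving a randomized algorithm for \pb{Zero-$(k+1)$-Clique} in time $O(n^{k+1-\eps'})$ and contradicting the \pb{Zero-$(k+1)$-Clique} Conjecture. The hard part is the total-answer bound: unlike in Theorem~\ref{thm:intersection}, we cannot control the work per query via an explicit cutoff, so we must rely on the randomness of $w'$ to ensure that few non-zero-cliques have interval tuples in $S$.
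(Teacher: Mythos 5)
Your proof is correct, and in the key probabilistic step it takes a genuinely simpler route than the paper. The paper modifies $w'$ once more relative to Theorem~\ref{thm:intersection}: it introduces an additional random value $y_v \in \Fp$ for every $v \in V_1$ (see (\ref{eq:def_new_weights_two})) whose purpose is to also randomize the sub-clique weight $w'_0 = w'(v_1,\ldots,v_k)$. It then proves, via a bijection argument analogous to Claim~\ref{clm:secondcutoff}, that for a fixed non-zero-clique $C$ the entire tuple $(w'_0,w'_1,\ldots,w'_k)$ is uniformly distributed in $\Fp^{k+1}$; from this it gets $\Pr[\text{interval tuple}(C) = (I_0,\ldots,I_k)] = O(n^{-(k+1)\rho})$ for each fixed tuple, and then sums over the $O(n^{k\rho})$ tuples in $S$ to obtain $O(n^{-\rho})$. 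You instead leave $w'$ exactly as in Theorem~\ref{thm:intersection} and observe that the \emph{sum} $w'(C) = w'_0 + \cdots + w'_k = x\cdot w(C)$ is already uniform over $\Fp$ whenever $w(C)\neq 0$, regardless of the $y$-shifts (they telescope). Since the interval tuple of $C$ being in $S$ forces $w'(C)$ into a window of width $O(p/n^\rho)$ around $0$, this alone yields $\Pr[\text{tuple}(C)\in S] = O(n^{-\rho})$, and linearity of expectation plus Markov finish the argument. This avoids both the extra randomization of $w'$ and the bijection argument, while producing the same $O(n^{k+1-\rho})$ bound on false positives; what you lose is only the (unneeded) joint-uniformity statement. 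Two small points to make fully rigorous: $\sum_i I_i$ is a cyclic interval in $\Fp$ of total length $O(k\cdot p/n^{\rho})$ (constant $k$, so still $O(p/n^\rho)$), and you should explicitly stop at the first zero-clique found and abort after time $O(n^{k+1-\rho/2})$ so that the runtime bound holds with certainty rather than only with probability $.99$; the paper makes this last remark explicitly.
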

	
	We prove Lemma~\ref{lem:enumerationintersection} by contradiction in the rest of this subsection. So assume that \pb{$k$-Set-Intersection-Enumeration} on universe size $|U|=n$ can be solved in preprocessing time $O(n^{k+1-\varepsilon})$ and delay $O(n^\delta)$ for some $\eps > 0$ and $\delta := \frac{\eps}{4k}$.
	
	We closely follow the proof of Theorem~\ref{thm:intersection}, and we only describe the differences. We need an additional source of randomness in the new edge weights $w'$, therefore for every $v \in V_1$ we pick a random $y_v \in \Fp$, we subtract $y_v$ from every edge weight $w'(v,u)$ for each $u \in V_{k+1}$, and we add $y_v$ to every edge weight $w'(v,u)$ for each $u \in V_2$. This does not change the weight of a clique $(v_1,\ldots,v_{k+1}) \in V_1 \times \ldots \times V_{k+1}$.
	In more detail, we now choose uniformly and independently at random from $\Fp$:
	\begin{itemize}
		\item one value $x$,
		\item for all $v \in V_{k+1}$ and all $j \in [k-1]$ a value $y_v^j$, and 
		\item for each $v \in V_1$ a value $y_v$.
	\end{itemize}
	We then define the new weight function $w'$ by setting for any $i,j \in [k+1]$ and any $v \in V_i$ and $u \in V_j$:
	\begin{align} \label{eq:def_new_weights_two}
		w'(v, u) = x \cdot w(v, u) + 
		\begin{cases} y_{u}^1 - y_v, & \text { if } i=1, j=k+1\\ 
			y_{u}^i - y_u^{i-1}, & \text { if } 2 \le i < k, j = k+1\\ 
			- y_u^{k-1}, & \text { if } i=k, j=k+1\\ 
			y_v, & \text{ if } i=1, j=2\\
			0, & \text{ otherwise }
		\end{cases}
	\end{align}
	With this minor modification of $w'$, we follow the rest of the construction verbatim, in particular we pick the same parameter $\rho := \frac{\eps}{2k}$, and for every weight interval tuple $(I_0,\ldots,I_k) \in S$ we construct the instance of \pb{$k$-Set-Intersection-Enumeration} given by 
	\begin{align*}
		\mathcal{A}_i := \{S_{i,v}\mid v\in V_i\} \;\text{ where }\; S_{i,v} := \{ u\in V_{k+1}\mid w'(v,u)\in I_i\}
	\end{align*}
	for $i \in [k]$, and the set of queries
	\begin{align*}
		Q := \{(v_1,\ldots,v_k) \in V_1\times \ldots \times V_k \mid w'(v_1,\ldots,v_k) \in I_0 \}.
	\end{align*}
	Note that we have for every $i\in [k]$ that $|\mathcal A_i| = |V_i|$, so $\sum_i |\A_i|\le n$.
	We then run the assumed \pb{$k$-Set-Intersection-Enumeration} algorithm on $\I=(\A_1,\ldots,\A_k)$ and $Q$. If this enumeration at some point returns an index corresponding to a zero-clique, then we return this zero-clique. If the enumeration ends without producing a zero-clique, then we return 'no'.
	
	By the same arguments as in the proof of Theorem~\ref{thm:intersection}, since every zero-clique has a weight interval tuple in $S$, 
	every zero-clique appears as an answer to some query $q \in Q$. Thus, by enumerating all answers we enumerate all zero-cliques. In particular, if there exists a zero-clique, then the algorithm will find one. Since we filter the results for zero-cliques, it also holds that if there is no zero-clique then the algorithm will return 'no'. This proves correctness.
	
	It remains to analyze the running time. 
	In total over all $|S| = O(n^{k\rho})$ instances, we need preprocessing time $O(n^{k+1-\eps+k\rho}) = O(n^{k+1-\eps/2})$ for our choice of $\rho = \frac{\eps}{2k}$.
	To analyze the total delay, consider the following claim.
	
	\begin{claim}\label{clm:exptectedfalsepositives}
		With probability at least $.99$, the number of cliques $C = (v_1,\ldots,v_{k+1}) \in V_1 \times \ldots \times V_{k+1}$ such that $C$ is not a zero-clique and the weight interval tuple of $C$ lies in $S$ is bounded by $O(n^{k+1-\rho})$.
	\end{claim}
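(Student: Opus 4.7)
The plan is to bound the expected number of non-zero cliques whose weight interval tuple lies in $S$, and then conclude via Markov's inequality. Fix a clique $C = (v_1, \ldots, v_{k+1}) \in V_1 \times \ldots \times V_{k+1}$ and, following the notation of Section~5.1, write $w'_0 := w'(v_1, \ldots, v_k)$ and $w'_i := w'(v_i, v_{k+1})$ for $i \in [k]$, so that $\sum_{i=0}^k w'_i = w'(v_1, \ldots, v_{k+1})$. I want to show that the probability (over the random draws of $x$, the $y^j_u$, and the $y_v$) that the weight interval tuple of $C$ lies in $S$ is at most $O(n^{-\rho})$.

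The first step is to reduce the event to a single constraint on the total sum $\sum_{i=0}^k w'_i$. If the weight interval tuple $(I_0, \ldots, I_k)$ of $C$ lies in $S$, then by definition $w'_i \in I_i$ for every $i$, so $\sum_{i=0}^k w'_i \in \sum_{i=0}^k I_i$; moreover $0 \in \sum_{i=0}^k I_i$ by the definition of $S$. The Minkowski sum $\sum_{i=0}^k I_i$ is an interval in $\Fp$ of size at most $(k+1)\lceil p/n^\rho\rceil$, so containing both $0$ and $\sum w'_i$ forces $\sum_{i=0}^k w'_i$ to lie in a fixed set $T \subseteq \Fp$ (independent of $C$) of size at most $2(k+1)\lceil p/n^\rho\rceil +1 = O(p/n^\rho)$, namely the set of elements at distance at most $(k+1)\lceil p/n^\rho\rceil$ from~$0$.

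The second step is to compute $\sum_{i=0}^k w'_i$ using~\eqref{eq:def_new_weights_two}. The terms $y^j_{v_{k+1}}$ telescope inside $\sum_{i=1}^k w'_i$ exactly as in the argument following~\eqref{eq:def_new_weights}, while the new term $y_{v_1}$ appears with coefficient $+1$ inside $w'_0$ (through the single edge $w'(v_1,v_2)$) and with coefficient $-1$ inside $w'_1 = w'(v_1,v_{k+1})$, so it cancels too. Thus $\sum_{i=0}^k w'_i = x\cdot w(v_1,\ldots, v_{k+1})$. Since $C$ is not a zero-clique, $w(v_1,\ldots, v_{k+1})$ is a fixed nonzero element of $\Fp$; because $x$ is uniform in $\Fp$, the product $x\cdot w(v_1,\ldots, v_{k+1})$ is then uniform in $\Fp$. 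Hence $\Pr\left[\sum_{i=0}^k w'_i \in T\right] = |T|/p = O(n^{-\rho})$, and the combined event of the first two steps has probability at most $O(n^{-\rho})$.

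Finally I sum over the at most $n^{k+1}$ cliques in $V_1\times\ldots\times V_{k+1}$ by linearity of expectation, obtaining expected number of non-zero cliques with weight interval tuple in $S$ at most $c\cdot n^{k+1-\rho}$ for some constant $c$ depending only on $k$. By Markov's inequality, the probability that this count exceeds $100c\cdot n^{k+1-\rho}$ is at most $1/100$, establishing the claim. I do not foresee a real obstacle: the key ingredient is the uniformity of $x\cdot w(C)$ in $\Fp$, which uses only the randomness of $x$; the additional variables $y_v$ for $v\in V_1$ introduced in~\eqref{eq:def_new_weights_two} are not needed for this particular claim (they will presumably be exploited elsewhere in the enumeration argument, e.g.\ to handle other kinds of false positives that are not captured by the total-weight argument above).
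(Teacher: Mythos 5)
Your proof is correct, and it takes a genuinely different and noticeably simpler route than the paper. The paper's argument shows that the system (\ref{eq:bijectiontwo}) gives a bijection between $(w'_0,\ldots,w'_k)$ and the $k+1$ random seeds $(x,y_{v_1},y_v^1,\ldots,y_v^{k-1})$, concludes that the whole tuple $(w'_0,\ldots,w'_k)$ is uniform in $\Fp^{k+1}$, and then multiplies the per-coordinate probabilities $O(n^{-\rho})$ over the $k+1$ coordinates before summing over the $O(n^{k\rho})$ tuples in $S$. You instead collapse the event ``weight interval tuple in $S$'' into the single necessary condition $\sum_{i=0}^k w'_i \in T$ for a fixed set $T$ of size $O(p/n^\rho)$, note the identity $\sum_{i=0}^k w'_i = x\cdot w(C)$, and use only the randomness of $x$ to get the $O(n^{-\rho})$ bound. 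Both arrive at the same $O(n^{-\rho})$ per-clique bound and then finish identically via linearity and Markov. Two things are worth flagging. First, your observation that the extra $y_v$ variables from (\ref{eq:def_new_weights_two}) are unnecessary is correct and in fact they play no other role in the proof of Lemma~\ref{lem:enumerationintersection}, so your argument shows the enumeration construction could have reused the $w'$ of (\ref{eq:def_new_weights}) verbatim --- a genuine simplification of the presentation, not just of this one claim. Second, the reason the paper's heavier bijection machinery exists is Claim~\ref{clm:secondcutoff} in the direct-access reduction: there one fixes the first $k$ clique vertices and averages only over the $n$ choices of the last vertex, so one needs the $O(n^{-k\rho})$ per-vertex bound that the full $\Fp^k$-uniformity gives; your sum-collapsing trick would only give $O(n^{-\rho})$ there, which is too weak. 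Here, by contrast, you average over all $n^{k+1}$ cliques, so $O(n^{-\rho})$ per clique is exactly what is needed, and your shortcut is valid.
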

	
	Let us first finish the correctness proof, and later prove the claim. 
	Each clique as in the claim is a ``false positive'', that is, an answer to our queries that does not correspond to a zero-clique. It follows that with probability at least $.99$, all our \pb{$k$-Set-Intersection-Enumeration} instances together enumerate $O(n^{k+1-\rho})$ answers before seeing the first answer that corresponds to a zero-clique. 
	The total query time is then bounded by $O(n^{k+1-\rho+\delta}) = O(n^{k+1-\rho/2})$ for our choice of $\rho = \frac{\eps}{2k}$ and $\delta = \frac{\eps}{4k}$.
	Together with the preprocessing, this solves \pb{Zero-$(k+1)$-Clique} in time $O(n^{k+1-\eps/2}) + O(n^{k+1-\rho/2}) = O(n^{k+1-\rho/2})$, which contradicts the \pb{Zero-$(k+1)$-Clique} Conjecture.
	
	Note that the time bound here only holds with probability $.99$. We can make the time bound hold deterministically, by aborting the algorithm after an appropriate time $O(n^{k+1-\rho/2})$. Then the running time is deterministic, and the algorithm still finds a zero-clique with probability at least $.99$. We can boost the success probability by repeating this algorithm.
	
	It remains to prove the claim.
	
	\begin{proof}[Proof of Claim~\ref{clm:exptectedfalsepositives}]
		The proof is similar to the one of Claim~\ref{clm:secondcutoff}.
		Let $(v_1,\ldots,v_k,v) \in V_1\times \ldots \times V_{k+1}$ be a tuple that is not a zero-clique.
		We use the shorthand notation $w'_i := w'(v_i,v_{k+1})$ for $i \in [k]$ and $w'_0 := w'(v_1,\ldots,v_k)$. The values $w_0,\ldots,w_k$ are defined analogously for the original edge weight function $w$.
		Expanding the definition of $w'$ (see (\ref{eq:def_new_weights_two})) we obtain
		\begin{align}
			w'_0 &= x \cdot w_0 + y_{v_1},\notag \\
			w'_1 &= x \cdot w_1 + y_v^1 - y_{v_1},\label{eq:bijectiontwo}\\
			w'_i &= x \cdot w_i + y_v^i - y_v^{i-1} \;\text{for each }i\in \{2, \ldots, k-1\}\notag\\
			w'_k &= x \cdot w_k - y_v^{k-1},\notag
		\end{align}
		
		We claim that for fixed $w_0,\ldots,w_k$ the above equations induce a bijection sending $(w'_0,\ldots,w'_k)$ to $(x,y_{v_1},y_v^1,\ldots,y_v^{k-1})$. Indeed, by (\ref{eq:bijection}) we can compute $(w'_0,\ldots,w'_k)$ given $(x,y_{v_1},y_v^1,\ldots,y_v^{k-1})$. For the other direction, we use that $(v_1,\ldots,v_k,v)$ is no zero-clique, that is, $0 \ne w(v_1,\ldots,v_k,v) = w_0+\ldots+w_k$. 
		By adding up all equations~(\ref{eq:bijection}), since each term $y_v^i$ and $y_{v_1}$ appears once positively and once negatively, we obtain $w'_0+\ldots+w'_k = x \cdot (w_0+\ldots+w_k)$. Hence, given $(w'_0,\ldots,w'_k)$ we can compute $x = (w'_0+\ldots+w'_k)/(w_0+\ldots+w_k)$. Note that here we do not divide by~0, since $w_0+\ldots+w_k \ne 0$. Next we can compute $y_{v_1} = x \cdot w_0 - w'_0$ by rearranging the first equation of (\ref{eq:bijection}) (recall that $w_0$ is fixed, we are given $w'_0$, and we just computed $x$, so each variable on the right hand side is known). Similarly, from $x$ and $y_{v_1}$ we can compute $y_v^1 = x \cdot w_1 - y_{v_1} - w'_1$, and for any $1 < i < k$ we can compute $y_v^i = x \cdot w_i - y_v^{i-1} - w'_i$. This computes $(x,y_{v_1},y_v^1,\ldots,y_v^{k-1})$ given $(w'_0,\ldots,w'_k)$.
		
		Since we showed that there is a bijection sending $(w'_0,\ldots,w'_k)$ to $(x,y_{v_1},y_v^1,\ldots,y_v^{k-1})$, and since $x,y_{v_1},y_v^1,\ldots,y_v^{k-1}$ are all chosen independently and uniformly random from $\mathbb{F}_p$, we obtain that $(w'_0,\ldots,w'_k)$ is uniformly random in $\Fp^{k+1}$. Since each weight interval $I_i$ is of length $O(p n^{-\rho})$, we have that $w'_i \in I_i$ holds with probability $O(\frac{p n^{-\rho}}p) = O(n^{-\rho})$, for each $i \in \{0,1,\ldots,k\}$. In total, $(v_1,\ldots,v_k,v)$ has a specific weight interval tuple $(I_0,\ldots,I_k)$ with probability $O(n^{-(k+1)\rho})$. Since there are $O(n^{k\rho})$ tuples in $S$, it follows that $(v_1,\ldots,v_k,v)$ has a weight interval tuple in $S$ with probability $O(n^{-\rho})$.
		
		Since there are $O(n^{k+1})$ choices for $(v_1,\ldots,v_k,v)$, it follows that the expected number of non-zero-cliques with a weight interval tuple in $S$ is $O(n^{k+1-\rho})$. The claim now follows from an application of Markov's inequality.
	\end{proof}

	\subsection{Hardness of Loomis-Whitney Joins}
	\label{sec:lw-joins}
	
	In this section, we prove the hardness of enumerating answers to Loomis-Whitney joins based on the hardness results from the previous section. Loomis-Whitney joins are queries of the form
	\[LW_k(x_1, \ldots, x_k) \datarule R_1(\bar X_1), \ldots, R_k(\bar X_k)\]
	where for each $i\in[k]$, we have that $\bar X_i = x_1, \ldots x_{i-1}, x_{i+1} \ldots, x_k$.
	When $k=3$, $LW_3$ is simply the (edge-colored) triangle query:
	\[LW_{3}(x_1, x_2, x_3) \datarule R_1(x_2, x_3), R_2(x_1, x_3), R_3(x_1, x_2)\]
	Loomis-Whitney joins are thus generalizations of the triangle join where instead of $3$ nodes in a graph where every $2$ share an edge, we are looking for $k$ nodes in a hypergraph where every $k-1$ nodes share an edge.
	Loomis-Whitney joins are of special interest because, as we explain in Section~\ref{sec:cyclic-enum}, they are in a well-defined sense the obstructions to a query being acyclic.
	
	There is a naive way of designing constant delay algorithms for Loomis-Whitney joins: one can materialize the query result during preprocessing and then simply read them from memory in the enumeration phase. This solution requires preprocessing time $O(|D|^{1+1/(k-1)})$ using the algorithm by Ngo et al.~\cite{NgoPRR18} for Loomis-Whitney joins or, since $LW_k$ has fractional cover size $1+1/(k-1)$, any worst-case optimal join algorithm from Theorem~\ref{thm:worst-case-joins} will give the same guarantees. In this section, we show that this naive approach is likely optimal for enumeration of Loomis-Whitney joins.
	
	\begin{theorem}\label{thm:LW-hardness}
		For every $k\ge 3$ and every $\varepsilon > 0$ there is a $\delta> 0$ such that there is no enumeration algorithm for $LW_k$ with preprocessing time $O(|D|^{1+\frac{1}{k-1}-\varepsilon})$ and delay $O(|D|^\delta)$, assuming the \pb{Zero-$k$-Clique} Conjecture. 
	\end{theorem}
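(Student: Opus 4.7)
The plan is a reduction from \pb{$(k-1)$-Set-Intersection-Enumeration} to enumeration of $LW_k$. Hardness of the former under the \pb{Zero-$k$-Clique} Conjecture is exactly Lemma~\ref{lem:enumerationintersection} instantiated with $k-1$ in place of $k$. Given $\varepsilon>0$, the goal is to turn a hypothetical $LW_k$ algorithm with preprocessing $O(|D|^{1+1/(k-1)-\varepsilon})$ and delay $O(|D|^\delta)$ into an algorithm for \pb{$(k-1)$-Set-Intersection-Enumeration} with preprocessing $O(n^{k-\varepsilon'})$ for $\varepsilon' := (k-1)\varepsilon$ and delay below the threshold furnished by the lemma for $\varepsilon'$.

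For the encoding, given an instance $\I$ with families $\A_1,\ldots,\A_{k-1}$, universe $U$ of size $n$, and query set $Q$, interpret the $LW_k$ variables $x_1,\ldots,x_{k-1}$ as index choices $j_1,\ldots,j_{k-1}$ into the respective families and $x_k$ as an element $v \in U$. Populate $R_k(x_1,\ldots,x_{k-1}) := Q$, and for each $i \in [k-1]$ populate
\[
  R_i := \{(j_1,\ldots,j_{i-1},j_{i+1},\ldots,j_{k-1},v) \mid v \in S_{i',j_{i'}} \text{ for all } i' \in [k-1]\setminus\{i\}\}.
\]
A tuple $(j_1,\ldots,j_{k-1},v)$ satisfies all $LW_k$ atoms iff $R_k$ forces $(j_1,\ldots,j_{k-1}) \in Q$ and, combining the membership conditions from the different $R_i$, one obtains $v \in \bigcap_{i'=1}^{k-1} S_{i',j_{i'}}$. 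Hence the answers of $LW_k$ on $D$ are in explicit bijection with the answer pairs of \pb{$(k-1)$-Set-Intersection-Enumeration} on $\I$, and the conversion between the two is immediate per answer.

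The crucial size bound is $|D|=O(n^{k-1})$. Each $R_i$ with $i\in[k-1]$ has at most $n^{k-2}\cdot n = n^{k-1}$ tuples, and can be enumerated in this much time by precomputing the inverted index $J_{i'}(v) := \{j : v \in S_{i',j}\}$ in a linear scan of $\I$ and then iterating over $v\in U$ to emit $\{v\}\times\prod_{i'\neq i} J_{i'}(v)$; and $|R_k|=|Q|\le n^{k-1}$. Hence the reduction runs in $O(n^{k-1})$ time. The subsequent $LW_k$ preprocessing costs $O(n^{(k-1)(1+1/(k-1)-\varepsilon)}) = O(n^{k-(k-1)\varepsilon}) = O(n^{k-\varepsilon'})$, and each answer is produced with delay $O(n^{(k-1)\delta})$. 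For a given $\varepsilon$, we apply Lemma~\ref{lem:enumerationintersection} with $\varepsilon'=(k-1)\varepsilon$ to obtain the threshold $\delta^*>0$ and then set $\delta := \delta^*/(k-1)$, so that the derived delay is $O(n^{\delta^*})$, contradicting the lemma.

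The only non-trivial step is the bound $|R_i|\le n^{k-1}$ together with a matching construction time; invoking a generic worst-case-optimal join algorithm on $R_i$ would yield only $O(n^{2(k-2)})$, which is too weak for $k\ge 4$, so the argument relies on the direct Loomis--Whitney-style enumeration above, exploiting that $R_i$ is a join of $k-2$ binary relations all sharing the common variable $v$. Everything else is bookkeeping on the exponents of $\varepsilon$ and $\delta$.
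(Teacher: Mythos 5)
Your proof is correct and follows essentially the same route as the paper: a reduction from \pb{$(k-1)$-Set-Intersection-Enumeration}, made hard under the \pb{Zero-$k$-Clique} Conjecture by Lemma~\ref{lem:enumerationintersection}, to enumerating $LW_k$, yielding a database of size $O(n^{k-1})$ and then running identical exponent arithmetic. The only superficial difference is how the atoms $R_i$ are populated---the paper encodes a single set membership per atom and pads the remaining columns with all of $[n]$ (a plain Cartesian product), while you enforce all $k-2$ non-omitted memberships at once and build each $R_i$ via an inverted-index star enumeration---but both give size and construction time $O(n^{k-1})$ and join to the same answer set (and, as a tiny nit, you should take $\varepsilon':=\min\{1,(k-1)\varepsilon\}$ as the paper does, so that the $O(n^{k-1})$ construction cost is always absorbed into the $O(n^{k-\varepsilon'})$ preprocessing bound).
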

	\begin{proof}
		We show that a fast enumeration algorithm for $LW_{k}$ yields a fast algorithm for \pb{$(k-1)$-Set-Intersection-Enumeration}, which by Lemma~\ref{lem:enumerationintersection} breaks the \pb{Zero-$k$-Clique} Conjecture.
		
		Assume that there are $k \ge 3$ and $\eps>0$ such that for all $\delta>0$ there is an enumeration algorithm for $LW_{k}$ with preprocessing time $O(|D|^{1+1/{(k-1)}-\varepsilon})$ and delay $O(|D|^\delta)$. Consider an instance $\I$ of \pb{$(k-1)$-Set-Intersection-Enumeration} with universe size $|U|=n$ and $O(n)$ sets as in the statement of Lemma~\ref{lem:enumerationintersection}. That is, we are given families $\mathcal{A}_1, \ldots, \mathcal{A}_{k-1} \subseteq 2^U$, where we write $\A_i = \{S_{i,1},\ldots,S_{i,|\A_i|}\}$ and $\sum_i |\mathcal A_i| \le n$, and a set of queries $Q$, where each query is of the form $(j_1,\ldots,j_{k-1})$ with $j_i \in [|\A_i|]$. We construct a database $D$ as follows. The domain of $D$ is $U\cup [\max_{i\in [k-1]}|\mathcal A_i|]$. 
		Note that $\max_{i\in [k-1]}|\mathcal A_i|\le \sum_i |\mathcal A_i| \le n$, so the overall domain size of $D$ is~$O(n)$.
		
		For each $i \in [k-1]$, let $i^+$ be the unique value in $[k-1]$ with $i+1 \equiv i^+ \bmod {(k-1)}$.
		Note that the atom $R_{i}(\bar X_{i})$ of $LW_{k}$ contains the variables $x_{i^+}$ and $x_{k}$.
		We define the relations:
		\begin{align*}
			R_{i}'^{D} &:= \{(j, v) \mid v \in S_{i^+,j}\} \;\;\text{ for each } i \in [k-1], \\
			R_{k}^D &:= Q.
		\end{align*}
		When interpreted on $D$, the join query
		\begin{align*}
			LW'_{k} \datarule R_{1}'(x_{1^+}, x_{k}), \ldots, R_{k-1}'(x_{{(k-1)}^+}, x_{k}), R_{k}(x_1, \ldots, x_{k-1})
		\end{align*}
		computes the conjunction $x_{k}\in S_{1^+, x_{1^+}}\land  \ldots \land x_{k}\in S_{{(k-1)}^+, x_{(k-1)^+}}$ and $q= (x_1, \ldots, x_{k-1})\in Q$ because the join query (on $D$) formally means $\bigwedge_{i\in [k-1]} x_k \in S_{i^+, x_{i^+}} \land (x_1, \ldots, x_{k-1})\in Q$. Therefore, it computes exactly the answers to $q\in Q$ for the \pb{$(k-1)$-Set-Intersection-Enumeration} instance. To embed $LW'_{k}$ into $LW_{k}$, we create $(k-1)$-ary relations $R_j^D$, for $j\in [k-1]$, by simply adding the columns for the missing attributes $x_i, i\in [k-1]\setminus \{j, j^+\}$ in the relations~$R_j'^D$. To this end, we extend each $2$-tuple in $R_j'$ by all possible combinations of assignments to $[n]$ to the $k-3$ attributes in $[k-1]\setminus \{j, j^+\}$. Clearly, 
		\begin{align*}
			LW_{k}'(D) = LW_{k}(D),
		\end{align*}
		so running the assumed enumeration algorithm for $LW_{k}$ on the instance $D$ solves \pb{$(k-1)$-Set-Intersection-Enumeration}.
		
		It remains to analyze the running time of the resulting algorithm. First observe that all relations in $D$ have arity $k-1$. Since the domain size of $D$ is $O(n)$, the relation sizes are at most $O(n^{k-1})$, and so $|D| = O(n^{k-1})$. Moreover, $D$ can easily be constructed in time $O(n^{k-1})$. Thus, the overall preprocessing time is $O(n^{k-1})$ plus the preprocessing time of the algorithm for $LW_{k}$, which is
		\begin{align*}
			O(n^{k-1} + |D|^{1+1/{(k-1)}-\varepsilon}) = O(n^{k-\varepsilon'}),
		\end{align*}
		for $\eps' := \min\{1, (k-1) \eps\}$.
		For any desired $\delta' > 0$, by setting $\delta := \delta'/(k-1)$ we obtain delay $O(|D|^{\delta}) = O(n^{(k-1) \delta}) = O(n^{\delta'})$. By Lemma~\ref{lem:enumerationintersection}, this violates the \pb{Zero-$k$-Clique} Conjecture.
	\end{proof}
	
	\subsection{Hardness of Cyclic Joins}\label{sec:cyclic-enum}
	
	In this section, we describe the implication of hardness of Loomis-Whitney joins on the hardness of other cyclic joins, and derive an enumeration lower bound for self-join free cyclic joins based of \pb{Zero-$k$-Clique}.
	A known characterizarion of cyclic hypergraphs is that they contain a chordless cycle or a non-conformal clique (i.e., a set of pairwise neighbors that are not contained in an edge), see e.g.~the survey~\cite{Brault-Baron16}. By considering a minimal non-conformal clique, Brault-Baron~\cite{bb:thesis} used this property to claim that every cyclic query contains either a chordless cycle or a set of $k$ variables such that there is no atom that contains all of them, but for every subset of size $k-1$ there is such an atom. In other words, by removing all occurrences of some variables, and removing some atoms whose variables are contained in other atoms, we can obtain either a Loomis-Whitney join or a chordless cycle join.
	In case of a chordless cycle, we can always use it to solve a chordless cycle of length $3$ (also called a triangle), which can be seen as a Loomis-Whitney join of size $k=3$.
	Stated in different words, Brault-Baron~\cite{bb:thesis} essentially proved the following lemma:
	
	\begin{lemma}\label{lemma:cyclic-LW-reduction}
		Let $Q$ be a self-join free cyclic join. There exists $k\ge 3$ such that there is an exact reduction from $LW_k$ to $Q$.
	\end{lemma}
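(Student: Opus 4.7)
My plan is to follow the structural approach attributed to Brault-Baron. A hypergraph is cyclic iff it either contains a chordless (Berge) cycle or is non-conformal, i.e., has a clique in its primal graph that is not contained in any single hyperedge. Picking a \emph{minimal} non-conformal clique yields $k \ge 3$ variables $\{x_1,\ldots,x_k\}$ such that every $(k-1)$-subset is contained in some atom of $Q$, but $\{x_1,\ldots,x_k\}$ itself is not. I fix witness atoms $A_1,\ldots,A_k$ with $\{x_1,\ldots,x_k\}\setminus\{x_i\}\subseteq \mathrm{vars}(A_i)$, noting that non-conformality forces $x_i\notin \mathrm{vars}(A_i)$. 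In the chordless-cycle branch I let $y_1,\ldots,y_m$ be the cycle variables with witnesses $C_1,\ldots,C_m$ (each $C_i$ containing $\{y_i,y_{(i \bmod m)+1}\}$) and reduce from $LW_3$, the triangle.

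\textbf{Non-conformal case.} Given an $LW_k$ input $D$ with relations $R_1,\ldots,R_k$ over domain $U$, pick a fresh constant $\bot \notin U$. In the output database $D'$, every variable of $Q$ outside $\{x_1,\ldots,x_k\}$ is pinned to $\bot$. I set $A_i^{D'}$ to consist of the tuples of $R_i^D$ extended by $\bot$ on every coordinate of $A_i$ that lies outside $\{x_1,\ldots,x_k\}$. For each remaining atom $B$ of $Q$, let $Y_B := \mathrm{vars}(B) \cap \{x_1,\ldots,x_k\}$; since no atom contains all of $\{x_1,\ldots,x_k\}$ we have $|Y_B| \le k-1$, so $Y_B \subseteq \{x_1,\ldots,x_k\}\setminus\{x_j\}$ for some $j$, and I populate $B^{D'}$ with the projection of $A_j^{D'}$ onto $Y_B$, extended by $\bot$ on the other coordinates. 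Self-join-freeness of $Q$ makes these independent assignments to distinct relation symbols legal. Reading off the $\{x_1,\ldots,x_k\}$-projection of each tuple in $Q(D')$ gives the required constant-time bijection with $LW_k(D)$.

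\textbf{Chordless-cycle case.} From a triangle instance $T_1(y_a,y_b), T_2(y_b,y_c), T_3(y_a,y_c)$ I pick three indices $i_1<i_2<i_3$ roughly equally spaced in $[m]$. The corner atoms $C_{i_1},C_{i_2},C_{i_3}$ encode $T_1,T_2,T_3$ on their two cycle coordinates, while every other cycle atom $C_i$ is assigned the equality relation on $(y_i,y_{(i \bmod m)+1})$, so that values propagate along the three arcs and the corner constraints reproduce the triangle condition. Non-cycle coordinates of any $C_i$ are set to $\bot$, and any non-cycle atom $B$ of $Q$ is handled by the same dominance recipe as above: chordlessness forces $|\mathrm{vars}(B) \cap \{y_1,\ldots,y_m\}| \le 2$ and the two variables, if present, are consecutive, so their projection lies inside some $C_i^{D'}$'s projection and $B^{D'}$ can be filled from it. The bijection between $Q(D')$ and triangles of the input is then immediate.

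\textbf{Main obstacle.} The delicate point is keeping the construction \emph{linear} in $|D|$, not just polynomial. A naive ``universal'' population of an auxiliary atom would cost up to $|U|^{k-1}$, which is superlinear. The fix is precisely the dominance argument driven by minimality of the non-conformal clique (respectively, by chordlessness in the cycle branch): every non-witness atom's distinguished scope is contained in some witness atom's scope, so its relation is populated by a single projection of a witness relation rather than by enumeration, giving linear total size. Once this is in place, the bijection is immediate because all non-distinguished coordinates are forced to $\bot$, so the answers of $Q(D')$ are in constant-time correspondence with the answers of the source $LW_k$ (or $LW_3$) instance.
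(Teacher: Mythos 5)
Your proof is correct and follows the same route the paper attributes to Brault-Baron: a minimal non-conformal clique gives $LW_k$ directly, and a chordless cycle of length at least four is reduced to the triangle $LW_3$ via corner atoms and equality propagation, with all other variables pinned to a fresh constant and non-witness atoms filled by projections of witness relations to keep the reduction linear-time. The paper itself only sketches this structural dichotomy and defers the details to Brault-Baron's thesis, so your concrete construction matches the intended argument.
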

	
	Since enumeration in linear preprocessing time and constant delay is closed under exact reductions,
	by combining Lemma~\ref{lemma:cyclic-LW-reduction} with Theorem~\ref{thm:LW-hardness}, we conclude the enumeration hardness of cyclic joins based on the hardness of Zero-$k$-Clique: they have no enumeration algorithm with linear preprocessing time and constant delay.
	
	\begin{theorem}
		Let $Q$ be a self-join free cyclic join. Assuming the \pb{Zero-Clique} Conjecture, there exists an $\eps >0$ such that there is no enumeration algorithm for $Q$ with preprocessing time $O(|D|^{1+\eps})$ and delay $O(|D|^\eps)$. 
	\end{theorem}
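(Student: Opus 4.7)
The plan is to derive the statement by contradiction: a sufficiently fast enumeration algorithm for $Q$ would, via the exact reduction from $LW_k$ to $Q$ given by Lemma~\ref{lemma:cyclic-LW-reduction}, yield a too-fast enumeration algorithm for $LW_k$, contradicting Theorem~\ref{thm:LW-hardness}. All the ingredients are in place; the proof is essentially a parameter-chasing exercise on top of these two results.

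First I would apply Lemma~\ref{lemma:cyclic-LW-reduction} to the given self-join free cyclic join $Q$ to obtain an integer $k\ge 3$ and an exact reduction from $LW_k$ to $Q$. In particular, for every instance $D_{LW}$ of $LW_k$ there is a database $D_Q$ of $Q$, computable in time $O(|D_{LW}|)$, with $|D_Q| = O(|D_{LW}|)$ and a constant-time computable bijection between $LW_k(D_{LW})$ and $Q(D_Q)$. Next I would invoke Theorem~\ref{thm:LW-hardness} for this fixed $k$ with the particular choice $\eps^\ast := \tfrac{1}{2(k-1)}$, obtaining a constant $\delta^\ast > 0$ such that, under the Zero-$k$-Clique Conjecture, there is no enumeration algorithm for $LW_k$ with preprocessing $O(|D|^{1+\frac{1}{k-1}-\eps^\ast}) = O(|D|^{1+\frac{1}{2(k-1)}})$ and delay $O(|D|^{\delta^\ast})$.

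I would then set $\eps := \min\{\tfrac{1}{2(k-1)},\, \delta^\ast\} > 0$ and show that this $\eps$ witnesses the theorem. Suppose toward contradiction that an enumeration algorithm $\mathcal{A}_Q$ for $Q$ exists with preprocessing time $O(|D|^{1+\eps})$ and delay $O(|D|^{\eps})$. Given any $LW_k$-instance $D_{LW}$, build $D_Q$ in linear time through the exact reduction, run $\mathcal{A}_Q$ on $D_Q$, and translate each enumerated answer back through the bijection in constant time. Because $|D_Q| = O(|D_{LW}|)$, this yields an enumeration algorithm for $LW_k$ with preprocessing $O(|D_{LW}|^{1+\eps}) = O(|D_{LW}|^{1+\frac{1}{2(k-1)}})$ and delay $O(|D_{LW}|^{\eps}) = O(|D_{LW}|^{\delta^\ast})$, contradicting Theorem~\ref{thm:LW-hardness} under the Zero-$k$-Clique Conjecture, which is implied by the Zero-Clique Conjecture.

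There is no genuine technical obstacle here: the combinatorial content (the chordless-cycle/non-conformal-clique structure underlying Lemma~\ref{lemma:cyclic-LW-reduction}, and the zero-clique based hardness of Loomis-Whitney enumeration in Theorem~\ref{thm:LW-hardness}) has already been developed. The only subtlety is ensuring that the single parameter $\eps$ appearing in both the preprocessing exponent $1+\eps$ and the delay exponent $\eps$ is simultaneously small enough to stay strictly below the Loomis-Whitney threshold $1+\frac{1}{k-1}$ and small enough to be dominated by $\delta^\ast$; taking $\eps$ to be the minimum of the two quantities above handles both constraints cleanly.
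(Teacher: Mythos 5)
Your proof is correct and follows essentially the same route as the paper: the paper's own justification is exactly ``combine Lemma~\ref{lemma:cyclic-LW-reduction} with Theorem~\ref{thm:LW-hardness},'' and your argument simply spells out the parameter bookkeeping (choosing $\eps^\ast$, obtaining $\delta^\ast$, and taking $\eps$ to be the minimum) that the paper leaves implicit.
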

	\begin{proof}
		Let $k$ be the constant from Lemma~\ref{lemma:cyclic-LW-reduction}. Then apply Theorem~\ref{thm:LW-hardness} with $\varepsilon := \min\left(\frac{1}{2(k-1)}, \delta\right)$.
	\end{proof}
	
	\section{Conclusion}
	
	In this paper, we identified a parameter that we call the incompatibility number of a join query and an order of its variables. We proposed an algorithm for direct access to the answers of a join query in the lexicographic order induced by the variable order, using a reduction to the acyclic case without disruptive trios which had been studied in~\cite{CarmeliTGKR20}. The exponent of the preprocessing phase of this algorithm is exactly the incompatibility number. 
	We then complemented this upper bound by a matching lower bound that shows that the incompatibility number cannot be beaten as the exponent of the preprocessing for any join query, assuming the Zero-Clique Conjecture from fine-grained complexity theory.
	Due to the relationship of the incompatibility number and fractional hypertree width, it follows that efficient direct access in any lexicographic order requires preprocessing that is exponential in the fractional hypertree width of the query.
	As part of the lower bound proof, we established the complexity equivalence between queries with the same structure with and without self-joins for direct access.
	
	With only a minor change to the proof of our direct-access lower bounds, we also showed lower bounds for enumeration. First, we showed that for Loomis-Whitney joins we cannot significantly improve upon a simple algorithm that computes all answers during preprocessing, assuming the Zero-Clique Conjecture. Then, under the same assumption, we showed that
	acyclic queries are the only self-join free join queries whose answers can be enumerated in constant delay after linear preprocessing. This gives further evidence to a dichotomy that was already shown using a different complexity hypothesis.
	
	We have also presented several extensions of our main results on direct access where we weakened the order requirements or allowed projections. In these settings, we have seen that our approach does \emph{not} yield optimal algorithms. It would be interesting to understand the complexity landscape for these extensions better, and we leave this as an open question for future work.
	
	On a more technical level, we have seen that the Zero-Clique Conjecture is tightly linked to \pb{$k$-Set-Intersection} and \pb{$k$-Set-Disjointness}, which can be seen as variants of a join query. Thus, it seems plausible that there are additional applications of the Zero-Clique Conjecture in database theory, and it would be interesting to further explore such connections.

	\bibliographystyle{ACM-Reference-Format}
	\bibliography{direct-access}

\end{document}